\definecolor{DarkRed}{rgb}{0.5,0.1,0.1}
\definecolor{DarkBlue}{rgb}{0.1,0.1,0.5}
\newtheorem{theorem}{Theorem}
\newtheorem{lemma}{Lemma}
\newtheorem{prop}{Proposition}
\newtheorem{cor}{Corollary}
\theoremstyle{definition}
\newtheorem{rem}{Remark}
\newtheorem{definition}{Definition}
\newtheorem*{claim*}{Claim}
\newtheorem*{proposition*}{Proposition}
\newtheorem*{lemma*}{Lemma}
\newtheorem*{problem*}{Problem}
\newtheorem{mdresult}[theorem]{Theorem}
\newtheorem{mdinvariant}{Invariant}
\newcommand{\Blasiok}{B\l{}asiok}
\newcommand{\rb}[2]{\raisebox{#1 mm}[0mm][0mm]{#2}}
\newcommand{\istrut}[2][0]{\rule[- #1 mm]{0mm}{#1 mm}\rule{0mm}{#2 mm}}
\newcommand{\ignore}[1]{}
\newcommand{\Var}{\operatorname{Var}}
\newcommand{\E}{\mathbb{E}}
\newcommand{\ceil}[1]{\left\lceil{#1}\right\rceil}
\newcommand{\floor}[1]{\left\lfloor{#1} \right\rfloor}
\newcommand{\bydef}{\stackrel{\mathrm{def}}{=}}
\newcommand{\poly}{\operatorname{poly}}
\newcommand{\Geometric}{\text{Geometric}}
\newcommand{\Poisson}{\text{Poisson}}
\newcommand{\Normal}{\mathcal{N}}
\newcommand{\paren}[1]{\mathopen{}\left(#1\right)\mathclose{}}
\newcommand{\fish}{\textsf{FiSh}}
\newcommand{\fishmonger}{\textsf{Fishmonger}}
\newcommand{\Recordinality}{\textsf{Recordinality}}
\newcommand{\MVP}{\textsf{MVP}}
\newcommand{\Martingale}{\textsf{Martingale}}
\newcommand{\Dartboard}{\textsf{Dartboard}}
\newcommand{\DartboardModel}{\textsf{Dartboard Model}}
\newcommand{\Indicator}[1]{\left\llbracket\istrut[1]{3.5} #1 \right\rrbracket}
\newcommand{\LL}{\textsf{LL}}
\newcommand{\Hyper}{\textsf{Hyper}}
\newcommand{\HyperLogLog}{\textsf{HyperLogLog}}
\newcommand{\HyperBitBit}{\textsf{HyperBitBit}}
\newcommand{\HyperTwoBits}{\textsf{HyperTwoBits}}
\newcommand{\LogLog}{\textsf{LogLog}}
\newcommand{\qLL}{q\text{-}\LL}
\newcommand{\PCSA}{\textsf{PCSA}}
\newcommand{\qPCSA}{q\text{-}\PCSA}
\newcommand{\MinCount}{\textsf{MinCount}}
\newcommand{\Min}[1]{(#1)\text{-}\textsf{Min}}
\newcommand{\MIN}{\textsf{Min}}
\newcommand{\Bottom}{\textsf{Bottom}}
\newcommand{\AdaptiveSampling}{\textsf{AdaptiveSampling}}
\newcommand{\SBitmap}{\textsf{S-Bitmap}}
\newcommand{\GRA}{\textsf{GRA}}
\newcommand{\CPC}{\textsf{CPC}}
\DeclareMathOperator*{\argmax}{arg\,max}
\title{Information Theoretic Limits of Cardinality Estimation:\\
Fisher Meets Shannon\thanks{This work was supported by NSF grants CCF-1637546, CCF-1815316, CCF-2221980, CCF-2446604.  An extended abstract of this work was presented at the 53rd Annual ACM Symposium on Theory of Computing (STOC).}}
\author{Seth Pettie\\ University of Michigan \\ \footnotesize
\texttt{pettie@umich.edu} \and
Dingyu Wang\\ University of Michigan\\
\footnotesize \texttt{wangdy@umich.edu}
}
\date{}
\begin{document}
\maketitle
\begin{abstract}

Estimating the \emph{cardinality} (number of distinct elements)
of a large multiset is a classic problem in streaming and sketching, 
dating back to Flajolet and Martin's 
classic \textsf{Probabilistic Counting} ($\PCSA$) algorithm from 1983.

\medskip

In this paper we study the intrinsic tradeoff between the 
\emph{space complexity}
of the sketch and its \emph{estimation error} 
in the \textsc{random oracle} model.  
We define a new measure of
efficiency for data sketches 
called the \emph{Fisher}-\emph{Shannon} 
($\fish$) number $\mathcal{H}/\mathcal{I}$.
It captures the tension between the limiting Shannon entropy ($\mathcal{H}$)
of the sketch and its normalized Fisher information ($\mathcal{I}$) that characterizes the variance of a statistically efficient, asymptotically unbiased estimator.  
Our aim in introducing the $\fish$-number is to build the mathematical machinery necessary to argue for \emph{precise} optimality, rather than \emph{asymptotic} 
optimality, 
up to large constant factors.

\medskip

Our results are as follows.

\begin{itemize}
\item 
We prove that all base-$q$ variants of Flajolet and Martin's $\PCSA$
sketch have $\fish$-number $H_0/I_0 \approx 1.98016$ 
and that every base-$q$ variant of (\textsf{Hyper})\textsf{LogLog}
has $\fish$-number worse than $H_0/I_0$, but that they tend to
$H_0/I_0$ in the limit as $q\rightarrow \infty$.
Here $H_0,I_0$ are precisely defined constants.
This result \emph{reverses} the common 
conception that \HyperLogLog{} was a strict improvement over \PCSA.

\medskip
\item We describe a sketch called \fishmonger{} 
that is based on a \emph{smoothed}, entropy-compressed 
variant of $\PCSA$ with a different estimator function.  
It is proved that with high probability, 
\fishmonger{} processes a multiset of $[U]$ 
such that \emph{at all times}, 
its space is 
$(1+o(1))(H_0/I_0)m \approx 1.98m$ bits
and its standard error is $1/\sqrt{m}$, when $m\gg (\log\log U)^2$.
For example, to achieve a 1\% standard error, 
one needs a little more than 19,800 bits,
or $\approx 2.42$ kilobytes.

\medskip
\item Finally, we give circumstantial evidence that $H_0/I_0$ is the optimum $\fish$-number of mergeable sketches for Cardinality Estimation.  
We define a natural subset of mergeable sketches called \emph{linearizable} sketches and 
prove that no member of this class can beat $H_0/I_0$.  
The popular mergeable sketches are, in fact, 
also linearizable.  
This result supports the hypothesis that the space complexity of \fishmonger{} is not merely asymptotically optimal, 
but optimal up to an additive $o(m)$ bits.
\end{itemize}
\end{abstract}

\thispagestyle{empty}
\setcounter{page}{0}

\newpage

\setcounter{section}{-1}

\section{Preamble and Apologia}\label{sect:preable}

One way to read this paper is as a work of 20th century speculative mathematical fiction, which begins with the following natural premise. What if the research field of \emph{streaming and data sketching} 
were \emph{not} developed in the 1980s and further in the 90s 
by Flajolet \& Martin~\cite{FlajoletM85},
Alon, Matias, and Szegedy~\cite{AlonMS99}, 
Cohen~\cite{Cohen97}, and Gibbons et al.~\cite{GibbonsM99,GibbonsMP02,AlonGMS02,AcharyaGPR99,AcharyaGPR99a},
but instead were developed in the 1950s \emph{using the 
mathematical tools available at the time}.  
Even if the great Claude Shannon and Ronald Fisher managed
to collaborate on such an anachronistic project, 
\emph{surely} they would produce 
a wholly inadequate theory, being deprived of all the modern notions
of computer science such as computational complexity theory, 
communication complexity, hash functions and $k$-wise independence, 
data structures, liberal use of asymptotic notation, and so on.

The main take-away message of this work is that a fortuitous and productive Fisher--Shannon collaboration would
not just have been \emph{different} than the modern trajectory of streaming and 
sketching, but would have been more useful and enlightening, in some measurable ways.

\section{Introduction}\label{sect:introduction}


\emph{Cardinality Estimation} (aka \emph{Distinct Elements} or \emph{$F_0$-estimation})
is a fundamental problem in streaming/sketching, with widespread industrial deployments
in databases, networking, and sensing.
Sketches for Cardinality Estimation are evaluated along three axes:
\emph{\textbf{space complexity}} (in bits), \emph{\textbf{estimation error}}, 
and \emph{\textbf{algorithmic complexity}} of the update function.
In the end we want a perfect understanding of the \emph{three-way} tradeoff 
between these measures, but that is only possible if we truly understand
the \emph{two-way} tradeoff between space complexity and estimation error,
which is information-theoretic in nature.  In this paper we investigate
this two-way tradeoff in the \textsc{random oracle} model.  

Prior work in Cardinality Estimation has assumed either the 
\textsc{random oracle} model,
in which we have query access to a uniformly random hash function,
or what we call the
\textsc{standard model},
in which unbiased random bits can be generated, 
but all hash functions are stored explicitly.
Sketches in the \textsc{random oracle} model
typically pay close attention to constant factors in both space and 
estimation error~\cite{FlajoletM85,Flajolet90,DurandF03,Giroire09,ChassaingG06,FlajoletFGM07,Lumbroso10,EstanVF06,BeyerGHRS09,ChenCSN11,CliffordC12,Cohen15,Ting14,HelmiLMV12,Sedgewick,PettieWY21,LukasiewiczU22}.  
Sketches in the \textsc{standard model}~\cite{AlonMS99,GibbonsT01,Bar-YossefKS02,Bar-YossefJKST02,KaneNW10,IndykW03,Blasiok20}
use \emph{explicit} (e.g., $O(1)$-wise independent) hash functions and generally
pay less attention to the leading constants in space and estimation error.  
Sketches in the \textsc{random oracle} model have had a bigger impact
on the \emph{practice} of Cardinality 
Estimation~\cite{HeuleNH13,DataSketches,Sedgewick}.
Their empirical performance with imperfect hash functions
agrees with
theoretical predictions under the \textsc{random oracle model}~\cite{FlajoletM85,FlajoletFGM07,HeuleNH13,DataSketches}.

\medskip

\paragraph{Random Oracle Model.}
It is assumed that we have oracle access to a uniformly random function 
$h\colon [U]\rightarrow \{0,1\}^\infty$, where $[U]$ is the universe of our multisets
and the range is interpreted as a point in $[0,1]$.
(To put prior work on similar footing we assume in Table~\ref{table:history} 
that any stored hash values 
are stored to $\log U$ bits of precision.)
For practical purposes, elements in $[U]$ and $[0,1]$ can be regarded as 64-bit integers/floats.

\paragraph{Problem Definition.} A sequence $\mathcal{A} = (a_1,\ldots,a_N)\in [U]^N$ over 
some universe $[U]$ is revealed one element at a time. 
We maintain a $b$-bit sketch 
$S\in \{0,1\}^b$ such that if 
$S_i$ is its state after 
seeing $(a_1,\ldots a_i)$, 
$S_{i+1}=f(S_i,h(a_{i+1}))$ 
for some 
\emph{transition function} $f$.
The goal is to be able to estimate 
the \emph{cardinality} 
$\lambda = |\{a_1,\ldots,a_N\}|$ of the set.
Define $\hat{\lambda}(S) \colon \{0,1\}^b \rightarrow \mathbb{R}$ to be 
the estimation function.
An estimator is \emph{$(\epsilon,\delta)$-approximate} if 
$\Pr\left(\hat{\lambda} \not\in [(1-\epsilon)\lambda,(1+\epsilon)\lambda]\right) < \delta$.
Most results in the \textsc{random oracle} model use estimators that are almost unbiased or 
asymptotically unbiased,
as $b\rightarrow \infty$.
Given that this holds it is natural to measure the 
distribution of $\hat{\lambda}$ relative to $\lambda$.
We pay particular attention to the \emph{relative} 
variance $\frac{1}{\lambda^2}\Var(\hat{\lambda} \mid \lambda)$
and the \emph{relative} standard deviation 
$\frac{1}{\lambda}\sqrt{\Var(\hat{\lambda} \mid \lambda)}$, also
called the \emph{standard error}.

\begin{rem}
Table~\ref{table:history} summarizes prior work.  To compare \textsc{random oracle}
and \textsc{standard model} algorithms, note that 
an asymptotically unbiased $\tilde{O}(m)$-bit sketch with standard error $O(1/\sqrt{m})$ is 
morally similar to an $\tilde{O}(\epsilon^{-2})$-bit sketch 
with $(\epsilon,\delta)$-approximation guarantee, $\delta=O(1)$.
However, the two guarantees are formally incomparable.  
The $(\epsilon,\delta)$-guarantee
does not specifically claim anything about bias or variance, 
and with probability $\delta$ the error is technically not bounded.
\end{rem}

\subsection{Mergeability}

It is desirable in many applications that data sketches be \emph{mergeable} (aka \emph{composable}), 
meaning that the sketch $S(A\cup B)$ of $A\cup B$ is a function of $S(A)$ and $S(B)$.  
In the cardinality estimation problem, 
one may conclude that a sketch is mergeable by checking if the transition function $f$ is \emph{commutative} and \emph{idempotent}, 
i.e.,
\begin{align*}
\textbf{Idempotency.} \qquad f(f(S,h(a)),h(a)) &= f(S,h(a)).\\
\textbf{Commutativity.} \qquad f(f(S,h(a)),h(a')) &= f(f(S,h(a')),h(a)).
\end{align*}
If $f$ satisfies these properties then 
$S_i$ is a function of $\{h(a_1),\ldots,h(a_i)\}$,
that is, it is not influenced by the order or multiplicity of the input elements.  
As a consequence, we can characterize the 
state $S_i$ as a partition of the set of 
hash values into $V_0(S_i) \cup V_1(S_i)$, 
where $V_0$ is the set of hash values that
would cause no state change, and $V_1$ is the 
set of hash values that would cause a state change.
(At the very least, $V_0 \supseteq \{h(a_1),\ldots,h(a_i)\}$.)
This observation lead Ting~\cite{Ting14} to 
visualize cardinality sketches in a unified manner
as what he called 
\emph{area-cutting processes}, and what we 
formalize as the 
\emph{\DartboardModel}.

\subsection{The \DartboardModel}

The \emph{\Dartboard} is the unit square $[0,1]^2$, 
which is partitioned into a set $\mathcal{C}$ of
\emph{cells}.  When $a_i$ is inserted a \emph{dart}
is thrown, and lands at position $h(a_i) \in [0,1]^2$, \emph{hitting} the corresponding cell.
The \emph{state} of a \Dartboard{} sketch consists
of a partition of its cells into 
\emph{occupied} and \emph{free}, obeying the following axioms:

\begin{description}
    \item[Axiom 1.] Every cell hit by a dart is occupied, but 
    occupied cells may contain no darts.
    \item[Axiom 2.] If a dart hits an occupied cell, the state does not change.
\end{description}

Axiom 1 implies that if a dart hits a free cell, the state \emph{must} change, and Axioms 1 and 2 imply another property, 
which we state explicitly as an ``axiom.''
\begin{description}
\item[Axiom 3.] A cell, once occupied, may not subsequently become free.
\end{description}
The \emph{state space} of the sketch 
is a subset of $2^{\mathcal{C}}$,
reflecting the subsets of $\mathcal{C}$ that can be simultaneously occupied.
Let us illustrate how several canonical sketches can be expressed in a uniform way as \Dartboard{} sketches.

Flajolet and Martin's \PCSA{} sketch\footnote{(Probabilistic Counting with Stochastic Averaging)}~\cite{FlajoletM85} 
maps each element $a\in [U]$ to 
$h(a)=(i,j)\in [m]\times \mathbb{Z}^+$
with probability $2^{-j}/m$; the state of the sketch consists of $\{h(a_1),\ldots,h(a_N)\}$, 
i.e., all distinct hash values encountered.
This corresponds to partitioning the \Dartboard{} into $m$ equal width columns, and each column into 
geometrically decaying cells of height 
$1/2, 1/4, 1/8,$ and so on.  The set of occupied cells (the state of \PCSA) is precisely the set of cells hit by a dart.
Figure~\ref{fig:dartboard-PCSA-HLL}(a) illustrates the \PCSA-partition of the \Dartboard{} into $m=16$ columns and a 
plausible arrangement
of $\lambda=32$ darts, while Figure~\ref{fig:dartboard-PCSA-HLL}(b) is the state of the \PCSA{} sketch, with occupied cells in blue.
For practical purposes we can truncate the \PCSA{} partition at $\log U$ cell-rows, so it requires $m\log U$ bits to \emph{explicitly} store the state of the \PCSA{} sketch.

\begin{figure}
\centering
    \begin{tabular}{c@{\hspace{.9cm}}c@{\hspace{.9cm}}c}
        \includegraphics[scale=.15]{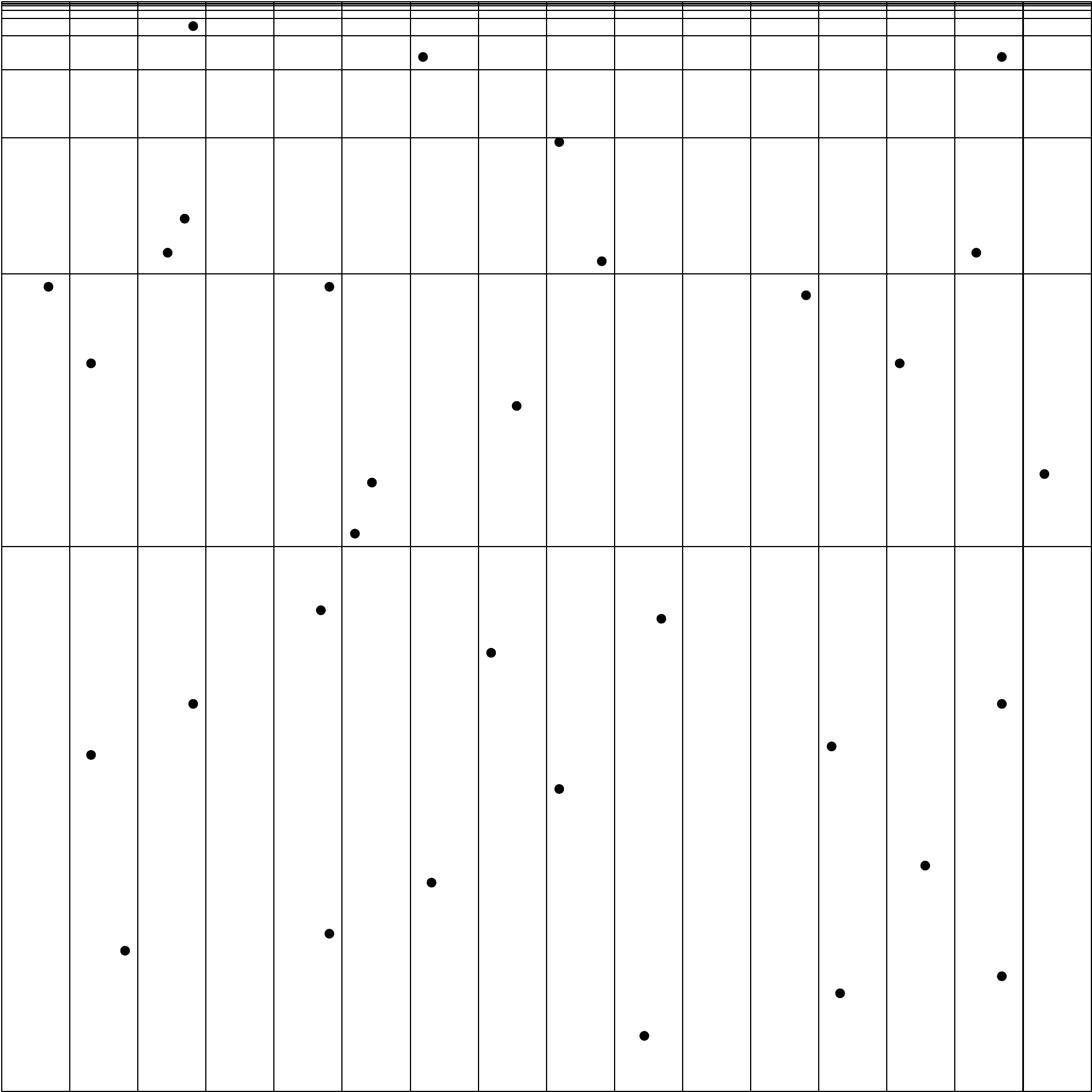}
    &   \includegraphics[scale=.15]{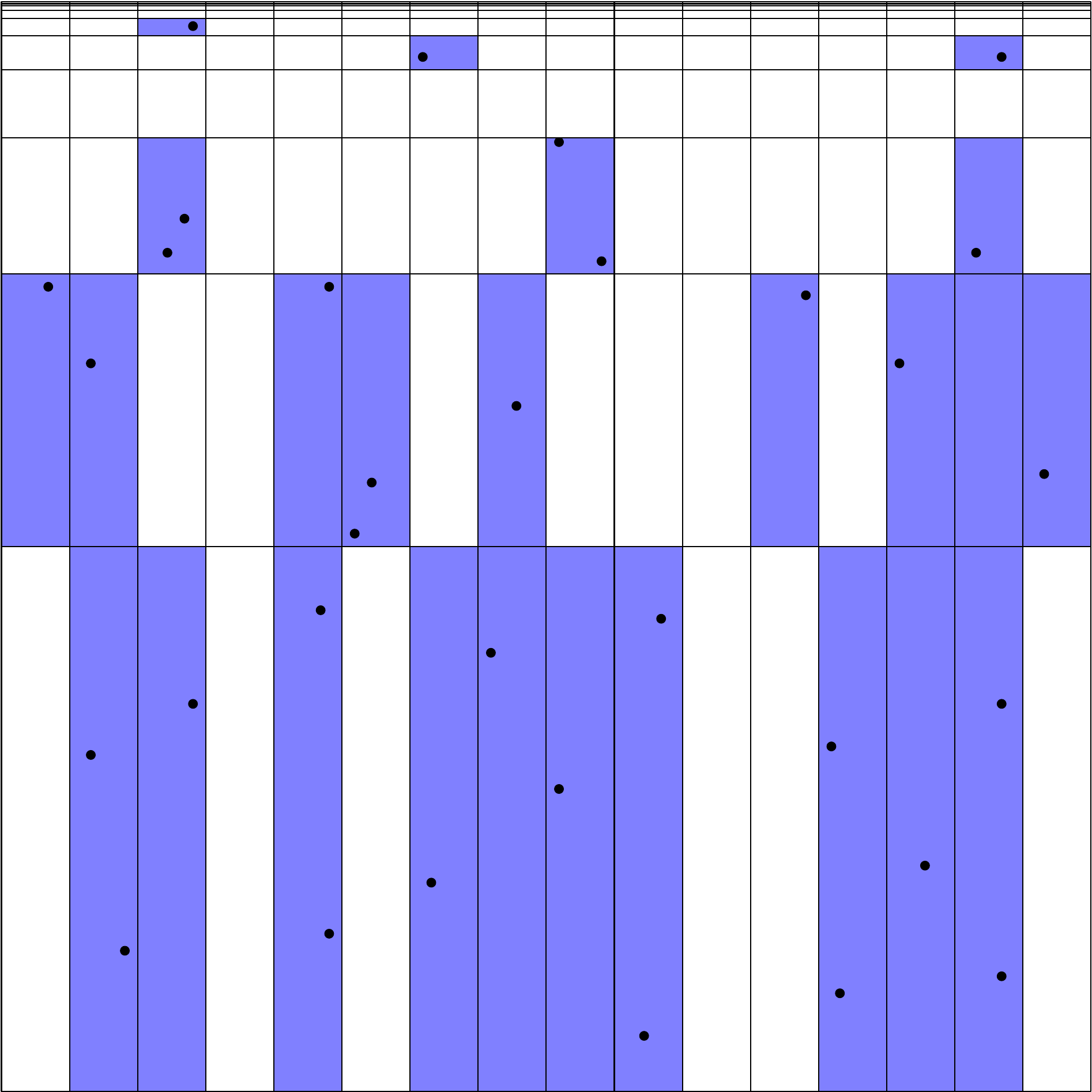}
    &   \includegraphics[scale=.15]{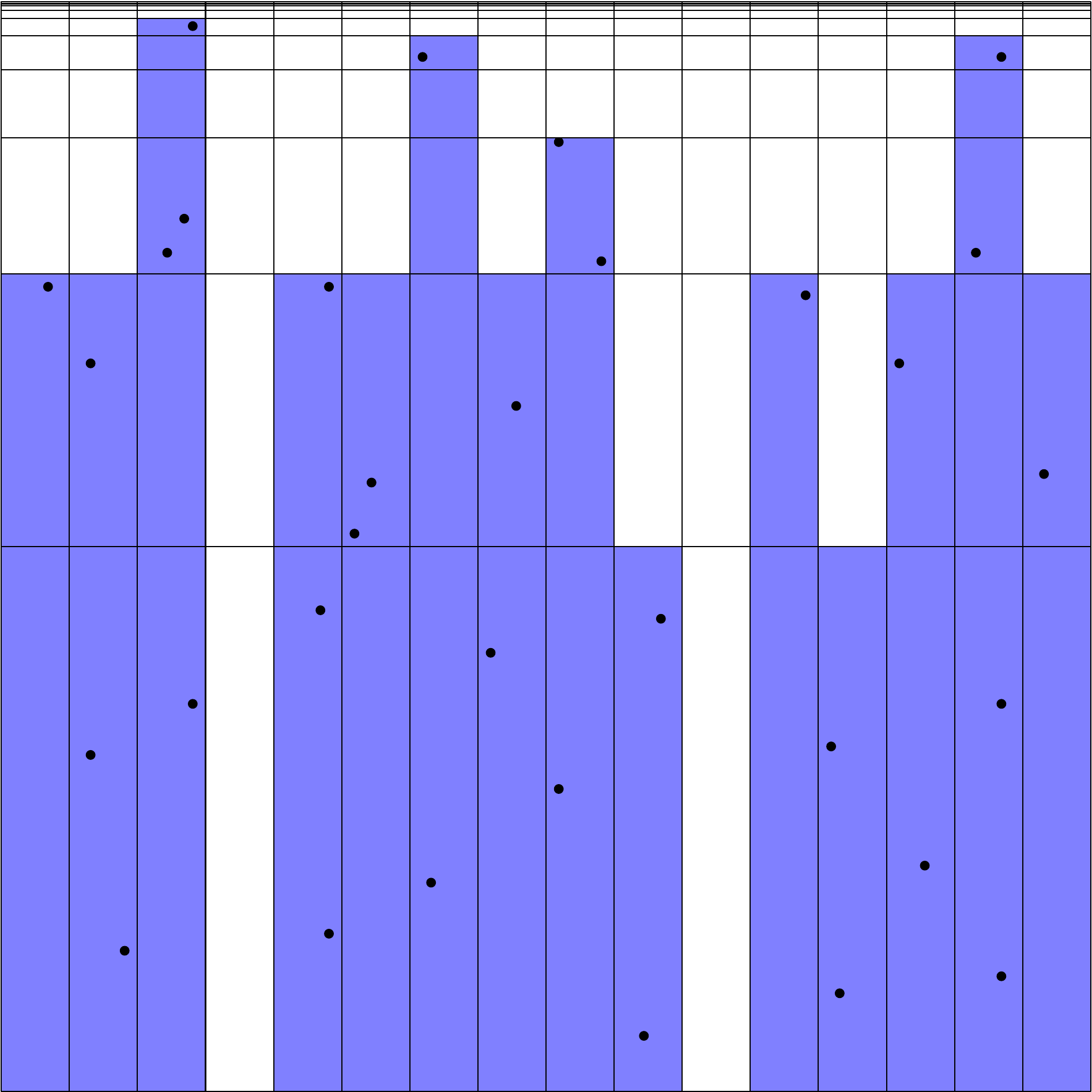}\\
    \textbf{(a)} &  \textbf{(b)} &  \textbf{(c)} 
    \end{tabular}
    \caption{\label{fig:dartboard-PCSA-HLL}\textbf{(a)} The \PCSA-partition of the \Dartboard{} into $m=16$ columns, with 32 darts.  \textbf{(b)} The state of the \PCSA{} sketch; occupied cells are blue.  \textbf{(c)} The state of the (\textsf{Hyper})\LogLog{} sketch, which uses the same partition.  Every cell hit by a dart or below one hit by a dart is occupied.}
\end{figure}

The popular \LogLog~\cite{DurandF03,Durand04} and \HyperLogLog~\cite{FlajoletFGM07} sketches use the same cell partition as \PCSA, but deem any cell occupied if it is hit by a dart, or below one hit by a dart in the same column; see Figure~\ref{fig:dartboard-PCSA-HLL}(c).  As a consequence, an explicit
representation of the (\textsf{Hyper})\LogLog{} state requires only $m\log\log U$ bits.

One may regard a \emph{cell} as the set of all hash values 
that, if encountered,
would have the same action on the sketch.
Some sketches do not admit a non-trivial cell partition.  
For example, Cohen's \Bottom-$k$ sketch~\cite{Cohen97}
stores the smallest $k$ hash values encountered, which
can be expressed as a \Dartboard{} sketch with one cell per hash value.  Figure~\ref{fig:bottom-k-dartboard}(a) depicts 
a discrete cell partition with $\lambda=14$ darts, and Figure~\ref{fig:bottom-k-dartboard}(b) shows the state of the \Bottom-4 sketch.
    
\begin{figure}
    \centering
    \begin{tabular}{c@{\hspace{.6cm}}c}
    \includegraphics[scale=.25]{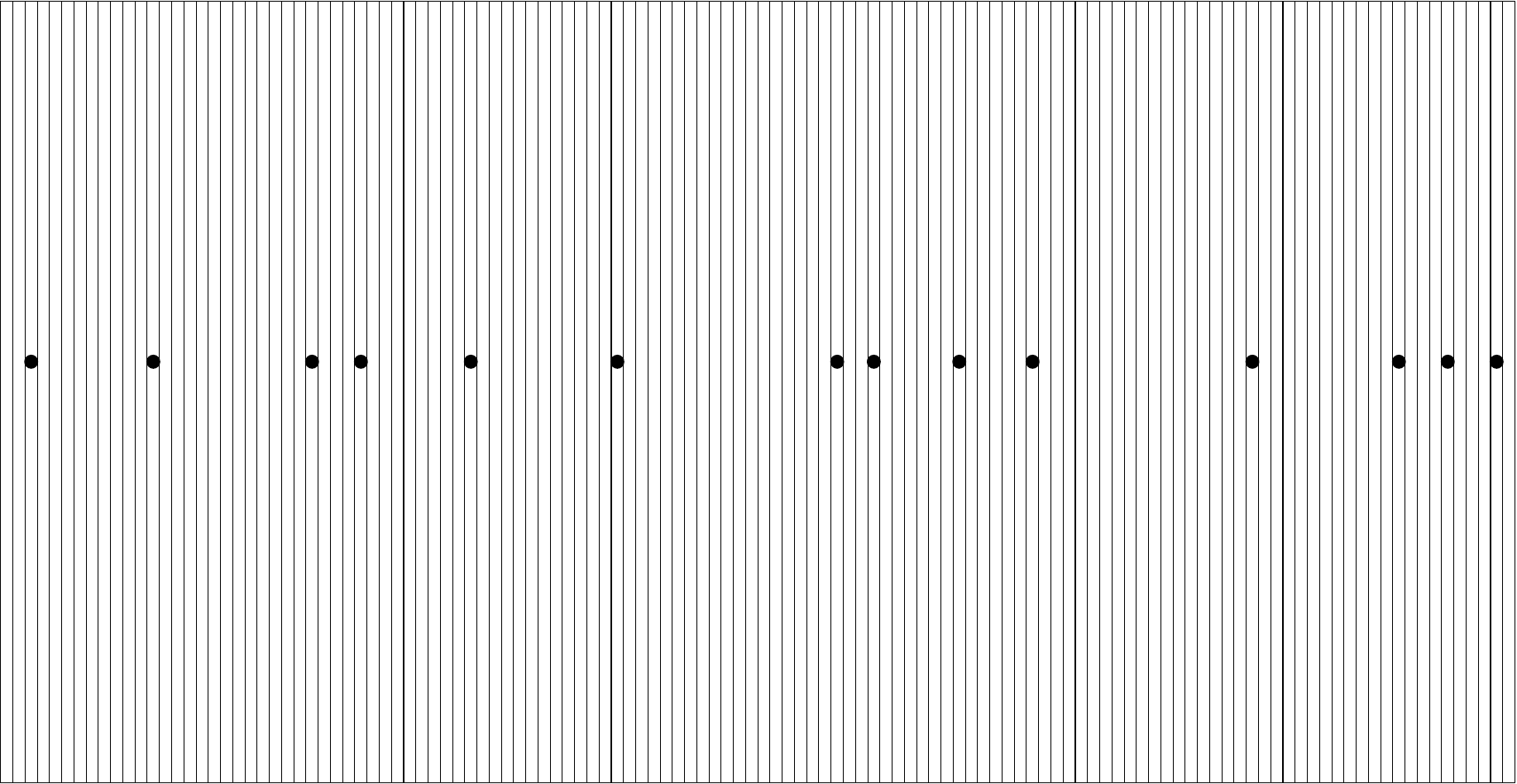}
&   \includegraphics[scale=.25]{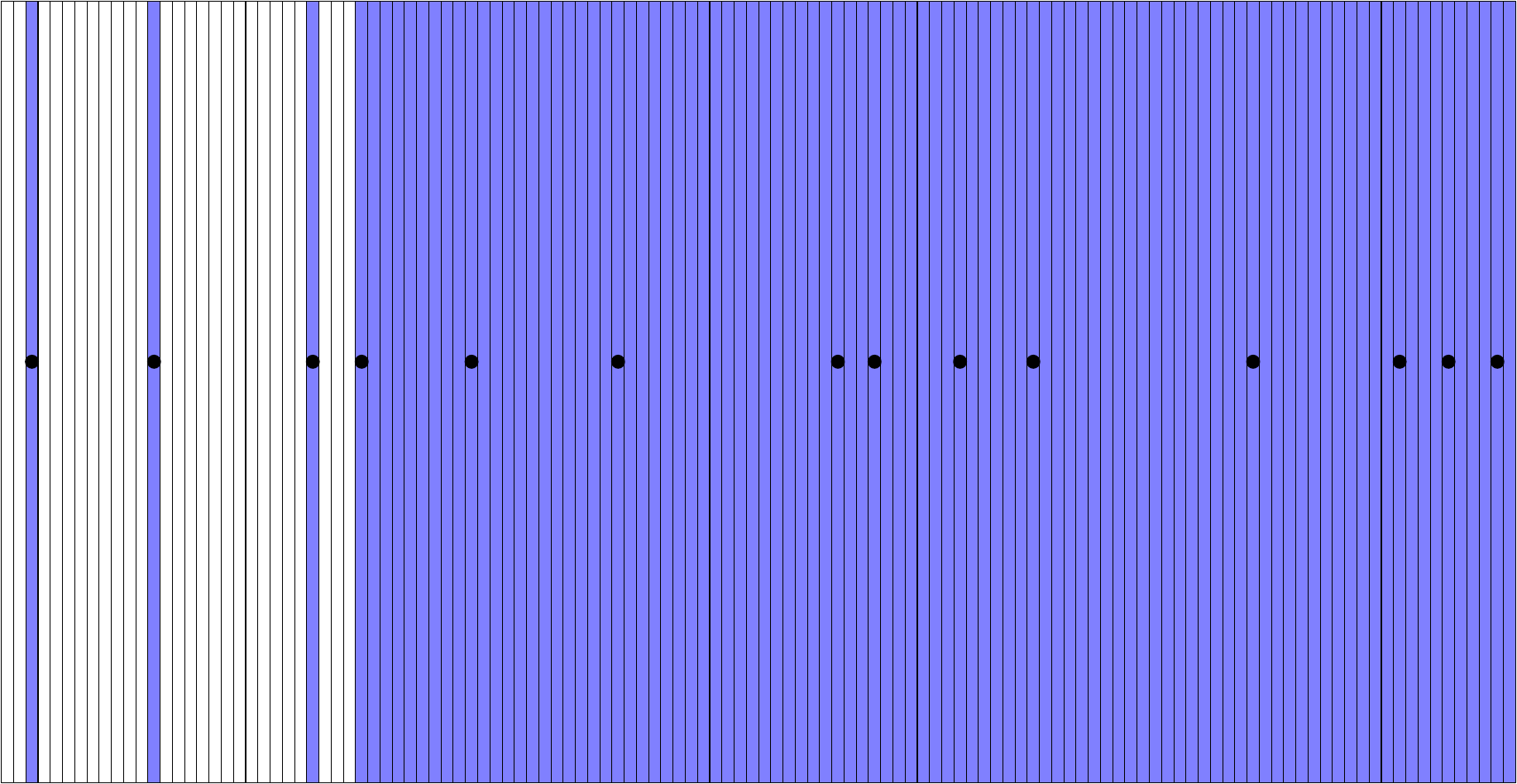}\\
    \textbf{(a)} & \textbf{(b)}
    \end{tabular}
    \caption{\textbf{(a)} The \Dartboard{} is partitioned into individual hash values. \textbf{(b)} A \Bottom-4 sketch keeps the smallest 4 hash values, hence all occupied (blue) cells have no effect on the sketch.}
    \label{fig:bottom-k-dartboard}
\end{figure}

The \Dartboard{} is only used to visualize the state and behavior of the sketch.  The problem of computing an estimate $\hat{\lambda}(S)$ of the true cardinality from the state $S$ is a separate, and challenging problem.  In the original \PCSA{} sketch, Flajolet and Martin~\cite{FlajoletM85} do not pay attention to the entire state $S$ but the vector $(z(1),\ldots,z(m))$, where $z(j)$ is the index of the largest free cell in column $j$.  Then
\[
\hat{\lambda}(S) \propto m\cdot 2^{m^{-1}\sum_{j=1}^m z(j)}.
\]
The rationale is that $2^{z(j)}$ is a reasonable 
estimate for the number of elements in column $j$, 
which is $\lambda/m$ in expectation, 
and $2^{m^{-1}\sum_{j=1}^m z(j)}$ represents the \emph{geometric mean} of the individual column estimates.
Flajolet and Martin~\cite{FlajoletM85} proved that this estimator achieves a standard error of $0.78/\sqrt{m}$.
A simpler \emph{coupon collector} estimator for \PCSA{} 
was proposed later by Lang~\cite{Lang17}, which only considers the
total number of occupied cells.  Lang's informal analysis 
showed it has standard error $.7/\sqrt{m}$. 
Wang and Pettie~\cite{WangP23} introduced a class of \emph{generalized remaining area} estimators 
called $\tau$-\GRA, and proved 
that for $\tau\approx 0.34$, the $\tau$-\GRA{} estimator 
for \PCSA{} has standard error $0.66/\sqrt{m}$.\footnote{Lang's estimator corresponds to $0$-\GRA.}

In (\Hyper)\LogLog{} every estimator naturally pays attention to
the vector $(y(1),\ldots,y(m))$, where $y(j)$ is the index of the maximum occupied cell in column $j$.  
Following~\cite{FlajoletM85}, 
Durand and Flajolet's \LogLog{} estimator takes the geometric mean of the individual column estimates:
\[
\hat{\lambda}(S) \propto m\cdot 2^{m^{-1}\sum_{j=1}^m y(j)},
\]
which they proved has standard error $1.3/\sqrt{m}$.
This estimator is heavily influenced by a few outliers (large $y$-values), and Durand and Flajolet found empirically that the standard error improves if the largest 30\% of the $y$-vector is simply discarded.  The \HyperLogLog{} estimator of Flajolet, Fusy, Gandouet, and Meunier~\cite{FlajoletFGM07} takes a more principled approach to weighting large $y$-values appropriately,
by taking the \emph{harmonic mean} of the individual column estimates rather than the geometric mean:
\[
\hat{\lambda}(S) \propto m^2 \left(\sum_{j=1}^m 2^{-y(j)}\right)^{-1}.
\]
Flajolet et al.~\cite{FlajoletFGM07} proved the 
standard error is $1.04/\sqrt{m}$.
Wang and Pettie~\cite{WangP23} noted that
the Durand-Flajolet~\cite{DurandF03} and Flajolet et al.~\cite{FlajoletFGM07} estimators corresponded to $0$-\GRA{} 
and $1$-\GRA, respectively, and that by choosing an appropriate fractional $\tau\approx 0.89$, the $\tau$-\GRA{} 
estimator
for the (\Hyper)\LogLog{} sketch has a slightly improved standard error of $1.036/\sqrt{m}$.

See Cohen~\cite{Cohen97}, Giroire~\cite{Giroire09}, 
and Chassaing and Gerin~\cite{ChassaingG06} for cardinality estimators in the \Bottom-$k$ family of sketches.

\begin{table}
    \centering
\scalebox{.76}{
    \begin{tabular}{|l|l|l|}
\multicolumn{1}{l}{{\bf\Large\sc Random Oracle Model}}\\
\multicolumn{1}{l}{\large\sc Mergeable Sketches}
& \multicolumn{1}{l}{\large\sc Sketch Size (Bits)}
& \multicolumn{1}{l}{\large\sc Approximation Guarantee}\\\hline
Flajolet \& Martin \hfill {\small (\textsf{PCSA})} 1983 & $m\log U$   & Std.~err.~$0.78/\sqrt{m}$\\\hline
Flajolet \hfill {\small (\textsf{AdaptiveSampling})} 1990 & $m\log U + \log\log U$   & Std.~err.~$1.21/\sqrt{m}$\\\hline
Durand \& Flajolet \hfill {\small (\textsf{LogLog})} 2003 & $m\log\log U$           & Std.~err.~$1.3/\sqrt{m}$\\\hline
Giroire \hfill {\small (\textsf{MinCount})}  2005         & $m\log U$             & Std.~err.~$1/\sqrt{m}$\\\hline
Chassaing \& Gerin \hfill {\small (\textsf{MinCount})}  2006 & $m\log U$     & Std.~err.~$1/\sqrt{m}$\\\hline
Estan, \hfill \ \ \ \ \ \rb{-2.0}{{\small (\textsf{Multires.Bitmap})} 2006} & \rb{-2.0}{$m\log U$}  & \rb{-2.0}{Std.~err.~$O(1/\sqrt{m})$}\\
Varghase \& Fisk                             & & \\\hline
Beyer, Haas, Reinwald \hfill \rb{-2.5}{2007}   & \rb{-2.5}{$m\log U$}     & \rb{-2.5}{Std.~err. $1/\sqrt{m}$}\\
Sismanis \& Gemulla         &&\\\hline
Flajolet, Fusy,  \hfill \ \ \ \ \ \ \ \ \ \ \  \rb{-2.0}{{\small (\textsf{HyperLogLog})}}
\rb{-2}{2007} & \rb{-2}{$m\log\log U$}    & \rb{-2}{Std.~err.~$1.04/\sqrt{m}$}\\
Gandouet \& Meunier    & &\\\hline
Lumbroso                    \hfill 2010 & $m\log U$                     & Std.~err.~$1/\sqrt{m}$\\\hline
Lang \hfill {\small (\textsf{Compressed FM85})} 2017 & $\approx \log U + 2m$ {\small (experimental)} & Std.~err.~$\approx 1/\sqrt{m}$ {\small (experimental)}\\\hline
                    & $(1+o(1))(H_0/I_0)m$ &\\
{\bf new} \hfill {\small (\textsf{Fishmonger})} \phantom{2020} 
    & \hspace*{0.5cm}where  $m=\omega(\log^2\log U),$ & Std. err. $1/\sqrt{m}$\\
    & \hspace*{0.5cm}and  $H_0/I_0 \approx 1.98016$ & \\\hline\hline
\multicolumn{3}{l}{}\\
\multicolumn{1}{l}{\large\sc Non-mergeable Sketches}\\\hline 
Chen, Cao, Shepp\hfill \rb{-2.5}{{\small (\textsf{S-Bitmap})} 2009}  & \rb{-2.5}{$m$}                           & \rb{-2.5}{Std.~err.~$\approx \frac{\ln(eU/m)/2}{\sqrt{m}}$} \\
\& Nguyen            &                            &\\\hline
Helmi, Lumbroso,  \hfill \rb{-2.5}{{\small (\textsf{Recordinality})} 2012} & \rb{-2.5}{$(1+o(1))m\log U$}  & \rb{-2.5}{Std.~err.~$\tilde{O}(1)/\sqrt{m}$}\\
Mart\'{i}nez \& Viola               &                                     &\\\hline
Cohen \hfill {\small (\textsf{Martingale LogLog})} \rb{-2.5}{2014}  & $m\log\log U + \log U$      & Std.~err.~$0.833/\sqrt{m}$\\
Ting \hfill {\small (\textsf{Martingale Bottom}-$m$)} \phantom{2014} & $(m+1)\log U$                  & Std.~err.~$0.71/\sqrt{m}$\\\hline
Janson, Lumbroso, \hfill \rb{-2}{\small (\textsf{HyperTwoBits}) 2025} & \rb{-2}{$2m+\log U$} \hfill \rb{-2}{$(m\leq 4096)$}                      & \rb{-2}{Std.~err.~$1.46/\sqrt{m}$}\hfill \rb{-2}{$(m\leq 4096)$}\\
and Sedgewick &&\\\hline\hline
\multicolumn{3}{c}{\ }\\
\multicolumn{1}{l}{{\Large\sc Standard Model}}\\\hline
Alon, Matias \& Szegedy \hfill 1996     & $O(\log U)$                    & $(\epsilon, 2/\epsilon)$-approx., $\epsilon\ge 2$\\\hline
Gibbons \& Tirthapura    \hfill 2001   & $O(\epsilon^{-2}\log U\log\delta^{-1})$ & $(\epsilon,\delta)$-approx.\\\hline
Bar-Yossef, Kumar \& Sivakumar \hfill 2002 & $O(\epsilon^{-3}\log U\log\delta^{-1})$        & $(\epsilon,\delta)$-approx.\\\hline
Bar-Yossef, Jayram, Kumar,\hfill \rb{-2.5}{2002}  & \rb{-2.5}{$O\left(\left[\epsilon^{-2}\log\log U +\log U\right]\log\delta^{-1}\right)$}        & \rb{-2.5}{$(\epsilon,\delta)$-approx.}\\
Sivakumar \& Trevisan                   &               & \\\hline
Kane, Nelson \& Woodruff \hfill 2015    & $O([\epsilon^{-2} + \log U]\log\delta^{-1})$           & $(\epsilon,\delta)$-approx.\\\hline
\Blasiok  \hfill 2018  & $O(\epsilon^{-2}\log\delta^{-1} + \log U)$     &  $(\epsilon,\delta)$-approx.\\\hline\hline
\multicolumn{3}{c}{\ }\\
\multicolumn{1}{l}{{\Large\sc Lower Bounds}}\\\hline
Trivial                     & $\Omega(\log\log U)$      &   $(O(1),O(1))$-approx. \hfill (\textsc{rand.~oracle})\\\hline
Alon, Matias \& Szegedy \hfill 1996 & $\Omega(\log U)$    &   $(O(1),O(1))$-approx. \hfill \ \ (\textsc{std.~model})\\\hline
Indyk \& Woodruff \hfill 2003 & $\Omega(\epsilon^{-2})$ & $(\epsilon,O(1))$-approx. \hfill (Both)\\\hline
Jayram \& Woodruff \hfill 2011 & $\Omega(\epsilon^{-2}\log\delta^{-1})$ & $(\epsilon,\delta)$-approx. \hfill (Both)\\\hline
{\bf new}  & $(H_0/I_0)m$ & Std.~err.~$ 1/\sqrt{m}$\hfill (Linearizable)\\\hline\hline
    \end{tabular}
}
    \caption{\small Algorithms analyzed in the \textsc{random oracle} model assume oracle access to a 
    uniformly random hash function $h : [U]\rightarrow [0,1]$.  Algorithms
    in the \textsc{standard model} can generate uniformly random bits, but must store any hash functions
    explicitly.
    }
    \label{table:history}
\end{table}

\subsection{Survey of the Standard Model}

In the \textsc{Standard Model} one must explicitly account for the space of
every hash function.  Specifically, a $k$-wise independent function
$h:[D]\rightarrow [R]$ requires $\Theta(k\log(DR))$ bits.  Typically
an $\epsilon$-approximation ($\hat{\lambda}\in [(1-\epsilon)\lambda,(1+\epsilon)\lambda]$) 
is guaranteed with constant probability, and then amplified to
$1-\delta$ probability by taking the median of $O(\log\delta^{-1})$ trials.
The following algorithms are all mergeable.

Gibbons and Tirthapura~\cite{GibbonsT01} rediscovered Flajolet's \textsf{AdaptiveSampling}\footnote{Flajolet~\cite{Flajolet90} attributes the sketch design to Wegmen.} \cite{Flajolet90} 
and proved that it achieves
an $(\epsilon,\delta)$-guarantee using an $O(\epsilon^{-2}\log U\log\delta^{-1})$-bit 
sketch and $O(1)$-wise independent hash functions.
\textsf{AdaptiveSampling} is a variant
of \Bottom-$k$ where we 
store \emph{all} hash values encountered in $[0,2^{-\ell})$, where $\ell\in \mathbb{Z}^+$ is incremented whenever we attempt to store strictly more than $k$ 
hash values.
Bar-Yossef et al.~\cite{Bar-YossefJKST02} 
considered versions of \Bottom-$k$, \LogLog, and \textsf{AdaptiveSampling}, the best of which used space 
$O(((\epsilon^{-2}(\log\epsilon^{-1}+\log\log U) + \log U)\log\delta^{-1})$.
Kane, Nelson, and Woodruff~\cite{KaneNW10} designed a sketch based on \LogLog{}
that has size $O((\epsilon^{-2} + \log U)\log\delta^{-1})$, which
is optimal when $\delta^{-1}=O(1)$ 
as it meets the $\Omega(\epsilon^{-2})$
lower bound of Indyk and Woodruff~\cite{IndykW03} 
(see also Brody and Chakrabarti~\cite{BrodyC09})
and the $\Omega(\log U)$ lower bound of 
Alon, Matias, and Szegedy~\cite{AlonMS99}.
Using more sophisticated techniques, 
\Blasiok~\cite{Blasiok20} derived an 
optimal sketch, also based on \LogLog, 
for all $(\epsilon,\delta)$ with space
$O(\epsilon^{-2}\log\delta^{-1}+\log U)$, which meets
the $\Omega(\epsilon^{-2}\log\delta^{-1})$ lower bound of
Jayram and Woodruff~\cite{JayramW13} for 
any $(\epsilon,\delta)$ guarantee.

\subsection{New Results}

Our goal is to understand the 
\emph{intrinsic tradeoff} between 
space and accuracy in Cardinality Estimation.  This question has been 
answered up to a very large constant factor 
in the \textsc{standard model}
with matching upper and lower bounds of 
$\Theta(\epsilon^{-2}\log\delta^{-1}+\log U)$~\cite{KaneNW10,Blasiok20,IndykW03,JayramW13}.
However, in the \textsc{random oracle} model we can aspire to understand this tradeoff
\emph{precisely}, and to identify the \emph{best} sketch design.

To be specific, consider a sketch 
$S=(S(0),\ldots,S(m-1))$
composed of $m$ i.i.d.~\emph{subsketches} 
over a multiset with cardinality $\lambda$.
The space required to store this sketch is clearly linear in $m$, and the variance of the cardinality estimates will scale with $1/m$.  A natural way to measure the efficiency of the sketch design is to analyze its 
\emph{memory-variance product} (\MVP), 
which should have no dependence on $m$, 
as $m\to \infty$.
One immediately sees that the \MVP{} of a sketch 
might not be an intrinsic measure of the sketch design itself, but on precisely how the sketch state is 
encoded and how the 
cardinality estimates are computed 
from the sketch state.

To evaluate sketch \emph{designs} on a level playing field we shall assume their states are encoded optimally and that they use statistically optimal estimators.
To make these notions precise we appeal
to two of the influential notions
of ``information'' defined in the 20th century,
namely 
\emph{Shannon entropy} and \emph{Fisher information}.
Shannon entropy~\cite{CoverT06} controls
the (expected) space in bits needed to encode an object drawn from some distribution 
and \emph{Fisher information}
limits the variance of an asymptotically 
unbiased estimator, via the
Cram\'{e}r-Rao lower bound~\cite{CasellaB02,Vaart98}.

Each subsketch is the outcome of some 
\emph{experiment}. 
We assume 
these experiments are \emph{informative}, in the sense that any two cardinalities
$\lambda_0,\lambda_1$ induce distinct distributions on the sketch state $S$.  Under this condition
and some mild regularity conditions, it is well known~\cite{CasellaB02,Vaart98} that the 
Maximum Likelihood Estimator (MLE):
\[
\hat{\lambda}(S) = \argmax_{\lambda} \Pr(S \mid \lambda)
\]
is asymptotically unbiased and meets the Cram\'{e}r-Rao lower bound:
\[
\lim_{m\rightarrow \infty} 
\sqrt{m}\left(\hat{\lambda}(S)-\lambda\right) 
    \sim 
\Normal\left(0, \frac{1}{I_{S(0)}(\lambda)}\right).
\]
Here $I_{S(0)}(\lambda)$ is the \emph{Fisher information} 
number of $\lambda$ associated with any one component 
of the vector $S$, 
where $I_S(\lambda)=I_{S(0)}(\lambda)+\cdots+I_{S(m-1)}(\lambda) = mI_{S(0)}(\lambda)$.
This implies that as $m$ gets large, 
$\hat{\lambda}(S)$ tends toward a normal 
distribution $\Normal\left(\lambda, \frac{1}{I_S(\lambda)}\right)$
with variance $1/I_S(\lambda) = 1/(m\cdot I_{S(0)}(\lambda))$.
(See Section~\ref{sect:info-theory}.)

Suppose for the moment that $I_{S}(\lambda)$ is scale-free, 
in the sense that we can write
it as $I_{S}(\lambda) = \mathcal{I}(S)/\lambda^2$, 
where $\mathcal{I}(S)$ is some number that
does not depend on $\lambda$.
We can think of $\mathcal{I}(S)$ as measuring the \emph{value}
of experiment $S$ to estimating the parameter $\lambda$,
but it also has a \emph{cost}, namely the space required to store
the outcome of $S$.  By Shannon's source-coding theorem 
we cannot beat $H(S \mid \lambda)$ bits on average, which we 
also assume for the time being is scale-free, and can be written $\mathcal{H}(S)$, 
independent of $\lambda$.
We measure the \emph{efficiency} of an experiment by its
\underline{Fi}sher-\underline{Sh}annon ($\fish$) 
number, defined to be the limiting ratio of its cost to its value, as $m\to \infty$.
\[
\fish(S) = \frac{\mathcal{H}(S)}{\mathcal{I}(S)}.
\]
In particular, this implies that using sketching scheme $S$ to
achieve a standard error of $\sqrt{1/b}$ (variance $1/b$) 
requires $\fish(S)\cdot b$ bits of storage on 
average,\footnote{Set $m$ such that $b = \mathcal{I}(S) = m\cdot \mathcal{I}(S(0))$.  
The expected space required
is $m \cdot \mathcal{H}(S(0)) = b(\mathcal{H}(S(0))/\mathcal{I}(S(0))) 
= b\cdot \fish(S)$.} i.e.,
lower $\fish$-numbers are superior.
The actual definition of $\fish$ (Section~\ref{sect:fish})
is slightly more complex in order to deal with sketches $S$
that are not \emph{strictly} scale-invariant.

\medskip

Our main results are as follows.

\begin{enumerate}
    \item[(1)] Let $\qPCSA$ be the natural base-$q$ analogue of $\PCSA$, which is $2$-$\PCSA$.
    We prove that the $\fish$-number of $\qPCSA$ actually does not depend on $q$ at all, and is precisely:
    \begin{align*}
        \fish(\qPCSA) &= \frac{H_0}{I_0} \approx 1.98016.
    \intertext{where}
            H_0 &= \frac{1}{\ln 2} + \sum_{k=1}^\infty\frac{1}{k}\log_2 \left(1+1/k\right),\\
            I_0 &= \zeta(2) = \frac{\pi^2}{6}.
    \end{align*}
    Let $\qLL$ be the natural base-$q$ analogue of \textsf{LogLog} = $2$-$\LL$.  
    Whereas the Fisher information for $\qPCSA$
    is expressed in terms of the \emph{Riemann zeta} function ($\zeta(2)$), 
    the Fisher information of $\qLL$ is expressed in terms
    of the \emph{Hurwitz zeta} function $\zeta(2,\frac{q}{q-1}) = \sum_{k\ge 0} (k+\frac{q}{q-1})^{-2}$.  
    We prove that $\qLL$ is always worse than $\PCSA$, but approaches
    the efficiency of $\PCSA$ in the limit, i.e.,
    \[
        \forall q.\; \fish(\qLL) > H_0/I_0 \quad \mbox{ but } \lim_{q\to\infty} \fish(\qLL) = H_0/I_0.
    \]
    \item[(2)] The results of (1) should be thought of as \emph{lower bounds} on implementing
    compressed representations of $\qPCSA$ and $\qLL$.  We give a new sketch called $\fishmonger$ 
    based on an entropy-compressed version of \PCSA{} with maximum likelihood estimation.
    It is guaranteed that with probability $1-1/\poly(m\log U)$,
    the space of \fishmonger{} 
    is $(1+o(1))(H_0/I_0)m + O(\sqrt{m\log(m\log U)} + \log^2\log U)$ bits \emph{at all times}
    and its standard error
    is $1/\sqrt{m}$ \emph{at all times}.  When $m=\omega(\log^2\log U)$ this is roughly $1.98m$ bits.
    
    \item[(3)] Is it possible to improve the space-variance tradeoff offered by \fishmonger?  Specifically, is it possible to design a mergeable 
    sketch whose \fish-number is strictly smaller than $H_0/I_0$? 
    This is a difficult problem, 
    as it \emph{is} possible to beat $H_0/I_0$ 
    when one drops the \emph{mergeability} criterion (see Section~\ref{sect:departures-from-the-dartboard-model}) and characterizing the space of all mergeable sketches is non-trivial.  
    We take two steps toward answering this question.  First, we give a simple characterization of mergeable sketches in terms of the state space.  
    Second, we define a natural subclass of mergeable sketches called \emph{linearizable} sketches, and prove that no member of this class has $\fish$-number strictly 
    smaller than $H_0/I_0$.\footnote{A \Dartboard{} sketch is \emph{linearizable} iff there is a permutation of its cells
    so that the occupied/free status of each cell depends only on (1) whether it was hit by a dart, and (2) the occupied/free status of earlier cells in the permutation.}
    All of the popular canonical 
    sketch designs are linearizable, 
    such as \Bottom-$k$, 
    (\Hyper)\LogLog, and \PCSA.
    (Technically \textsf{AdaptiveSampling} is not linearizable.)
    We take this as strong circumstantial evidence that $\fishmonger$ 
    is not merely an efficient sketch, 
    but an \emph{optimal} sketch, 
    up to $o(m)$ bits.
    Indeed, the structure of the proof suggests 
    that the \PCSA{} sketch design is essentially
    the \emph{unique}
    optimum design.
\end{enumerate}

\subsection{Related Work}

After completing this project we learned that many of
the facts proved in this paper were discovered experimentally earlier.

The idea of compressing a sketch state goes all the way back to Flajolet and Martin~\cite{FlajoletM85}, who observed that the \PCSA{} matrix has mostly 1s in the low-order bits and 0s in the high order bits.  They suggested encoding a sliding window of the 8 most relevant bits across the sketch matrix but did not attempt to analyze anything along these lines.
In her Ph.D.~thesis, Durand~\cite[p. 136]{Durand04}
proved that there is a prefix-free code for encoding the $m$ columns of a \LogLog{} sketch with expected length at most $3.01m$.  However, she did not analyze the entropy of \LogLog.

The first work that explicitly foreshadowed ours was Scheuermann and Mauve~\cite{ScheuermannM07}, who demonstrated experimentally that an entropy-compressed version of \PCSA{} (using Flajolet and Martin's original estimator~\cite{FlajoletM85}) is \emph{slightly} 
superior than an entropy-compressed version of \HyperLogLog~\cite{FlajoletFGM07}, 
in terms of their memory-variance products (\MVP).
Lang~\cite{Lang17} took the next step by replacing
the estimators of \PCSA{} and \HyperLogLog{} with
the \emph{minimum description length} estimator, which is essentially equivalent to MLE in this context.
Under optimal compression and asymptotically optimal estimation, Lang~\cite{Lang17} showed experimentally
that the \MVP{} of (compressed + MLE) \PCSA{} is roughly 2, 
and superior to the \MVP{} of (compressed + MLE) \LogLog, 
thus foreshadowing our result (1).
Lang's experiments~\cite{Lang17} on compressed sketching and maximum likelihood estimation had an immediate impact on the \emph{practice} of sketching.  
The Apache DataSketches library~\cite{DataSketches} 
now includes a version of Lang's sketch called \CPC{} (Compressed Probabilistic Counting),
which uses a fast ``off-the-shelf'' compressor in lieu 
of entropy-compression and Lang's coupon-collector estimator~\cite{Lang17} in lieu of MLE.

The idea of applying maximum likelihood estimation to sketches is not new.  
See Chassaing and Gerin~\cite{ChassaingG06} and Clifford and Cosma~\cite{CliffordC12} for MLE estimators for sketches in the \Bottom-$k$ family.  
Ertl~\cite{Ertl17} studied the computational complexity of MLE in \LogLog{} sketches.
Cohen, Katzir, and Yehezkel~\cite{CohenKY17}
looked at MLE estimators for cardinality of set intersections.

\subsection{Inside and Outside the \DartboardModel}\label{sect:departures-from-the-dartboard-model}

So far we have distinguished a number of classes of 
sketches such as 
\textsf{Linearizable} sketches, 
\textsf{Mergeable} sketches, 
and \Dartboard{} 
sketches, which have the following relationship:

\[
\textsf{Linearizable} \subseteq \textsf{Mergeable} \subseteq \textsf{Dartboard} \subseteq \text{All Sketches}
\]

Each of these containments is strict.  
\AdaptiveSampling{} is the only published sketch that is in $\textsf{Mergeable}\cap\overline{\textsf{Linearizable}}$.\footnote{It is easy to manufacture artificial transition functions that are in
$\textsf{Mergeable}\cap \overline{\textsf{Linearizable}}$,
but none of the organically proposed sketches are in this set.}
The \SBitmap{} sketch of Chen, Cao, Shepp, and Nguyen~\cite{ChenCSN11} 
is the only published sketch in
$\Dartboard \cap\overline{\textsf{Mergeable}}$.  
The sketch state $S\in\{0,1\}^m$ is a bitmap, initially 0.
Hash values $h(a) \in [m]\times [0,1]$ are interpreted as a column index and a height in the column.
The transition function is defined by a sequence of real values $1 \geq p_0 > p_1 > \cdots > p_{m-1} > 0$, 
where upon seeing $h(a)=(i,\nu)$, if $S(i)=0$ we set $S(i)\gets 1$ iff $\nu\leq p_{\textsf{weight}(S)}$. Here $\textsf{weight}(S)$ is the number of 1s set in $S$.
The \SBitmap{} can be expressed as a \Dartboard{} sketch partitioned 
into $m$ equally spaced columns and up to $m+1$ rows determined by the $(p_0,\ldots,p_{m-1})$ sequence.  However, one may convince oneself that the transition function is \underline{not} commutative and idempotent: the state of the sketch depends on the \emph{order} in which elements are processed, 
but is not sensitive to duplicates.
Figure~\ref{fig:classification}(b,c) gives an example of 
two states that can be reached by scanning the input in 
different orders.

\begin{figure}
\centering
    \begin{tabular}{c@{\hspace{.5cm}}c@{\hspace{.5cm}}c}
        \includegraphics[scale=.15]{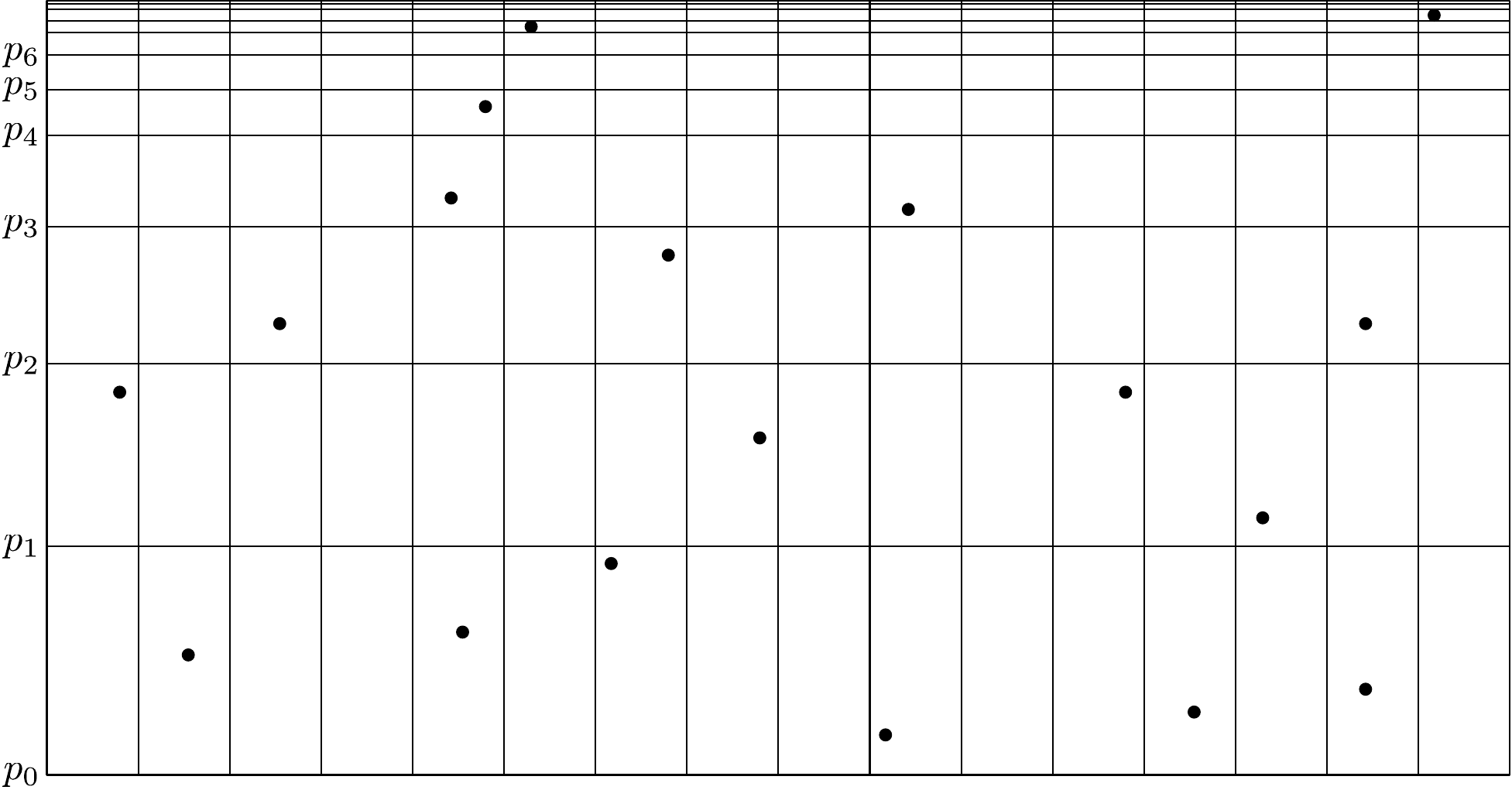}
    &   \includegraphics[scale=.15]{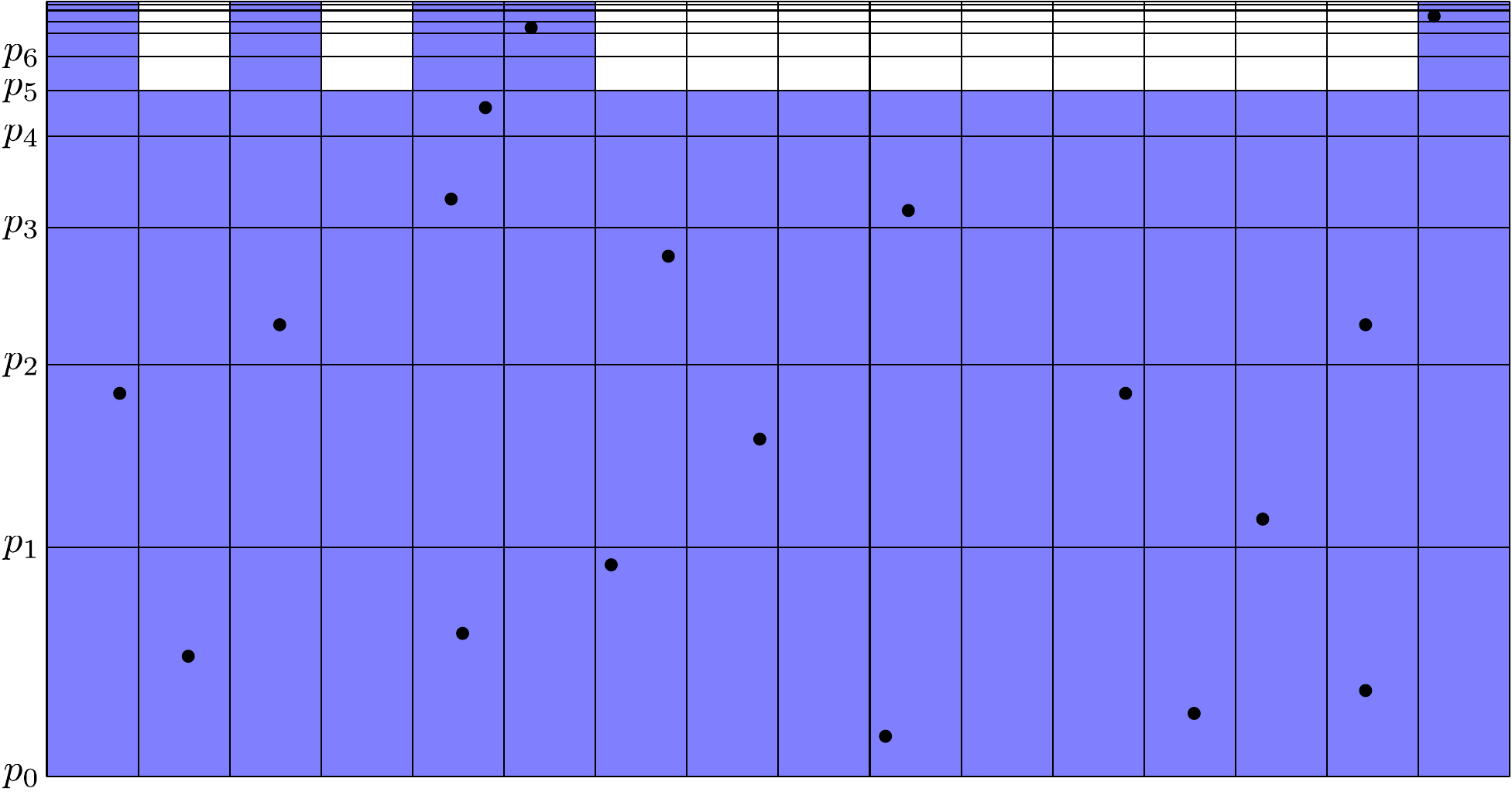}
    &   \includegraphics[scale=.15]{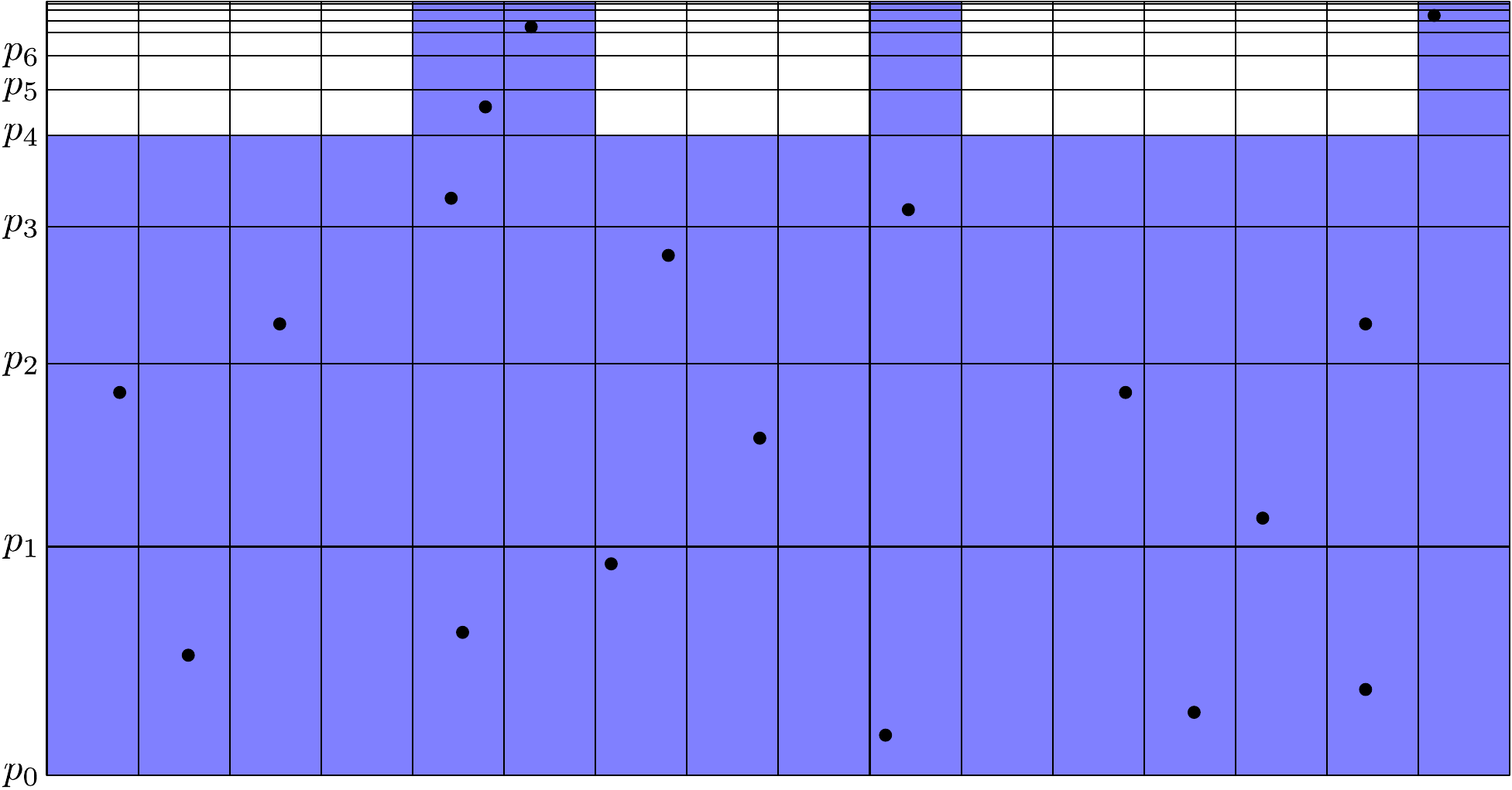}\\
    \textbf{(a)} &  \textbf{(b)} &  \textbf{(c)} 
    \end{tabular}
    \caption{\label{fig:S-Bitmap}\textbf{(a)} The \SBitmap{} \Dartboard{} partition, with $\lambda=17$ darts.  \textbf{(b)} The state of 
    \SBitmap, if the darts were processed in left-to-right order.  \textbf{(c)} The state of the \SBitmap, if the darts were processed in top-to-bottom order.  The \SBitmap's transition function is not commutative and idempotent and is therefore not mergeable.}
\end{figure}

\begin{figure}[h!]
    \centering
    \scalebox{.4}{\includegraphics{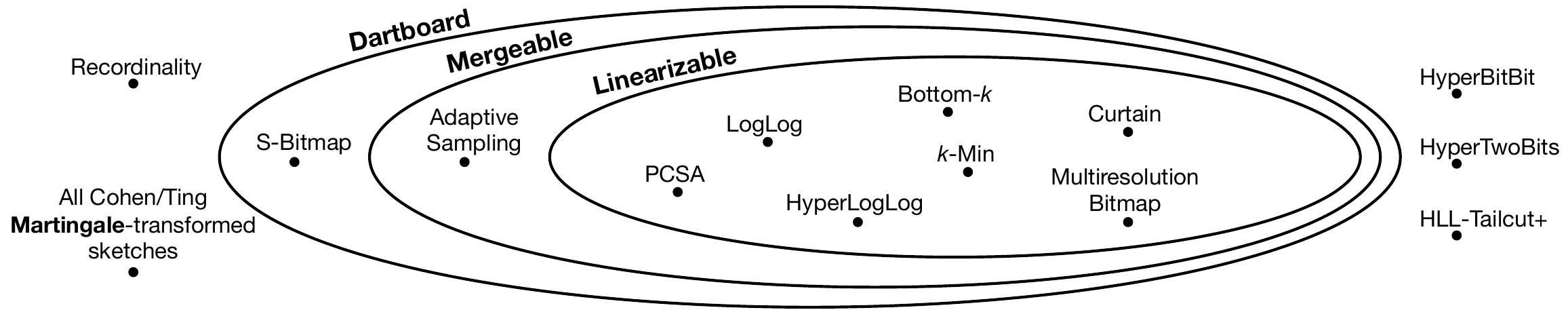}}
    \caption{A  classification of sketching algorithms for cardinality estimation.}
    \label{fig:classification}
\end{figure}

\subsubsection{Departures from the \DartboardModel}

The axioms of the \Dartboard{} model define a natural set of constraints on cardinality sketches, but do they just rule out fruitful lines of inquiry?  We now have many examples of violations of the \Dartboard{} model, which can be put in two classes:

\begin{enumerate}
    \item[(1)] Sketches that directly violate the axioms of the \DartboardModel, usually allowing \emph{occupied} cells to become \emph{free} as a result of trying to compress the state of a \Dartboard{} sketch in a ``lossy'' fashion.  (Sketches of this type appear to the right of the \Dartboard{} oval in Figure~\ref{fig:classification}.)

    \item[(2)] Sketches that consist of $(S,\hat{\lambda})$, where $S$ is a valid \textsf{Dartboard} sketch and $\hat{\lambda}$ is some extra information that is either an explicit cardinality estimate, 
    or used to generate one.
    (Sketches of this type appear to the left of the \Dartboard{} oval in Figure~\ref{fig:classification}.)
\end{enumerate}

Many have noted that in the (\Hyper)\LogLog{} vector $(y_1,\ldots,y_m)$, 
the $y_i$ are typically $L \approx \log(\lambda/m)$, plus or minus some small deviation.
If one only permits $y_i \in [L,L+C]$, $C=O(1)$, this effectively allows occupied cells 
to later become free.  (At one point in time some $y_i=L+C$ cannot be incremented, 
but when $L\gets L+1$ is incremented, $y_i$ becomes eligible to be incremented.)
As a consequence, the distribution of the sketch is not solely a function of the cardinality, 
but on the \emph{multiplicity} of elements and the \emph{order} in which they are processed.  
Several sketches can be viewed as lossy compressions of (\Hyper)\LogLog, such
as \textsf{HLL-Tailcut+}~\cite{XiaoCZL20}, \HyperBitBit~\cite{Sedgewick}, and \HyperTwoBits~\cite{JansonLS25}.
They all suffer from the same problem: as the number of
subsketches $m\to \infty$, the standard error is lower bounded by an absolute
constant, rather than scaling as $1/\sqrt{m}$.  
See~\cite[Appendix B]{PettieW20-arxiv} to see experimental results from 
an attack that applies to any sketch that violates
Axiom 3 of the \DartboardModel~\cite{Sedgewick,XiaoCZL20,JansonLS25}.
(Note, however, that Janson et al.~\cite{JansonLS25} 
do not claim that \HyperTwoBits{} should be used with \emph{any} number $m$ of subsketches. Their rigorous analysis holds when $m\leq 4096$ and makes no guarantees for larger $m$.)

Chakraborty et al.~\cite{ChakrabortyVM22} proposed a sketch that
is similar in spirit to \AdaptiveSampling, but rather than use
a single hash value $h(a)$ whenever $a$ is encountered, 
it randomly generates a new hash value for each copy of $a$ 
in the data stream.  As a consequence, the distribution of the sketch
depends on the order and multiplicity of elements.  In particular,
for any fixed size sketch $m$ and any $\lambda > m$ and 
$\lambda^* > \lambda$, there is a sufficiently long data stream 
$\mathcal{A}=(x_1,\ldots,x_N)$ with $|\mathcal{A}|=\lambda$ that,
with high probability, causes the Chakraborty et al.~\cite{ChakrabortyVM22} 
data structure to report an estimate greater than $\lambda^*$.  
The length of the data stream $N=N(m,\lambda,\lambda^*)$ is quite large, so this attack can be mitigated
by making $m$ depend on the stream length $N$ (\emph{including} duplicates), 
a property never seen in the cardinality estimation literature.
The space complexity of~\cite{ChakrabortyVM22} is $O(\epsilon^{-2}\log(N/\delta)\log U)$ bits to achieve an $(\epsilon,\delta)$-guarantee. The primary goal of Chakraborty et al.~\cite{ChakrabortyVM22} was not efficiency but mathematical accessibility, and along this metric
they succeeded in showing that it is possible 
to estimate cardinality without 
using explicit hash functions or random oracles.

\medskip 

The \SBitmap{} was discovered in 2011. It managed to stay within the \textsf{Dartboard} model while losing mergeability. 
Independently, in 2012, Helmi, Lumbroso, Mart\'{i}nez, and Viola~\cite{HelmiLMV12}
proposed the \Recordinality{} sketch $(S,c)$, 
which consists of a standard \Bottom-$k$ sketch $S$ 
and a counter $c$ of the \emph{number} of times $S$ has changed state. 
The premise is that $h(\mathcal{A})$ induces a random permutation on the first occurrences 
of distinct elements in $\mathcal{A}$, and the distribution of $c$ is a function of $\lambda$.
Although $S$ itself is mergeable, there is no 
way to merge the $c$ components of two \Recordinality{} sketches.

In 2014 Cohen~\cite{Cohen15} and Ting~\cite{Ting14}
independently invented a \emph{generic} method for transforming any \textsf{Dartboard} sketch into 
a better non-mergeable sketch,
which was later called 
the \emph{martingale transform} by Pettie, Wang, and Yin~\cite{PettieWY21}.
The premise of the martingale transform is very simple.  Let $S_0$ be the initial state of the \Dartboard,
and $(S_0,\ldots,S_t)$ be the sequence of \emph{distinct} states that we pass through when sequentially scanning a multiset $\mathcal{A}=(a_1,a_2,\ldots,a_N)$.  A mergeable sketch must estimate 
$\lambda = |\mathcal{A}|$ based solely on the final state $S_t$, 
but what if we had access to the entire history $(S_0,\ldots,S_t)$?  
Recall that the \Dartboard{} has unit area. Define $p_i$
to be the total area of free cells in $S_i$.  If we conflate time with \emph{distinct} elements encountered, the time that we spend in state $S_i$ 
is a $\Geometric(p_i)$ random variable, 
and hence $\hat{\lambda} \bydef \sum_{i=0}^{t-1} p_i^{-1}$ is a \emph{strictly} 
unbiased estimator for $\lambda$. 
Cohen~\cite{Cohen15} and Ting~\cite{Ting14} observed that $\hat{\lambda}$ could simply be stored as a running sum, without having to explicitly 
store the entire history.  
In particular,  let $S^{(i)}$ be the state of the \Dartboard{} after seeing
the prefix $(a_1,\ldots,a_i)$, 
let $\lambda^{(i)}$ be the true cardinality of the prefix,
and let $\hat{\lambda}^{(i)}$ be the 
cardinality estimate. 
Define $\Indicator{\mathcal{E}}\in\{0,1\}$ to 
be the indicator variable for event $\mathcal{E}$.
Then
\[
\hat{\lambda}^{(i+1)} = \hat{\lambda}^{(i)} + 
\Indicator{S^{(i)}\neq S^{(i+1)}}\cdot \left(\Pr(S^{(i+1)}\neq S^{(i)} \;\:\middle|\;\: a_{i+1}\not\in \{a_1,\ldots,a_{i+1}\}\right)^{-1}.
\]
Letting $\lambda^{(0)}=\hat{\lambda}^{(0)}=0$, it follows that 
$\E(\hat{\lambda}^{(i)}) = \lambda^{(i)}$ and that
$\left(\hat{\lambda}^{(i)} - \lambda^{(i)}\right)_{i\geq 0}$ is 
a martingale.  To see this, note that if $a_{i+1}\in\{a_1,\ldots,a_i\}$ then by the properties of a 
\Dartboard{} sketch, $S^{(i)}=S^{(i+1)}$ and hence $\hat{\lambda}^{(i+1)}=\hat{\lambda}^{(i)}$.
If $a_{i+1}\not\in\{a_1,\ldots,a_i\}$, then $\lambda^{(i+1)}=\lambda^{(i)}+1$ and $\E\left(\hat{\lambda}^{(i+1)} \;\,\middle|\,\; \hat{\lambda}^{(i)}\right) = \hat{\lambda}^{(i)}+1$, 
hence 
$\E\left(\hat{\lambda}^{(i)}\right) = \lambda^{(i)}$.

If \textsf{X} is a \Dartboard{} sketch, call 
\Martingale{} \textsf{X} the transformed sketch 
$(S,\hat{\lambda})$
that stores the \Dartboard{} state $S$ 
and an explicit running cardinality 
estimate $\hat{\lambda}$.
Cohen~\cite{Cohen15} and Ting~\cite{Ting14} showed 
that \Martingale-transformed sketches have essentially
the same space bound but lower variances.
For example, whereas the best estimator~\cite{WangP23} 
for \HyperLogLog~\cite{FlajoletFGM07} 
has relative variance $1.075/m$,
\Martingale{} \HyperLogLog{} has relative variance $0.69/m$.
Pettie, Wang, and Yin~\cite{PettieWY21} 
gave a rigorous analysis of all \Martingale-transformed sketches, 
bounding their variances in terms of the limiting
normalized free area of the \Dartboard.  
See Figure~\ref{fig:martingale-transformed} for a comparison 
of the memory-variance products of canonical sketches
and their \Martingale-transformed versions.

\begin{table}
\centering
\begin{tabular}{|l|l|l|}
\multicolumn{1}{l}{} & \multicolumn{2}{c}{\bf Memory-Variance Product (\MVP)}\\
\multicolumn{1}{l}{\bf Sketch} &
\multicolumn{1}{l}{\bf Original} &
\multicolumn{1}{l}{\bf \Martingale-transformed}\\\hline
    (\Martingale) \Bottom-$k$~\cite{Cohen97,Giroire09,ChassaingG06}              &  \quad $\log U$      &   \quad $\frac{1}{2}\log U$\\
    (\Martingale) \PCSA~\cite{FlajoletM85,Lang17,WangP23}\quad\  &  \quad $0.45\log U$  &   \quad $0.35\log U$\\
    (\Martingale) \HyperLogLog~\cite{FlajoletFGM07,WangP23}           &    \quad $1.075\log\log U$    &   \quad $0.69\log\log U$\\
    (\Martingale) \fishmonger{} (Section~\ref{sect:fishmonger}; \cite{PettieWY21})       &  \quad $H_0/I_0\approx 1.98$ &   \quad $\frac{1}{2}H_0\approx 1.63$\\\hline
\end{tabular}
\caption{\label{fig:martingale-transformed}
All the original sketches are mergeable whereas all the \Martingale-transformed sketches are non-mergeable.  The variances of \Martingale-transformed sketches were analyzed in~\cite{Cohen15,Ting14,PettieWY21}.  
For example, the \MVP{} of \Bottom-$k$ is its space, 
$k\log U$ bits, times its normalized variance, $1/k$.  
The \MVP{} of \PCSA{} and \HyperLogLog{} 
are calculated using their explicit space bounds of 
$m\log U$ and $m\log\log U$ and 
the best explicit $\tau$-\GRA{} estimators
of Wang and Pettie~\cite{WangP23}.}
\end{table}

\medskip 

To recapitulate, sketches that violate the axioms of the \DartboardModel{} either accept a variance independent of the number of subsketches $m$~\cite{XiaoCZL20,Sedgewick,JansonLS25}, 
or space that depends on the length of the stream, including duplicates~\cite{ChakrabortyVM22}.  
Sketches that \emph{augment} a valid \Dartboard{} 
sketch with additional information as in done in \Recordinality~\cite{HelmiLMV12} or \Martingale-transformed sketches~\cite{Cohen15,Ting14,PettieWY21}, 
may improve the variance, 
but at the cost of losing mergeability.

\subsection{Remarks on the Random Oracle Model}

We believe that there is some confusion around the utility and justification
of the \textsc{random oracle model}.  On the one hand, researchers
in the \textsc{standard model} correctly point out that explicitly 
storing a uniformly random hash function is prohibitively expensive.
If we \emph{must} account for the space of the hash function, we have little choice but to 
depend on $O(1)$-wise or $\tilde{O}(1)$-wise independence.
The analyses that start from this perspective
end up introducing \emph{large} constant factors into the space of the data structure, making direct comparisons between data structures based on their \emph{asymptotic} space (as a function of $\epsilon,\delta,U$) essentially impossible.\footnote{This may not be a mathematical consequence of restricting oneself to $\tilde{O}(1)$-wise independence, but it is nonetheless empirically true.}

Flajolet and Martin~\cite{FlajoletM85} and followup work~\cite{Flajolet90,DurandF03,FlajoletFGM07,Giroire09,Cohen15,Ting14,WangP23,JansonLS25}
regard the analysis of algorithms as a \emph{scientific} endeavor.  
A good analysis is one that has \emph{predictive value} on unseen data sets.
Along this metric, adopting the \textsc{random oracle model} to make predictions 
about the performance of algorithms implemented with weak hash functions is empirically successful.  
The space bounds are precisely stated, and the variance calculations predict the 
empirical mean-squared error within a percent or so.\footnote{The estimators~\cite{FlajoletM85,DurandF03,FlajoletFGM07} 
typically begin performing well once $\lambda=\Omega(m)$ and require a 
little correction for small cardinalities; see~\cite{HeuleNH13,Ertl17}.}  
A rebuttal from the \textsc{standard model} point of view is that
predictive accuracy on fair-weather data sets is not the strongest possible guarantee.
What if the data set is adversarially constructed in order to foil a specific family of weak
$O(1)$-wise independent hash functions?

\medskip 

Both of these perspectives are correct
on their own terms, but miss a key aspect of the environment in which sketches are deployed. 
When thousands (or more) 
data sets are collected
and individually sketched, and we want to retain the ability to merge them in 
the future in order to estimate union and intersection sizes, it must be
the case that every sketch uses the \emph{same} hash function.
The hash function is just a one-time space cost, 
so it is reasonable 
to focus on the \emph{marginal} space cost of an additional sketch.
Any system built along these lines can afford to use a high-performance 
$U^\epsilon$-wise independent hash function with $O(1)$ evaluation time; see Christiani, Pagh, and Throup~\cite{ChristianiPT15} and the references therein.
Flajolet et al.~\cite{FlajoletM85,Flajolet90,FlajoletFGM07,Lang17,Ertl17,JansonLS25}~regard the \textsc{random oracle model} as
an assumption that can be scientifically validated. 
We regard the \textsc{random oracle} as a function of sufficiently
high independence that it is indistinguishable from random, 
whose one-time space cost is \emph{amortized} across sufficiently many sketches.

\medskip 

To summarize, the \textsc{random oracle model} is useful from a 
lower bound persepctive (see \cite{IndykW03,JayramW13} and Sections~\ref{sect:fish-numbers} and \ref{sect:lowerbound})
as it focuses on information-theoretic bottlenecks unrelated to hashing.  
From an upper bound perspective, the \textsc{random oracle model} 
is a \emph{practical} assumption that facilitates the analysis
of space complexity \emph{at the margin}.

\subsection{Organization}

In Section~\ref{sect:info-theory} we review Shannon entropy, Fisher information, 
and the asymptotic efficiency of maximum likelihood estimation (MLE).

Section~\ref{sect:scale-invariance-offsetting-fish} builds up to a formal definition of $\fish$-numbers.
In Section~\ref{sect:scaleinvariance} we define a notion of \emph{base-$q$ scale-invariance} 
for a sketch, meaning its Shannon entropy and normalized Fisher information
are invariant when changing the cardinality by multiples of $q$.
Under this definition Shannon entropy and normalized Fisher information
are \emph{periodic} functions of $\log_q \lambda$.  
In Section~\ref{sect:randomoffsets} we 
define \emph{average} entropy/information and show that the average
behavior of any base-$q$ scale-invariant sketch can be realized by a
generic \emph{smoothing} mechanism.  Section~\ref{sect:fish} formally defines the 
$\fish$ number of a scale-invariant 
sketch in terms of average entropy and 
average information.  

Section~\ref{sect:fish-numbers} analyzes the $\fish$ numbers of base-$q$ generalizations
of \textsf{PCSA} and \textsf{LogLog}.
Section~\ref{sect:lowerbound} characterizes the class of mergeable \Dartboard{} sketches, 
defines the subclass of \emph{linearizable} sketches,
and proves that no linearizable sketch 
has $\fish$-number smaller than $H_0/I_0$.
The \fishmonger{} sketch with $\fish$-number $H_0/I_0$ 
is described and analyzed in Section~\ref{sect:fishmonger}.
We conclude with 
some open problems in 
Section~\ref{sect:conclusion}.

Various missing proofs from
Sections~\ref{sect:fish-numbers} 
and \ref{sect:lowerbound}
appear in 
Appendices~\ref{sect:proofs}
and~\ref{sect:proofs_lowerbound}, respectively.

\section{Preliminaries}\label{sect:info-theory}
\subsection{Shannon Entropy}\label{sect:Shannon}

Let $X_1$ be a random variable with probability density/mass function $f$.  
The \emph{entropy} of $X_1$ is defined to be
\[
H(X_1) = \E(-\log_2 f(X_1)).
\]
Let $(X_1,R_1)$ be a pair of random variables with joint probability function $f(x_1,r_1)$.  
When $X_1$ and $R_1$ are independent, entropy is additive: $H(X_1,R_1) = H(X_1) + H(R_1)$.  We can generalize this to possibly dependent random variables by the \emph{chain rule for entropy}. We first define the notion of \emph{conditional entropy}. 
The conditional entropy of $X_1$ given $R_1$ is defined as
\begin{align*}
    H(X_1\mid R_1)=\E\left(-\log_2 f(X_1\mid R_1)\right),
\end{align*}
which is interpreted as the \emph{average} entropy of $X_1$ after knowing $R_1$. 

\begin{theorem}[chain rule for entropy~\cite{CoverT06}]
Let $(X_0,X_1,\ldots,X_{m-1})$ be a tuple of random variables. 
Then $H(X_0,X_1,\ldots,X_{m-1})=\sum_{i=0}^{m-1} H(X_i\mid X_{0},\ldots,X_{i-1})$.
\end{theorem}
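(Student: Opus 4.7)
The plan is to derive the chain rule directly from the definition of entropy as an expected negative log-probability, combined with the standard factorization of a joint probability function. First I would invoke the probability chain rule to write
\[
f(x_0, x_1, \ldots, x_{m-1}) \;=\; \prod_{i=0}^{m-1} f(x_i \mid x_0, \ldots, x_{i-1}),
\]
where the $i=0$ factor is interpreted as the marginal $f(x_0)$. Taking $-\log_2$ of both sides turns this product into a sum, so pointwise on every realization
\[
-\log_2 f(x_0, \ldots, x_{m-1}) \;=\; \sum_{i=0}^{m-1} -\log_2 f(x_i \mid x_0, \ldots, x_{i-1}).
\]

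Next I would take the expectation of both sides with respect to the joint distribution of $(X_0,\ldots,X_{m-1})$. By the definition given in Section~\ref{sect:Shannon}, the left-hand side is exactly $H(X_0, X_1, \ldots, X_{m-1})$. For each summand on the right, the integrand depends only on $(X_0, \ldots, X_i)$, so the joint expectation reduces to an expectation over the marginal of $(X_0, \ldots, X_i)$; this is precisely the quantity $\E(-\log_2 f(X_i \mid X_0,\ldots,X_{i-1}))$ which was just defined as the conditional entropy $H(X_i \mid X_0,\ldots,X_{i-1})$. Linearity of expectation then immediately assembles the claim.

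There is essentially no obstacle here: the only regularity issue is that the factorization requires the conditional densities/mass functions to be well defined, which is automatic whenever the joint density $f$ exists (adopting the usual convention that conditional terms with zero-probability conditioning events contribute zero). The argument is the standard one from Cover and Thomas~\cite{CoverT06}, and the two ingredients---the chain rule for probability and the very definition of conditional entropy as an average---make the proof a one-line computation once the pointwise identity above is written down.
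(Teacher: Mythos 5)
The paper does not prove this theorem itself but cites it to Cover and Thomas; your argument is exactly the standard proof from that source. The derivation is correct: factor the joint probability via the chain rule for probability, take $-\log_2$ to turn the product into a sum, and take expectations using linearity, with each term matching the paper's definition of conditional entropy.
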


Shannon's source coding theorem says that it is impossible to encode the outcome of 
a \emph{discrete} random variable $X_1$ in fewer than $H(X_1)$ bits on average.  
On the positive side, it is possible~\cite{CoverT06}
to assign code words such that the outcome $[X_1=x]$ 
is communicated with less than $\ceil{\log_2(1/f(x))}$ bits,
e.g., using arithmetic coding~\cite{WittenNC87,MoffatNW98}.

\subsection{Fisher Information and the Cram\'{e}r-Rao Lower Bound}\label{sect:Fisher}

Let $F=\{f_\lambda \mid \lambda\in\mathbb{R}\}$ be a family of distributions parameterized by a 
single unknown parameter $\lambda\in\mathbb{R}$.  (We do not assume
there is a prior distribution on $\lambda$.)
A \emph{point estimator} $\hat{\lambda}(X)$ is a statistic that 
estimates $\lambda$ from 
a vector $\mathbf{X}=(X_0,\ldots,X_{m-1})$ of samples drawn i.i.d.~from $f_\lambda$.

The accuracy of a ``reasonable'' point estimator is limited by the properties of the distribution family $F$ itself.
Informally, if every $f_\lambda\in F$ is sharply concentrated and statistically far from other $f_{\lambda'}$ then $f_\lambda$ is \emph{informative}.  
Conversely, if $f_\lambda$ is poorly concentrated
and statistically close to other $f_{\lambda'}$ 
then $f_\lambda$ is uninformative.  
This measure is formalized by the \emph{Fisher information}~\cite{Vaart98,CasellaB02}.

Fix $\lambda=\lambda_0$ and let $X \sim f_\lambda$ be a sample drawn from $f_\lambda$.  
The Fisher information number with respect to the observation $X$ at $\lambda_0$ is defined to be:\footnote{Since in this paper the parameter is always the cardinality, the parameter $\lambda$ is omitted in the notation $I_X(\lambda_0)$. }
\[
I_X(\lambda_0)=\E\left(\frac{\frac{\partial}{\partial\lambda}f_\lambda(X)}{f_\lambda(X)}\right)^2\mid_{\lambda=\lambda_0}.
\]
The \emph{conditional Fisher information} of $X_1$ given $X_0$ at $\lambda=\lambda_0$ is defined as
\begin{align*}
    I_{X_1\mid X_0}(\lambda_0)=\E\left(\frac{\frac{\partial}{\partial\lambda}f_\lambda(X_1\mid X_0)}{f_\lambda(X_1\mid X_0)}\right)^2\mid_{\lambda=\lambda_0}.
\end{align*}

Similar to Shannon's entropy, we also have a chain rule for Fisher information numbers.
\begin{theorem}[chain rule for Fisher information~\cite{zegers2015fisher}]
Let $\mathbf{X}=(X_0,X_1,\ldots,X_{m-1})$ be a tuple of random variables all depending on $\lambda$. 
Under mild regularity conditions, 
$I_{\mathbf{X}}(\lambda)=\sum_{i=0}^{m-1} I_{X_i\mid X_{0},\ldots,X_{i-1}}(\lambda)$.
Specifically
if $\mathbf{X}=(X_0,\ldots,X_{m-1})$ is a set of \emph{independent} 
samples from $f_\lambda$ then 
$I_{\mathbf{X}}(\lambda)= m\cdot I_{X_0}(\lambda)$.
\end{theorem}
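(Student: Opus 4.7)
The plan is to reduce Fisher information to the second moment of the \emph{score function} and exploit the additive decomposition of the log-joint density. First I would factor the joint density via the probability chain rule as $f_\lambda(\mathbf{x}) = \prod_{i=0}^{m-1} f_\lambda(x_i \mid x_0,\ldots,x_{i-1})$, take logarithms, and define for each $i$ the conditional score $s_i = \frac{\partial}{\partial \lambda}\log f_\lambda(X_i \mid X_0,\ldots,X_{i-1})$. Then the total score is $s = \frac{\partial}{\partial \lambda}\log f_\lambda(\mathbf{X}) = \sum_{i=0}^{m-1} s_i$, and by the definition of Fisher information,
\[
I_{\mathbf{X}}(\lambda) = \E[s^2] = \sum_{i=0}^{m-1} \E[s_i^2] + 2\sum_{0 \le j < i \le m-1} \E[s_i s_j].
\]
The diagonal terms are exactly $I_{X_i \mid X_0,\ldots,X_{i-1}}(\lambda)$, so the entire proof reduces to showing each cross term vanishes.

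For the cross terms, the key fact is that under the regularity conditions that permit exchanging $\frac{\partial}{\partial \lambda}$ with integration against $f_\lambda(\cdot \mid X_0,\ldots,X_{i-1})$, the conditional score has vanishing conditional expectation:
\[
\E[s_i \mid X_0,\ldots,X_{i-1}] \;=\; \int \frac{\partial}{\partial \lambda} f_\lambda(x_i \mid X_0,\ldots,X_{i-1})\, dx_i \;=\; \frac{\partial}{\partial \lambda}\, 1 \;=\; 0.
\]
For $j < i$, the score $s_j$ is a measurable function of $(X_0,\ldots,X_j) \subseteq (X_0,\ldots,X_{i-1})$, so by the tower property $\E[s_i s_j] = \E\!\bigl[s_j \cdot \E[s_i \mid X_0,\ldots,X_{i-1}]\bigr] = 0$. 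This proves the general chain rule. The independent specialization then follows immediately: when the $X_i$ are i.i.d.\ draws from $f_\lambda$, conditioning is vacuous so $f_\lambda(X_i \mid X_0,\ldots,X_{i-1}) = f_\lambda(X_i)$, each conditional Fisher information collapses to $I_{X_0}(\lambda)$, and the sum yields $m \cdot I_{X_0}(\lambda)$.

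The main obstacle is simply articulating the ``mild regularity conditions'' cleanly. One needs (i) $f_\lambda$ supported on a set independent of $\lambda$ and strictly positive there, so that $\log f_\lambda$ and its $\lambda$-derivative are well defined almost surely, and (ii) an integrable dominating function for $|\partial_\lambda f_\lambda(\cdot \mid X_0,\ldots,X_{i-1})|$ uniformly in a neighborhood of $\lambda_0$, so that dominated convergence justifies pulling $\partial_\lambda$ inside the conditional integral. Under these assumptions, every step above is rigorous; the rest is just bookkeeping, and the result is standard (cf.~\cite{zegers2015fisher}).
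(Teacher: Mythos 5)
The paper states this as a cited preliminary (Theorem from \cite{zegers2015fisher}) and gives no proof of its own, so there is nothing to compare against directly. Your argument is the standard and correct proof: decompose the joint log-likelihood via the probability chain rule, observe that the total score $s=\sum_i s_i$ then gives $I_{\mathbf{X}}(\lambda)=\E[s^2]$, identify the diagonal terms $\E[s_i^2]$ with the paper's definition of conditional Fisher information (since $s_i = \partial_\lambda\log f_\lambda(X_i\mid X_0,\ldots,X_{i-1}) = \frac{\partial_\lambda f_\lambda(X_i\mid X_0,\ldots,X_{i-1})}{f_\lambda(X_i\mid X_0,\ldots,X_{i-1})}$), and kill the cross terms $\E[s_i s_j]$ ($j<i$) by the tower property once you know $\E[s_i\mid X_0,\ldots,X_{i-1}]=0$, which is exactly the regularity-dependent step of differentiating $\int f_\lambda(\,\cdot\mid X_0,\ldots,X_{i-1})=1$ under the integral sign. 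The i.i.d.\ specialization is immediate. One small note for this paper's setting: the observations are discrete, so your integrals are really sums (which only makes the interchange-of-derivative-and-integral step easier to justify); and the support-independence condition you list in your final paragraph is precisely the issue the authors address by Poissonizing the counting process in Section~\ref{sect:poisson_regular}.
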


The celebrated Cram\'{e}r-Rao lower bound~\cite{Vaart98,CasellaB02} 
states that, under mild regularity conditions (see Section \ref{sect:poisson_regular}), 
for any unbiased estimator $\hat{\lambda}(\mathbf{X})$ 
with finite variance,
\begin{align*}
    \Var\left(\hat{\lambda} \;\middle|\; \lambda\right) \geq \frac{1}{I_{\mathbf{X}}(\lambda)}.
\end{align*}
Suppose now that $\hat{\lambda}(\mathbf{X}=(X_0,\ldots,X_{m-1}))$ is, in fact, 
the Maximum Likelihood Estimator (MLE) 
from $m$ i.i.d.~observations. Under mild regularity conditions, it is asymptotically normal and 
efficient~\cite{Vaart98,CasellaB02}, i.e.,
\[
\lim_{m\rightarrow \infty} m\cdot \Var\left(\hat{\lambda} \;\middle|\; \lambda\right) = \frac{1}{I_{X_0}(\lambda)}
\qquad \text{and} \qquad
\lim_{m\rightarrow \infty}
    \sqrt{m}(\hat{\lambda}-\lambda) \sim \Normal\left(0,\frac{1}{I_{X_0}(\lambda)}\right),
\]
or equivalently, $\hat{\lambda} \sim \Normal\left(\lambda,\frac{1}{I_{\mathbf{X}}(\lambda)}\right)$ as $m\to \infty$.
In the Cardinality Estimation problem we are concerned 
with \emph{relative} variance and \emph{relative} standard deviations (standard error).  Thus, the corresponding lower bound on the relative variance is 
$\left(\lambda^2\cdot I_{\mathbf{X}}(\lambda)\right)^{-1}$. We define the \emph{normalized} Fisher information number of $\lambda$ with respect to the observation $\mathbf{X}$ to be $\lambda^2\cdot I_{\mathbf{X}}(\lambda)$.

\subsection{Regularity Conditions and Poissonization}\label{sect:poisson_regular}

The asymptotic normality of MLE and the Cram\'{e}r-Rao
lower bound depend on various regularity conditions~\cite{zegers2015fisher,abramovich2013statistical,bickel2001mathematical}, 
e.g., that $f_\lambda(x)$ is differentiable with respect to $\lambda$
and that we can swap the operators of 
differentiation w.r.t.~$\lambda$
and integration over observations $x$.  
(We only consider discrete observations here, so this is just a summation.)

A critical regularity condition of Cram\'{e}r-Rao is that
\emph{the support of $f_\lambda$ 
does not depend on $\lambda$}, i.e., 
the set of possible observations is independent of 
$\lambda$. 
A canonical example violating this condition (and one in which the Cram\'{e}r-Rao bound can actually 
be beaten)
is when $\theta$ is the parameter and the observation $X$ is sampled uniformly from $[0,\theta]$; see \cite{CasellaB02}.
Strictly speaking, the sketches we consider 
do not satisfy this property.
For example, 
when $\lambda=1$ the 
only possible $\PCSA$ 
sketches have exactly one occupied cell.
To address this issue 
we \emph{Poissonize} the model
(as in~\cite{DurandF03,FlajoletFGM07}),
which solves this problem and simplifies
other aspects of the analysis.
Consider the following two processes.
\begin{description}
    \item[Discrete counting process.] 
    Starting from time $0$, an element is inserted at every time $k\in\mathbb{N}$. 
    \item[Poissonized counting process.] Starting from time $0$, elements are inserted
    memorylessly with rate 1. This corresponds to a \emph{Poisson point process} of rate 1 on $[0,\infty)$.
\end{description}
For both processes, our goal would be to estimate the current time $\lambda$.
In the discrete process the number of insertions is precisely $\floor{\lambda}+1$
whereas in the Poisson one it is $\tilde{\lambda}\sim\operatorname{Poisson}(\lambda)$.
When $\lambda$ is sufficiently large, any estimator for $\tilde{\lambda}$ 
with standard error $c/\sqrt{m}$ also estimates 
$\lambda$ with standard error 
$(1+o(1))c/\sqrt{m}$, since
$\tilde{\lambda} = \lambda \pm \tilde{O}(\sqrt{\lambda})$ 
with probability $1-1/\poly(\lambda)$.
Since we are concerned with the asymptotic efficiency of sketches, we are
indifferent between these two models.
Let us summarize the specific advantages of the 
Poissonized counting process.
\begin{itemize}
    \item The ``time'' parameter $\lambda$ is continuous, and the distribution of the sketch for any two $\lambda_0\neq \lambda_1$ are distinct. All quantities of interest are differentiable with respect to $\lambda$.
    \item Define $Z_0,\ldots,Z_{|\mathcal{C}|-1}$
    to be the indicator variables for
    the $|\mathcal{C}|$ cells having been hit by a dart.
    These variables are mutually independent.
    \item When $\lambda>0$, 
    \emph{every} outcome of 
    $(Z_0,\ldots,Z_{|\mathcal{C}|-1})$
    has positive probability.
\end{itemize}

Algorithmically, the 
Poisson model could be simulated 
online in various ways.  
When an element $a$
arrives, we could use the random oracle to 
generate $\xi_{a} \sim \operatorname{Poisson}(1)$
and then insert elements $(a,1),\ldots,(a,\xi_{a})$ 
into the sketch as usual. 
This is equivalent to mapping $h(a)$ to a 
bit-string $b_a\in \{0,1\}^{\mathcal{C}}$,
where $\Pr(b_a(c_i)=0)=e^{-p_i}$, 
$p_i$ is the area of cell $c_i$, and $b_a(c_i)=1$ indicates that $c_i$ was hit by one of $a$'s darts.

\section{Scale-Invariance and $\fish$ Numbers}\label{sect:scale-invariance-offsetting-fish}

We are destined to measure the efficiency of observations
in terms of entropy ($H$) and normalized information ($\lambda^2 \times I$),
but it turns out that these quantities are slightly ill-defined, 
being \emph{periodic} when we really want them to be 
\emph{constant},
at least in the limit.

In Section~\ref{sect:IDF} we 
switch from the functional view of mergeable sketches 
(as commutative, idempotent transition functions) 
to a distributional interpretation.
In Section~\ref{sect:scaleinvariance} 
we define a weak notion of \emph{scale-invariance} for sketches
and
in Section~\ref{sect:randomoffsets} we give a generic method
to iron out periodic behavior 
in scale-invariant sketches,
Section~\ref{sect:fish} 
formally defines the $\fish$ number
of a sketch.

\subsection{Induced Distribution Family of Sketches}\label{sect:IDF}

Given a sketch scheme, 
Cardinality Estimation can be viewed as a point estimation problem,
where the unknown parameter is the cardinality $\lambda$
and $f_\lambda$ is the distribution over the final state of the sketch.

\begin{definition}[Induced Distribution Family]
Let $A$ be the name of a sketch having a countable state space 
$\mathcal{M}$.
The \emph{Induced Distribution Family (IDF)} of $A$ is a parameterized 
distribution family 
\begin{align*}
    \Psi_A = \{\psi_{A,\lambda}: \mathcal{M}\to [0,1]\mid \lambda>0\},
\end{align*}
where $\psi_{A,\lambda}(x)$ is the probability of $A$ being in state
$x$ at cardinality $\lambda$.
Define $X_{A,\lambda} \sim \psi_{A,\lambda}$ to be a random state drawn from $\psi_{A,\lambda}$.
\end{definition}

We can now directly characterize existing sketches as 
induced distribution families.
For example, the state space of 
a single \textsf{LogLog} ($2$-$\LL$) sketch~\cite{DurandF03}\footnote{In 
any real implementation it would
be truncated at some finite maximum value, typically 64.}
is $\mathcal{M} = \mathbb{N}$
and $\Psi_{\LL}$ contains, for each $\lambda>0$, the function $\psi_{\LL,\lambda}$:\footnote{Recall the cells have size $\{2^{-k} \mid k\geq 1\}$. The number of darts hitting a cell of size $p$ is $\Poisson(p\lambda)$, so the probability
that cell $k$ is hit and cells $k+1,k+2,\ldots$ are not
is $(1-e^{-\lambda/2^k})e^{-\lambda/2^k} = e^{\lambda/2^k}-e^{\lambda/2^{k-1}}$.
Without Poissonization the probability would be $\psi_{\LL,\lambda}(k) = 
(1-\frac{1}{2^{k}})^\lambda-(1-\frac{1}{2^{k-1}})^\lambda$.}
\begin{align*}
    \psi_{\LL,\lambda}(0) = e^{-\lambda} \qquad \text{ and for $k\geq 1$, } \quad
    \psi_{\LL,\lambda}(k) = e^{-\frac{\lambda}{2^{k}}}-e^{-\frac{\lambda}{2^{k-1}}}.
\end{align*}

We usually consider just the basic version of each 
sketch, e.g., a single bit-vector for $\PCSA$ or a single counter for $\LL$.
When we apply the machinery laid out in Section~\ref{sect:info-theory}
we take $m$ \emph{independent} 
copies of the basic sketch.

\subsection{Weak Scale-Invariance}\label{sect:scaleinvariance}

Consider a basic sketch $A$ with induced distribution family $\Psi_A$, and let
$A^m$ denote a vector of $m$ independent $A$-sketches.\footnote{Note that because of the Poissonization assumption, the distribution of a vector of, say, $m$ independent basic (\Hyper)\LogLog{} sketches at cardinality $\lambda$ is identical to a (\Hyper)\LogLog{} sketch composed of $m$ subsketches at cardinality $m\lambda$.  We choose the former formalism for notational convenience.}
From the Cram\'{e}r-Rao lower bound we know the variance of an unbiased estimator is at least $\frac{1}{I_{A^m}(\lambda)} = \frac{1}{m\cdot I_A(\lambda)}$. 
(Here $I_{A^m}(\lambda)$ is short for $I_{X_{A^m,\lambda}}(\lambda)$, 
where $X_{A^m,\lambda}$
is the observed final state of $A^m$ at time $\lambda$.)
The memory required to store it 
is at least $H(X_{A^m,\lambda}) = m\cdot H(X_{A,\lambda})$.
Thus the product of the memory 
and the \emph{relative} variance is lower bounded by 
\begin{align*}
    \frac{H(X_{A,\lambda})}{\lambda^2\cdot I_{A}(\lambda)},
\end{align*}
which only depends on the distribution family $\Psi_A$ and the 
unknown parameter $\lambda$.  However, 
ideally it would depend only on $\Psi_A$.

Essentially every existing sketch is insensitive to the scale
of $\lambda$, up to some coarse approximation.
However, it is difficult to design a sketch with 
a countable state space that is \underline{\emph{strictly}} 
scale-invariant. 
It turns out that a weaker form is just as good for our 
purposes.

\begin{definition}[Weak Scale-Invariance]\label{def:scaleinvariant}
Let $A$ be a sketch with induced distribution family
$\Psi_A$ and $q>1$ be a real number. 
We say $A$ is \emph{weakly scale-invariant with base $q$} 
if for any $\lambda>0$, we have
\begin{align*}
                 H(X_{A,\lambda}) &= H(X_{A,q\lambda}),\\
                 I_{A}(\lambda) &= q^2\cdot I_{A}(q\lambda).
\end{align*}
\end{definition}

\begin{rem}
For example, the original
(\textsf{Hyper})\textsf{LogLog} 
and $\PCSA$ sketches~\cite{FlajoletM85,FlajoletFGM07,DurandF03} 
are, after Poissonization,
base-2 weakly scale-invariant in the limit, as $\lambda\to\infty$.  For small values of $\lambda$ they
run into the problem that cells have height $2^{-k}$ 
for $k\in \mathbb{N}$.  In Section~\ref{sect:fish-numbers} we extend the domain from $\mathbb{N}$ to $\mathbb{Z}$ to achieve 
strict weak scale invariance.
\end{rem}

Observe that if a sketch $A$ is weakly scale-invariant with base $q$, 
then the ratio 
\begin{align*}
    \frac{H(X_{A,q\lambda})}{(q\lambda)^2\cdot I_{A}(q\lambda)}=\frac{H(X_{A,\lambda})}{\lambda^2\cdot I_{A}(\lambda)}
\end{align*}
becomes multiplicatively periodic with period $q$.  See Figure~\ref{fig:h_i} for illustrations
of the periodicity of the entropy ($H$) and normalized information ($\lambda^2 I$) of
the base-$q$ \textsf{LogLog} sketch.

\begin{figure}[h!]
    \centering
    \includegraphics[width=0.49\linewidth]{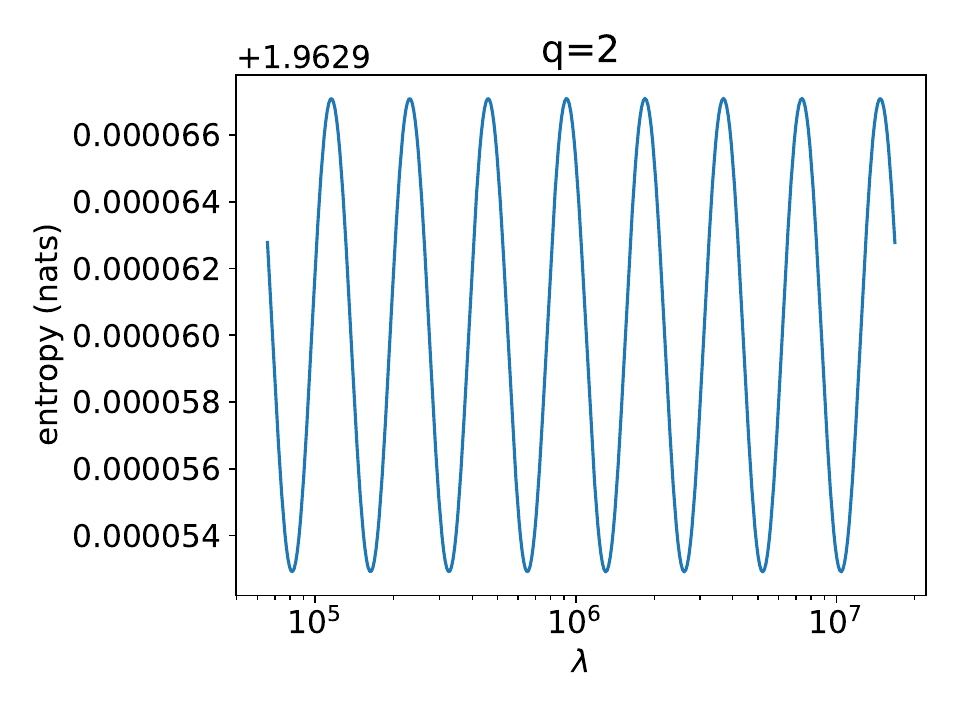}
    \includegraphics[width=0.49\linewidth]{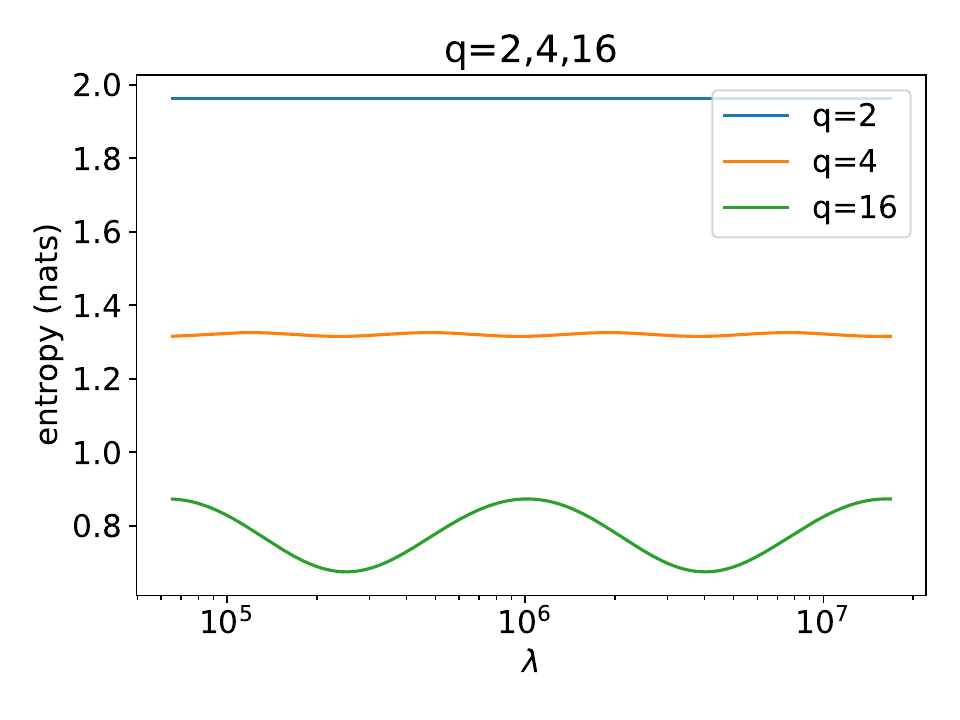}
    \includegraphics[width=0.49\linewidth]{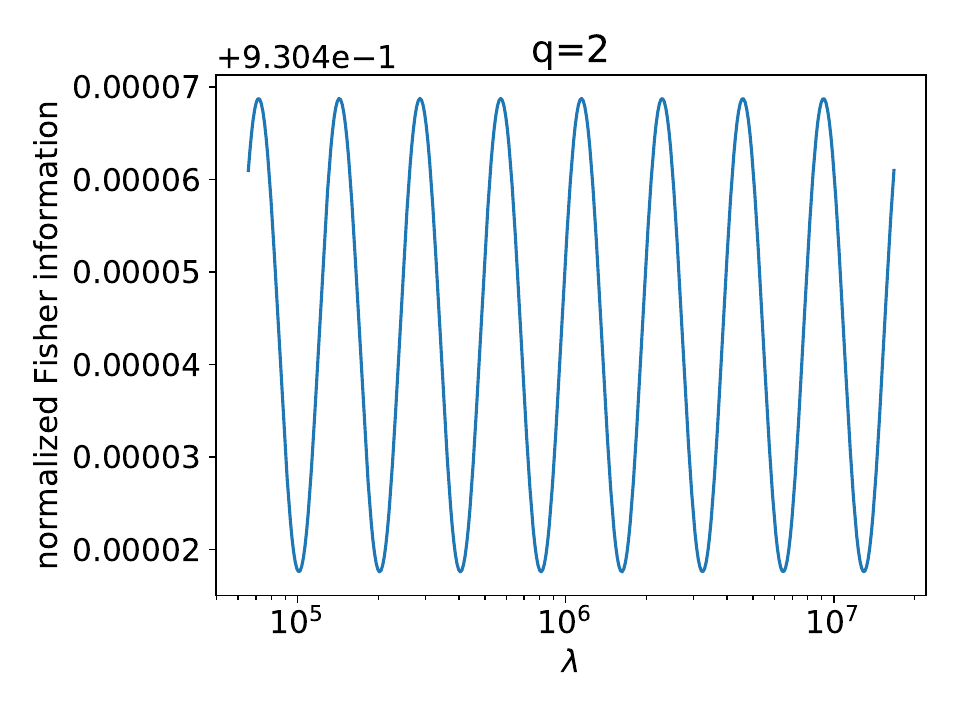}
    \includegraphics[width=0.49\linewidth]{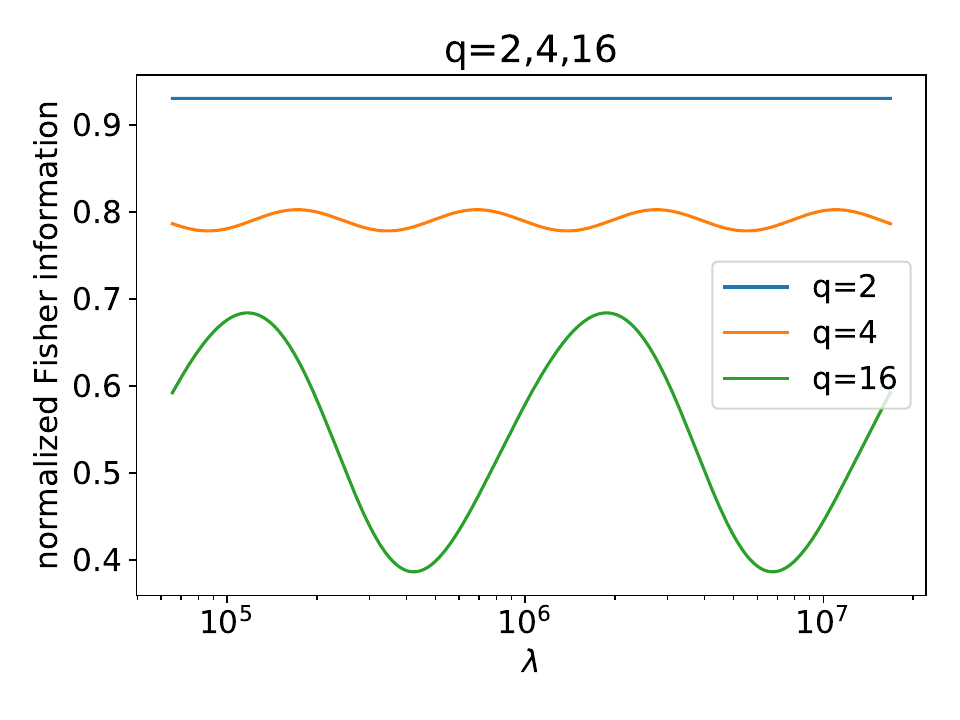}
    \caption{\small Entropy and normalized Fisher information number for $q$-\textsf{LogLog} skecthes for $\lambda\in[2^{16},2^{24}]$.
    See Section \ref{sect:qll_def} for the precise definitions. 
    Left: At a sufficiently small scale, the oscillations
    in entropy (top) and normalized information (bottom) of 2-$\LL$
    become visible.  Right: At higher values of $q \in \{2,4,16\}$,
    the oscillations in entropy (top) and normalized information (bottom) of 
    $\qLL$ are clearly visible.  
    \label{fig:h_i}}
\end{figure}

\medskip

\subsection{Smoothing via Random Offsetting}\label{sect:randomoffsets}

The \textsf{LogLog} sketch has an oscillating 
asymptotic relative variance but since its magnitude is very small
(less than $10^{-4}$~\cite{DurandF03,FlajoletFGM07}), 
it is often ignored.  
However, when we consider base-$q$ generalizations 
of \textsf{LogLog}, e.g., $q=16$, 
the oscillation becomes too large to ignore; see Figures~\ref{fig:h_i} and \ref{fig:oscillation}. 
Here we give a simple mechanism to \emph{smooth} these functions.

Rather than combine $m$ i.i.d.~copies of the basic sketch, we will
combine $m$ \emph{randomly offsetted} copies of the sketch. 
Specifically, the algorithm is hard-coded with a random vector $(R_0,\ldots,R_{m-1}) \in [0,1)^m$ 
and for all $i\in[m]$, the $i$th 
sketch is scaled down by a $q^{-R_i}$ factor.
Thus, after seeing $\lambda$ distinct elements, 
the $i$th sketch will have seen 
$\lambda q^{-R_i}$ distinct elements in expectation.
All sketches are distributed identically, 
and in the limit $m\to\infty$,
the memory size (entropy) and the relative variance tend to their average values.\footnote{Using the set of uniform offsets $(0,\frac{1}{m},\ldots,\frac{m-1}{m})$ will also work.}
Figure \ref{fig:oscillation} illustrates the effectiveness of this smoothing operation for reasonably small values of $q=16$ and $m=128$.

\begin{figure}[h!]
    \centering
    \includegraphics[width=0.49\linewidth]{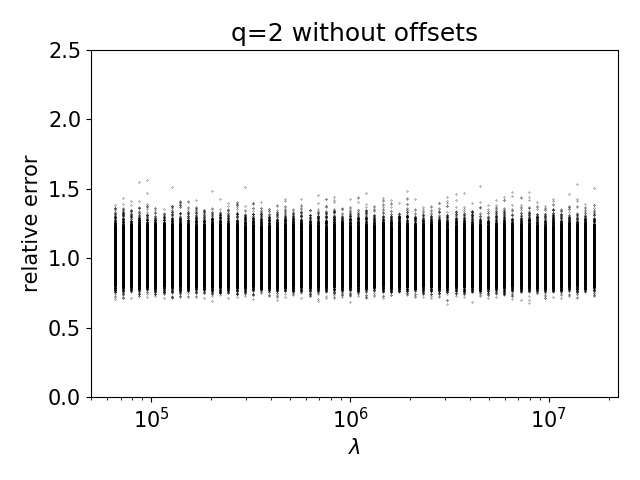}
    \includegraphics[width=0.49\linewidth]{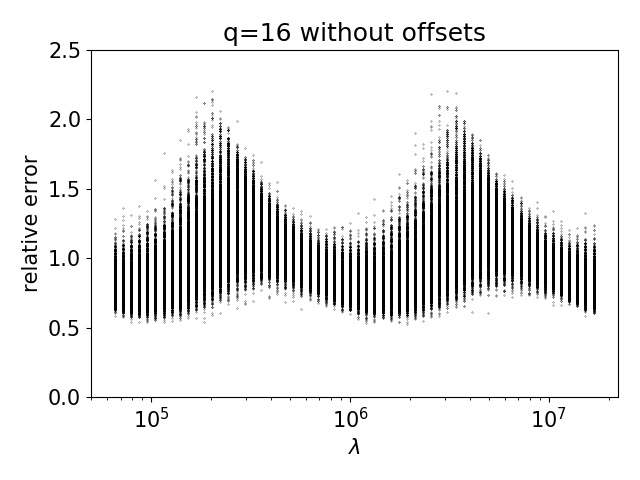}
    \includegraphics[width=0.49\linewidth]{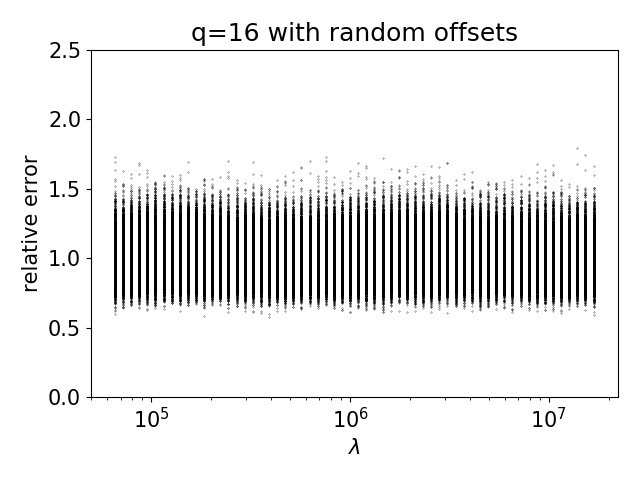}
    \includegraphics[width=0.49\linewidth]{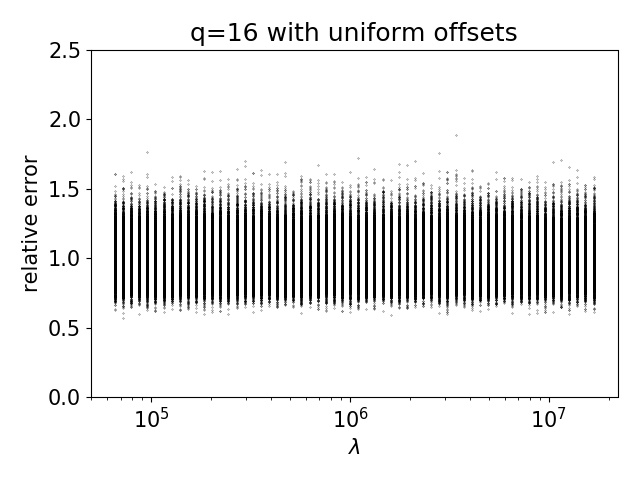}
    \caption{\small The empirical relative error ($\hat{\lambda}/\lambda$) distribution (for $\lambda\in[2^{16},2^{24}]$) of $q$-\textsf{LogLog} for four cases. \textbf{Top left:} $q=2$ without offsets.
    \textbf{Top right:} $q=16$ without offsets.
    \textbf{Bottom left:} $q=16$ with random offsets. 
    \textbf{Bottom right:} $q=16$ with uniform offsets. All use $m=128$ and the number of experiments is $5000$ for each cardinality. 
    We use a \textsf{HyperLogLog}-type estimator
    $ \hat{\lambda}(S) = \alpha_{q,m,r} \cdot m (\sum_{k\in[m]} q^{-S(k)-r_k})^{-1}$ (without stochastic averaging), 
    where  $S(k)$ is the final state of the $k$th sketch and $r_k$ is the offset for the $k$th sketch. The sketches without offsets have $r_k=0$ for all $k\in[m]$. The sketches with random offsets have $r = (r_k)_{k\in [m]}$
    uniformly distributed in $[0,1)^m$. Sketches with uniform offsets
    use the offset vector $r = (0,1/m,\ldots,(m-1)/m)$.
    The constant $\alpha_{q,m,r}$ is determined experimentally for each case.}
    \label{fig:oscillation}
\end{figure}

Throughout this section we let
$A$ be a weakly scale-invariant sketch with base $q$, having
state space $\mathcal{M}$, and induced distribution family $\Psi_A$.
Let $(R_1,Y_1) \in [0,1)\times \mathcal{M}$ be a pair where $R_1$ is uniformly random in $[0,1)$,
and $Y_1$ is the state of $A$ after seeing $\lambda q^{-R_1}$ distinct insertions.
Then
\begin{align*}
    \Pr(Y_1 = y_1\mid \lambda, R_1 = r_1) = \psi_{A,\lambda q^{-r_1}}(y_1).
\end{align*}
Thus the joint density function is
\begin{align*}
    f_\lambda(r_1,y_1)=\psi_{A,\lambda q^{-r_1}}(y_1).
\end{align*}

\begin{lemma}\label{lem:smoothed-fisher}
Fix the unknown cardinality (parameter) $\lambda$. 
The Fisher information of $\lambda$ with respect to $(R_1,Y_1)$ is equal to
\begin{align*}
    \frac{1}{\lambda^2}\int_0^1 q^{2r} I_{A}(q^{r}) dr.
\end{align*}
\end{lemma}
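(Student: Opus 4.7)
The plan is to apply the definition of Fisher information directly to the joint density
$f_\lambda(r_1,y_1) = \psi_{A,\lambda q^{-r_1}}(y_1)$ (continuous in $r_1\in[0,1)$, discrete in $y_1\in\mathcal{M}$), and then reduce the resulting integral to the desired form by invoking the periodicity that weak scale-invariance forces on $q^{2r}I_A(q^r)$. I do not expect any genuine obstacle; the only nontrivial step is recognizing how to apply the base-$q$ invariance.

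First I would differentiate $f_\lambda$ in $\lambda$. Since $f_\lambda(r_1,y_1)$ depends on $\lambda$ only through $u = \lambda q^{-r_1}$, the chain rule gives
\[
\frac{\partial}{\partial \lambda} f_\lambda(r_1,y_1) \;=\; q^{-r_1}\left.\frac{\partial}{\partial u}\psi_{A,u}(y_1)\right|_{u=\lambda q^{-r_1}}.
\]
Plugging this into the definition of Fisher information and swapping the (finite/summable) sum and the integral in $r_1$,
\begin{align*}
I_{(R_1,Y_1)}(\lambda)
&= \int_0^1 \sum_{y_1\in\mathcal{M}} \frac{\bigl(\partial_\lambda f_\lambda(r_1,y_1)\bigr)^2}{f_\lambda(r_1,y_1)}\, dr_1 \\
&= \int_0^1 q^{-2r_1}\, \sum_{y_1}\left.\frac{\bigl(\partial_u \psi_{A,u}(y_1)\bigr)^2}{\psi_{A,u}(y_1)}\right|_{u=\lambda q^{-r_1}} dr_1 \;=\; \int_0^1 q^{-2r_1}\, I_A(\lambda q^{-r_1})\, dr_1,
\end{align*}
where the inner sum is exactly $I_A(\lambda q^{-r_1})$ by the definition of Fisher information for $X_{A,\lambda q^{-r_1}}$. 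Mild regularity (the Poissonized densities $\psi_{A,u}$ are smooth in $u$, and $\mathcal{M}$ is countable with rapidly decaying tails) justifies the interchange.

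Next I would use weak scale-invariance to strip the $\lambda$-dependence. Setting $g(s) := q^{2s}I_A(q^s)$, the identity $I_A(\mu) = q^2 I_A(q\mu)$ yields
\[
g(s+1) \;=\; q^{2s+2}\, I_A(q^{s+1}) \;=\; q^{2s+2}\cdot q^{-2}\, I_A(q^s) \;=\; g(s),
\]
so $g$ is $1$-periodic. Changing variables $s = \log_q \lambda - r_1$ (so $dr_1 = -ds$ and $q^{-2r_1}I_A(\lambda q^{-r_1}) = \lambda^{-2} g(s)$),
\[
\int_0^1 q^{-2r_1}\, I_A(\lambda q^{-r_1})\, dr_1 \;=\; \frac{1}{\lambda^2}\int_{\log_q\lambda - 1}^{\log_q \lambda} g(s)\, ds \;=\; \frac{1}{\lambda^2}\int_0^1 g(s)\, ds \;=\; \frac{1}{\lambda^2}\int_0^1 q^{2r} I_A(q^r)\, dr,
\]
where the middle equality uses that the integral of a $1$-periodic function over any unit interval is the same. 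Combining the two displays completes the proof. The only delicate point, which I would verify explicitly, is that $\psi_{A,u}(y_1)$ satisfies the regularity hypotheses of Section~\ref{sect:poisson_regular} uniformly in $u\in[\lambda/q,\lambda]$, so that the differentiation-under-the-integral step is legitimate.
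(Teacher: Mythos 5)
Your proof is correct and follows essentially the same route as the paper: differentiate through the substitution $u=\lambda q^{-r_1}$ to pull out the factor $q^{-2r_1}$, recognize the inner sum as $I_A(\lambda q^{-r_1})$, then use the $1$-periodicity of $g(s)=q^{2s}I_A(q^s)$ (a consequence of weak scale-invariance) and a change of variables to reduce the integral to $\frac{1}{\lambda^2}\int_0^1 g(s)\,ds$. The only cosmetic difference is that you make the substitution $s=\log_q\lambda - r_1$ explicit whereas the paper writes $g(-r+\log_q\lambda)$ directly; the content is identical.
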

\begin{proof}
We can calculate the Fisher information of $\lambda$ with respect to $(R_1,Y_1)$ as follows.
\begin{align}
I_{(R_1,Y_1)}(\lambda) &=\E\left(\frac{\frac{d}{d\lambda}f_\lambda(R_1,Y_1)}{f_\lambda(R_1,Y_1)}\right)^2
= \int_0^1\sum_{y_1\in\mathcal{M}}\left(\frac{\frac{d}{d\lambda }\psi_{A,\lambda q^{-r_1}}(y_1) }{\psi_{A,\lambda q^{-r_1}}(y_1)}\right)^2 \psi_{A,\lambda q^{-r_1}}(y_1) dr_1.\label{eqn:lem:smoothed-fisher-1}
\intertext{Let $r=r_1$ and $w=\lambda q^{-r}$.  
Then we have 
$$\frac{d}{d\lambda }\psi_{A,\lambda q^{-r}}(y_1) =\frac{dw}{d\lambda}\frac{d}{dw }\psi_{A,w}(y_1)=q^{-r}\frac{d}{dw }\psi_{A,w}(y_1).$$ 
Continuing, (\ref{eqn:lem:smoothed-fisher-1}) is equal to}
    &= \int_0^1q^{-2r}\sum_{y_1\in\mathcal{M}}\left(\frac{\frac{d}{dw }\psi_{A,w}(y_1) }{\psi_{A,w}(y_1)}\right)^2 \psi_{A,w}(y_1) dr\nonumber\\
    &= \int_0^1q^{-2r} I_A(w) dr
    \;\,=\; \int_0^1q^{-2r} I_A(\lambda q^{-r}) dr.\label{eqn:lem:smoothed-fisher-2}
\intertext{Let $g(x)=q^{2x} I_A(q^x)$. 
By the weak scale-invariance of $A$, 
we have $g(x+1)=g(x)$ for any $x\in\mathbb{R}$. Applying the definition of $g$, (\ref{eqn:lem:smoothed-fisher-2})
is equal to}
    &=\frac{1}{\lambda^2}\int_0^1 g(-r+\log_q \lambda) dr
    \;=\; \frac{1}{\lambda^2}\int_0^1 g(r) dr
    = \frac{1}{\lambda^2}\int_0^1 q^{2r} I_A(q^{r}) dr.\nonumber
\end{align}
\end{proof}

\begin{lemma}
Fix the unknown cardinality (parameter) $\lambda$.
The conditional entropy $H(Y_1\mid R_1)$ is equal to 
\begin{align*}
    \int_0^1 H(X_{A,q^{r}}) dr.
\end{align*}
\end{lemma}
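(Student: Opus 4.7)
The plan is to mimic the structure of the Fisher-information computation in Lemma~\ref{lem:smoothed-fisher}, but now using the definition of conditional entropy in place of the Fisher-information identity. First I would unwind the definition of $H(Y_1 \mid R_1)$ using the fact that $R_1$ is continuous uniform on $[0,1)$ and $Y_1 \mid R_1 = r_1 \sim \psi_{A,\lambda q^{-r_1}}$, obtaining
\begin{align*}
H(Y_1 \mid R_1) \;=\; \int_0^1 H(Y_1 \mid R_1 = r_1)\, dr_1 \;=\; \int_0^1 H(X_{A,\lambda q^{-r_1}})\, dr_1.
\end{align*}
This step is essentially bookkeeping; the only thing to note is that $H(Y_1\mid R_1=r_1)$ depends on $r_1$ only through the single-sketch entropy at cardinality $\lambda q^{-r_1}$.

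Next I would use the weak base-$q$ scale-invariance of $A$ from Definition~\ref{def:scaleinvariant}, which says $H(X_{A,\lambda}) = H(X_{A,q\lambda})$, to eliminate the dependence on $\lambda$. Define $h(x) = H(X_{A,q^x})$; then weak scale-invariance is precisely the statement that $h$ is $1$-periodic on $\mathbb{R}$. Rewriting the integrand gives
\begin{align*}
\int_0^1 H(X_{A,\lambda q^{-r_1}})\, dr_1 \;=\; \int_0^1 h(\log_q \lambda - r_1)\, dr_1.
\end{align*}
Making the substitution $s = \log_q \lambda - r_1$ and using the $1$-periodicity of $h$ to shift the integration interval to $[0,1]$, the integral becomes $\int_0^1 h(s)\, ds = \int_0^1 H(X_{A,q^s})\, ds$, which is the claimed identity.

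I don't expect any serious obstacle: the argument is essentially identical to the Fisher-information case but easier, because entropy has no chain-rule factor analogous to the $(dw/d\lambda)^2 = q^{-2r}$ that introduced the $q^{2r}$ weight in Lemma~\ref{lem:smoothed-fisher}. The only subtlety worth checking is that swapping a sum over $\mathcal{M}$ with an integral over $[0,1)$ in the definition of conditional entropy is legitimate; this follows from Tonelli's theorem since the integrand $-\psi_{A,\lambda q^{-r_1}}(y)\log_2 \psi_{A,\lambda q^{-r_1}}(y)$ is nonnegative. After that the periodicity argument is just a change of variables, so the lemma follows immediately.
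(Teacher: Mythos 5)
Your proof is correct and follows essentially the same route as the paper: unwind the definition of conditional entropy, express the integrand via the periodic function $g(x)=H(X_{A,q^x})$, and shift the integration interval using weak scale-invariance. The only addition is the (reasonable) Tonelli remark, which the paper leaves implicit.
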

\begin{proof}
By the definition of the conditional entropy, we have
\[
    H(Y_1\mid R_1)
    =\int_0^1 H(Y_1\mid r) dr
    = \int_0^1 H(X_{A,\lambda q^{-r}}) dr.
\]
Let $g(x)=H(X_{A,q^{x}})$. By the weak scale-invariance of $A$, 
we know that $g(x)=g(x+1)$ for any $x\in\mathbb{R}$. Thus, we conclude that
\begin{align*}
\int_0^1 H(X_{A,\lambda q^{-r}}) dr
&= \int_0^1 g(-r+\log_q\lambda) dr
= \int_0^1 g(r) d r
= \int_0^1 H(X_{A,q^{r}}) dr.
\end{align*}
\end{proof}

In conclusion, with random offsetting we can transform any weakly 
scale-invariant sketch $A$ 
so that the product of the memory and the relative variance is
\[
    \frac{\int_0^1 H(X_{A,q^{r}}) dr}{\lambda^2\cdot \frac{1}{\lambda^2}\int_0^1 q^{2r} I_{A}(q^{r}) dr}
    =
    \frac{\int_0^1 H(X_{A,q^{r}}) dr}{\int_0^1 q^{2r} I_{A}(q^{r}) dr},
\]
and hence independent of the cardinality $\lambda$.

\subsection{The $\fish$ Number of a Sketch}\label{sect:fish}

Let $A_q$ be a weakly scale-invariant sketch with base $q$.
The \emph{Fisher-Shannon} ($\fish$) \emph{number} of $A_q$ captures 
the maximum performance we can potentially 
extract out of $A_q$, after applying
random offsets (Section~\ref{sect:randomoffsets}), optimal estimators (Section~\ref{sect:Fisher}), and compression to the entropy bound (Section~\ref{sect:Shannon}), as $m\rightarrow \infty$. In particular, any sketch composed of independent copies of $A_q$ with standard
error $\frac{1}{\sqrt{b}}$ 
must use at least $\fish(A_q)\cdot b$ bits. 
Thus, smaller
$\fish$-numbers are better.

\begin{definition}\label{def:fish}
Let $A_q$ be a weakly scale-invariant sketch with base $q$. 
The \emph{$\fish$ number} of $A_q$ is defined to be
$\fish(A_q) \bydef \mathcal{H}(A_q)/\mathcal{I}(A_q)$,
where
\[ \mathcal{H}(A_q)\bydef\int_0^1 H(X_{A_q,q^{r}}) dr
\text{ \ and \ } 
\mathcal{I}(A_q)\bydef \int_0^1 q^{2r} I_{A_q}(q^{r}) dr.
\]
\end{definition}

\section{$\fish$ Numbers of $\PCSA$ and $\LL$}\label{sect:fish-numbers}

In this section, we will find the $\fish$ numbers of 
base-$q$ 
generalizations of $\PCSA$~\cite{FlajoletM85}
and (\textsf{Hyper})\textsf{LogLog}~\cite{DurandF03,FlajoletFGM07}.
The results are expressed in terms of two important constants, $H_0$ and $I_0$.

\begin{definition}\label{def:H_0I_0}
Let $h(x)=-x\ln x-(1-x)\ln (1-x)$ and $g(x)=\frac{x^2}{e^x-1}$. We define
\begin{align*}
    H_0\bydef \frac{1}{\ln 2} \cdot \int_{-\infty}^{\infty}  h\left(e^{-e^{w}}\right)dw
    \quad\text{ and }\quad 
    I_0\bydef \int_{-\infty}^\infty   g\left(e^{w}\right) dw.
\end{align*}
\end{definition}

Lemma~\ref{lem:H0I0} derives simplified expressions for $H_0$ and $I_0$.
All missing proofs from this section 
appear in clearly marked subsections of 
Appendix~\ref{sect:proofs}.

\begin{lemma}\label{lem:H0I0}
\begin{align*}
    H_0 = \frac{1}{\ln 2} + \sum_{k=1}^\infty\frac{1}{k}\log_2 \left(1+1/k\right),
    \quad\text{ and }\quad
    I_0 = \zeta(2) = \frac{\pi^2}{6},
\end{align*}
where $\zeta(s)=\sum_{n=1}^\infty \frac{1}{n^s}$ is the Riemann zeta function.
\end{lemma}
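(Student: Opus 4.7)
The plan is to evaluate both integrals by a single change of variables $u = e^w$ (so $dw = du/u$) and then recognize the resulting expressions via standard series identities.

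For $I_0$, the substitution gives
\[
I_0 = \int_0^\infty \frac{u^2}{e^u-1}\cdot\frac{du}{u} = \int_0^\infty \frac{u}{e^u-1}\,du.
\]
This is the classical integral representation of $\zeta(2)\Gamma(2)$. To derive it without invoking the closed form, I would expand $\frac{1}{e^u-1} = \sum_{k\geq 1} e^{-ku}$ (valid for $u>0$), interchange sum and integral (justified by monotone convergence since each summand is nonneg), and use $\int_0^\infty u\,e^{-ku}\,du = 1/k^2$ to obtain $\sum_{k\geq 1} k^{-2} = \zeta(2) = \pi^2/6$.

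For $H_0$, after the same change of variables I get
\[
\int_{-\infty}^\infty h(e^{-e^w})\,dw = \int_0^\infty \frac{h(e^{-u})}{u}\,du,
\]
and since $h(x) = -x\ln x - (1-x)\ln(1-x)$, I would split
\[
\frac{h(e^{-u})}{u} = e^{-u} \;-\; \frac{(1-e^{-u})\ln(1-e^{-u})}{u}.
\]
The first piece integrates trivially to $1$. For the second piece, I would expand $-\ln(1-e^{-u}) = \sum_{k\geq 1} e^{-ku}/k$, multiply by $(1-e^{-u})$ to get $\sum_{k\geq 1} (e^{-ku}-e^{-(k+1)u})/k$, and apply the Frullani identity $\int_0^\infty (e^{-au}-e^{-bu})/u\,du = \ln(b/a)$ term-by-term. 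This yields
\[
\int_0^\infty \frac{h(e^{-u})}{u}\,du = 1 + \sum_{k=1}^\infty \frac{\ln(1+1/k)}{k},
\]
and dividing by $\ln 2$ recovers the claimed formula for $H_0$.

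The main technical obstacle is justifying the two term-by-term interchanges in the $H_0$ computation, since the Frullani summands are not all nonnegative in isolation and the logarithm $\ln(1-e^{-u})$ is singular at $u=0$. I would handle this either by truncating the sum at $N$ and passing to the limit with a uniform integrability argument near $u=0$ (using $(1-e^{-u})\ln(1-e^{-u}) = O(u\ln u)$) and near $u=\infty$ (where the tail decays exponentially), or equivalently by observing that $(1-e^{-u})\sum_{k\geq 1} e^{-ku}/k$ converges absolutely and invoking Fubini for the positive measure $\frac{du}{u}$ against the nonneg summands $(e^{-ku}-e^{-(k+1)u})/k$, which are indeed each nonneg on $(0,\infty)$. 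Once the interchange is in place, Frullani delivers $\ln(1+1/k)$ and the result follows.
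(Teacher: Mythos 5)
Your proposal is correct and follows essentially the same route as the paper: the substitution $u=e^w$, splitting $h(e^{-u})/u$ into $e^{-u}$ plus the $-(1-e^{-u})\ln(1-e^{-u})/u$ term, Taylor-expanding the logarithm, and applying the Frullani identity $\int_0^\infty (e^{-au}-e^{-bu})\,du/u = \ln(b/a)$ (which the paper isolates as Lemma~\ref{lem:int_id}). The only cosmetic differences are that you derive the $I_0$ integral from the geometric-series expansion of $1/(e^u-1)$ rather than citing the standard $\zeta(2)\Gamma(2)$ representation, and you spell out the nonnegativity/Tonelli justification for the interchange, both of which the paper leaves implicit.
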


\subsection{The $\fish$ Numbers of $\qPCSA$ Sketches}

In the discrete counting process, a natural base-$q$ generalization of $\PCSA$ ($\qPCSA$) maintains a bit vector 
$\mathbf{b}=(b_k)_{k\in\mathbb{N}}$ where $\Pr(b_i = 0) = (1-q^{-i})^\lambda \approx e^{-\lambda/q_i}$ 
after processing a multiset with cardinality $\lambda$. 
The easiest way to effect this, conceptually, is to interpret $h(a)$ as
a sequence $\mathbf{x}\in\{0,1\}^\infty$ of bits,\footnote{If we are interested
in cardinalities $\ll U$, we would truncate the hash at $\log U$ bits.} 
then update $\mathbf{b}\leftarrow \mathbf{b}\vee \mathbf{x}$, where $\vee$ is bit-wise OR.
Before Poissonization $\Pr(x_i=1)=q^{-i}$
and $\mathbf{b}$ has weight 1, while after 
Poissonization $\Pr(x_i=1)=1-e^{-q^{-i}}$ 
and $\mathbf{x}$ has weight distributed as $\Poisson(1)$.
Thus, after Poissonization and at time $\lambda$,
\begin{enumerate}
    \item The probability that the $i$th bit of $\mathbf{b}$ 
    is zero is \emph{exactly} 
    $\Pr(b_i = 0) = e^{-\lambda/q^i}$.
    \item All bits of the sketch $\mathbf{b}$ are independent.
\end{enumerate}
Since we are concerned with the \emph{asymptotic} behavior of the sketch when $\lambda\rightarrow \infty$ we also assume that the domain of the sketch $\mathbf{b}$ is extended 
from $\mathbb{N}$ 
to $\mathbb{Z}$, e.g., together with
    Poissonization, we have $\Pr(b_{-5}=0)=e^{-q^5\lambda}$.
The resulting abstract sketch 
is strictly weakly scale-invariant with base $q$, according to Definition~\ref{def:scaleinvariant}.

\begin{definition}[Induced distribution of $q$-$\PCSA$ Sketches]
For any base $q>1$, the state space\footnote{Strictly speaking the
state space is not countable.  However, it suffices to consider
only states with finite Hamming weight.} of $\qPCSA$
$\mathcal{M}_{\PCSA}=\{0,1\}^\mathbb{Z}$
and the induced distribution for cardinality $\lambda$ is 
\begin{align*}
    \psi_{\qPCSA,\lambda}(\mathbf{b}) = \prod_{k=-\infty}^\infty e^{-\frac{\lambda(1-b_k)}{q^k}} (1-e^{-\frac{\lambda}{q^k}})^{b_k}.
\end{align*}
\end{definition}

\begin{theorem}\label{thm:qPCSA}
For any $q>1$, $\qPCSA$ is weakly scale-invariant with base $q$. Furthermore, we have 
\begin{align*}
    \mathcal{H}(\qPCSA)=\frac{H_0}{\ln q}\quad \text{ and }\quad
    \mathcal{I}(\qPCSA)=\frac{I_0}{\ln q}
\quad\text{ \ \ and hence \ \ }\quad
    \fish(\qPCSA) = \frac{H_0}{I_0}
    \approx 1.98016.
\end{align*}
\end{theorem}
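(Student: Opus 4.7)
The plan is to separately verify weak scale-invariance, then compute $\mathcal{H}(\qPCSA)$ and $\mathcal{I}(\qPCSA)$ by exploiting the fact that the bits $b_k$ of a $\qPCSA$ sketch are (after Poissonization) mutually independent Bernoullis, then apply Fubini to collapse the sum over $k\in\mathbb{Z}$ and the integral $\int_0^1\cdot\,dr$ into a single integral over $\mathbb{R}$.

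\paragraph{Step 1 (scale-invariance).}
First I would observe that the probability bit $b_k$ equals $0$ at cardinality $\lambda$ is $e^{-\lambda/q^k}$, and at cardinality $q\lambda$ is $e^{-q\lambda/q^k}=e^{-\lambda/q^{k-1}}$. Hence the law of the full sketch at $q\lambda$ is exactly the law at $\lambda$ shifted by one index, which immediately gives $H(X_{A,q\lambda})=H(X_{A,\lambda})$. For the Fisher information, by independence of bits and the chain rule $I_A(\lambda)=\sum_{k\in\mathbb{Z}} I_{b_k}(\lambda)$. Writing $p_k(\lambda)=e^{-\lambda/q^k}$ and using the Bernoulli Fisher-information formula $I_{b_k}(\lambda)=(p_k'(\lambda))^2/(p_k(\lambda)(1-p_k(\lambda)))$, one obtains
\[
I_A(\lambda)\;=\;\sum_{k\in\mathbb{Z}}\frac{1}{q^{2k}}\cdot\frac{e^{-\lambda/q^k}}{1-e^{-\lambda/q^k}}.
\]
Re-indexing $k\mapsto k+1$ then yields $I_A(q\lambda)=q^{-2}I_A(\lambda)$, confirming Definition~\ref{def:scaleinvariant}.

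\paragraph{Step 2 (computing $\mathcal{H}$ and $\mathcal{I}$).}
By independence, $H(X_{A,\lambda})=\frac{1}{\ln 2}\sum_{k\in\mathbb{Z}} h(e^{-\lambda/q^k})$ with $h$ as in Definition~\ref{def:H_0I_0}. Plugging into Definition~\ref{def:fish} and swapping $\sum_k$ with $\int_0^1\cdot\,dr$ (justified by Tonelli, as the integrand is non-negative),
\[
\mathcal{H}(\qPCSA)\;=\;\frac{1}{\ln 2}\sum_{k\in\mathbb{Z}}\int_0^1 h\!\left(e^{-q^{r-k}}\right)dr.
\]
The change of variable $s=r-k$ turns the $k$th summand into $\int_{-k}^{-k+1}h(e^{-q^s})\,ds$; summing over $k\in\mathbb{Z}$ tiles $\mathbb{R}$, so the double sum/integral collapses to $\int_{-\infty}^{\infty}h(e^{-q^s})\,ds$. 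A final substitution $w=s\ln q$ recovers exactly $H_0/\ln q$. The analogous calculation for $\mathcal{I}$ begins with
\[
q^{2r}I_A(q^r)\;=\;\sum_{k\in\mathbb{Z}} q^{2(r-k)}\cdot\frac{e^{-q^{r-k}}}{1-e^{-q^{r-k}}}\;=\;\sum_{k\in\mathbb{Z}} g\!\left(q^{r-k}\right),
\]
using $g(x)=x^2/(e^x-1)$, and then performs the same $s=r-k$ tiling followed by $w=s\ln q$ to obtain $\mathcal{I}(\qPCSA)=I_0/\ln q$. Dividing yields $\fish(\qPCSA)=H_0/I_0$, and the $\ln q$ factors cancel, so the result is independent of $q$.

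\paragraph{Main obstacle.}
The step that requires some care is the interchange of $\sum_{k\in\mathbb{Z}}$ and $\int_0^1$ together with the tiling of $\mathbb{R}$ by $[-k,-k+1]$. Both integrands $h(e^{-q^s})$ and $g(q^s)$ are non-negative, so Tonelli applies unconditionally; the genuine issue is just verifying integrability at both ends. As $s\to+\infty$, $e^{-q^s}\to 0$ and $h(e^{-q^s})$ decays doubly-exponentially, while $g(q^s)=q^{2s}/(e^{q^s}-1)$ decays similarly; as $s\to-\infty$, $h(e^{-q^s})=h(1-q^s+O(q^{2s}))=O(q^s\log(1/q^s))$ and $g(q^s)\sim q^s$ both decay exponentially. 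Hence both integrals over $\mathbb{R}$ converge absolutely, and matching against Definition~\ref{def:H_0I_0} after the substitution $w=s\ln q$ is then purely mechanical.
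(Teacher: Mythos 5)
Your proof is correct and follows essentially the same route as the paper's: use independence of the bits, re-index to establish weak scale-invariance, and then swap $\sum_k$ with $\int_0^1$ to tile $\mathbb{R}$ before the substitution $w=s\ln q$. The only addition on your end is the explicit Tonelli/integrability discussion, which the paper leaves implicit.
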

\begin{proof}
Let $\lambda$ be the unknown cardinality (the parameter) and 
$X_{\qPCSA,\lambda}=(Z_{\lambda,k})_{k\in\mathbb{Z}}\in\{0,1\}^{\mathbb{Z}}$ 
be the final state of the bit-vector.
For each $k$, $Z_{\lambda,k}$ is a Bernoulli random variable with 
probability mass function $f_{\lambda,k}(b_k)=e^{-\frac{\lambda(1-b_k)}{q^k}} (1-e^{-\frac{\lambda}{q^k}})^{b_k}$. 
Recall that $h(x)=-x\ln x-(1-x)\ln (1-x)$. 
Since the $\{Z_{\lambda,k}\}$ are independent, we have
\begin{align*}
    H(X_{\qPCSA,\lambda}) 
    &= \sum_{k=-\infty}^\infty H(Z_{\lambda,k})
    \;=\; \frac{1}{\ln 2}\sum_{k=-\infty}^\infty h\left(e^{-\frac{\lambda}{q^k}}\right)
    \;=\; \frac{1}{\ln 2}\sum_{k=-\infty}^\infty h\left(e^{-\frac{q\lambda}{q^k}}\right)
    \;=\; 
    H(X_{\qPCSA,q\lambda}),
\end{align*}
meaning $\qPCSA$ satisfies the first criterion of weak scale-invariance.
We now turn to the second criterion regarding Fisher information.

Let $g(x)=\frac{x^2e^{-2x}}{e^{-x}}+\frac{x^2e^{-2x}}{1-e^{-x}}
=\frac{x^2}{e^{x}-1}$. 
Observe that the Fisher information of $\lambda$ with respect to the observation $Z_{\lambda,k}$ (i.e., $I_{Z_{\lambda,k}}(\lambda)$)
is equal to
\begin{align*}
  \E\left(\frac{\frac{d}{d\lambda}f_{\lambda,k}(Z_{\lambda,k})}{f_{\lambda,k}(Z_{\lambda,k})}\right)^2=&\frac{\left(\frac{d}{d\lambda}(1-e^{-\frac{\lambda}{q^k}})\right)^2}{1-e^{-\frac{\lambda}{q^k}}}+ \frac{\left(\frac{d}{d\lambda}e^{-\frac{\lambda}{q^k}}\right)^2}{e^{-\frac{\lambda}{q^k}}}\\
  =& \frac{\left(\frac{1}{q^k}e^{-\frac{\lambda}{q^k}}\right)^2}{1-e^{-\frac{\lambda}{q^k}}}+ \frac{\left(\frac{1}{q^k}e^{-\frac{\lambda}{q^k}}\right)^2}{e^{-\frac{\lambda}{q^k}}} = \frac{1}{\lambda^2} g\left(\frac{\lambda}{q^k}\right).
\end{align*}

Since the $\{Z_{\lambda,k}\}$ are independent, we have
\begin{align*}
   I_{\qPCSA}(\lambda)
   &= \sum_{k=-\infty}^\infty \frac{1}{\lambda^2} g\left(\frac{\lambda}{q^k}\right)
   \;=\; q^2\sum_{k=-\infty}^\infty \frac{1}{q^2\lambda^2} g\left(\frac{q\lambda}{q^k}\right)
   = q^2I_{\qPCSA}(q\lambda).
\end{align*}
We conclude that $\qPCSA$ is weakly scale-invariant with base $q$.
Now we compute the $\mathcal{H}(\qPCSA)$ and $\mathcal{I}(\qPCSA)$.
\begin{align*}
    \mathcal{H}(\qPCSA) = \int_0^1 H(X_{\qPCSA,q^r})dr
    &= \frac{1}{\ln 2}\int_0^1 \sum_{k=-\infty}^\infty h\left(e^{-\frac{q^r}{q^k}}\right)dr
    = \frac{1}{\ln 2}\sum_{k=-\infty}^\infty\int_0^1  h\left(e^{-e^{(r-k)\ln q}}\right)dr\\
    &= \frac{1}{\ln 2}\sum_{k=-\infty}^\infty\int_{-k}^{1-k}  h\left(e^{-e^{r\ln q}}\right)dr
    = \frac{1}{\ln 2}\int_{-\infty}^{\infty}  h\left(e^{-e^{r\ln q}}\right)dr\\
    &= \frac{1}{\ln 2}\cdot \frac{1}{\ln q}\int_{-\infty}^{\infty}  h\left(e^{-e^{w}}\right)dw \;\,=\,\; \frac{H_0}{\ln q}.
\end{align*}
The final line uses the change of variable $w=r\ln q$.
We use similar techniques to calculate the normalized information $\mathcal{I}(\qPCSA)$.
\begin{align*}
    \mathcal{I}(\qPCSA) &= \int_0^1 q^{2r} I_{\qPCSA}(q^{r}) dr
    =\int_0^1 q^{2r} \sum_{k=-\infty}^\infty \frac{1}{q^{2r}}g(q^{r-k}) dr
    =\sum_{k=-\infty}^\infty\int_0^1   g(q^{r-k}) dr\\
    &=\int_{-\infty}^\infty   g(q^{r}) dr
    \;=\; \frac{1}{\ln q}\int_{-\infty}^\infty   g(e^{w}) dw
    \;=\; \frac{I_0}{\ln q}.
\end{align*}
\end{proof}

\subsection{The $\fish$ Numbers of $q$-\textsf{LogLog} Sketches}\label{sect:qll_def}

In a discrete counting process, the natural base-$q$ generalization of 
the (\textsf{Hyper})\textsf{LogLog} sketch ($\qLL$)
works as follows.  Let $Y = \min_{a\in\mathcal{A}} h(a) \in [0,1]$ be the minimum hash value seen.
The $\qLL$ sketch stores the integer $S = \floor{-\log_q Y}$, so when the cardinality is $\lambda$,
\[
    \Pr(S=k) 
    = \Pr(q^{-k}\leq Y <q^{-k+1})
    = (1-q^{-k})^\lambda - (1-q^{-k+1})^\lambda \approx e^{-\lambda/q^k} - e^{-\lambda/q^{k-1}}.
\]

Once again the state space of this sketch is $\mathbb{N}$ but to show weak scale-invariance
it is useful to extend it to $\mathbb{Z}$. Together with Poissonization, we have the following. 
\begin{enumerate}
    \item $\Pr(S=k)$ is \emph{precisely} $e^{-\lambda/q^k}-e^{-\lambda/q^{k-1}}$.
    \item The state space is $\mathbb{Z}$, e.g., together with (1) we have $\Pr(S=-1)=e^{-q\lambda}-e^{-q^2\lambda}$.
\end{enumerate}

\begin{definition}[Induced distribution family of $\qLL$ sketches]
For any base $q>1$, the state space of $\qLL$ is $\mathcal{M}_{\LL} = \mathbb{Z}$ 
and the induced distribution for cardinality $\lambda$ is
\begin{align*}
    \psi_{\qLL,\lambda}(k)=e^{-\lambda/q^k}-e^{-\lambda/q^{k-1}}.
\end{align*}
\end{definition} 

In Lemma~\ref{lem:HLL} we express the $\fish$ number of $\qLL$
in terms of two quantities $\phi(q)$ and $\rho(q)$, defined as follows.

\begin{definition}
\begin{align*}
    \phi(q)\bydef& \int_{-\infty}^\infty -(e^{-e^{r}}-e^{-e^{r}q})\log_2 (e^{-e^{r}}-e^{-e^{r}q})dr.\\
    \rho(q)\bydef&  \int_{-\infty}^{\infty} \frac{(-e^{r}e^{-e^{r}}+e^{r}qe^{-e^{r}q})^2}{e^{-e^{r}}-e^{-e^{r}q}}dr.
\end{align*}
\end{definition}

Lemma~\ref{lem:HLL_tool} gives simplified expressions for $\phi(q)$ and $\rho(q)$.
See Appendix~\ref{sect:proofs} for proof.

\begin{lemma}\label{lem:HLL_tool}
Let $\zeta(s,t)=\sum_{k\ge 0} (k+t)^{-s}$ be the 
Hurwitz zeta function.  Then $\phi$ and $\rho$ can be expressed as:
\begin{align*}
    \phi(q) &= \frac{1-1/q}{\ln 2} + \sum_{k=1}^\infty\frac{1}{k}\log_2 \left(\frac{k+\frac{1}{q-1}+1}{k+\frac{1}{q-1}}\right).\\
    \rho(q) &= \zeta\left(2,\frac{q}{q-1}\right)  = \sum_{k=0}^\infty\frac{1}{(k+\frac{q}{q-1})^2}.
\end{align*}
\end{lemma}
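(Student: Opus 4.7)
The plan is to make the same two changes of variable in both integrals, then handle the two statements with separate techniques: for $\rho(q)$ a short polynomial-division trick reduces everything to a Hurwitz zeta integral, while for $\phi(q)$ we expand $\ln(1-e^{-v})$ as a series and use Frullani integrals.

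\textbf{Common setup.} In both $\phi(q)$ and $\rho(q)$ I would substitute $u=e^{r}$ (so $dr=du/u$), then $v=(q-1)u$. The factor $e^{-u}-e^{-uq}$ becomes $e^{-v/(q-1)}(1-e^{-v})$, which is a clean product of a ``shape'' factor $(1-e^{-v})$ and an ``exponential tail'' factor $e^{-v/(q-1)}$. Writing $\alpha=1/(q-1)$ for brevity, the aim is to compare with the Hurwitz zeta integral
\[ \zeta(2,\alpha)=\int_0^\infty \frac{t\,e^{-\alpha t}}{1-e^{-t}}\,dt, \]
together with the shift identity $\zeta(2,\alpha)=\zeta(2,\alpha+1)+1/\alpha^{2}$.

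\textbf{Proof of the formula for $\rho(q)$.} After the substitutions, a direct computation gives
\[ \rho(q) = \frac{1}{(q-1)^{2}}\int_0^\infty \frac{v\,e^{-v/(q-1)}\,(qe^{-v}-1)^{2}}{1-e^{-v}}\,dv. \]
Writing $qe^{-v}-1=-q(1-e^{-v})+(q-1)$ and squaring, polynomial division yields
\[ \frac{(qe^{-v}-1)^{2}}{1-e^{-v}} = q^{2}(1-e^{-v})-2q(q-1)+\frac{(q-1)^{2}}{1-e^{-v}}. \]
The first two pieces give elementary integrals $\int v e^{-v/(q-1)}(1-e^{-v})dv=(q-1)^{3}(q+1)/q^{2}$ and $\int v e^{-v/(q-1)}dv=(q-1)^{2}$; the third gives $(q-1)^{2}\zeta(2,\alpha)=(q-1)^{2}\bigl(\zeta(2,q/(q-1))+(q-1)^{2}\bigr)$. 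Summing, the algebraic terms cancel completely and one is left with $\rho(q)=\zeta(2,q/(q-1))$, as required.

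\textbf{Proof of the formula for $\phi(q)$.} The same substitutions turn $\phi(q)$ into
\[ \phi(q)=\frac{1}{(q-1)\ln 2}\int_0^\infty e^{-v/(q-1)}(1-e^{-v})\,dv \;-\;\frac{1}{\ln 2}\int_0^\infty \frac{e^{-v/(q-1)}(1-e^{-v})\ln(1-e^{-v})}{v}\,dv. \]
The first integral equals $(q-1)^{2}/q$, producing the ``$(1-1/q)/\ln 2$'' term. For the second integral, expand $\ln(1-e^{-v})=-\sum_{j\ge1}e^{-jv}/j$ and simplify to
\[ (1-e^{-v})\ln(1-e^{-v}) = -e^{-v}+\sum_{k\ge 2}\frac{e^{-kv}}{k(k-1)}. \]
Since $\sum_{k\ge 2}1/(k(k-1))=1$, the bracketed quantity vanishes at $v=0$; I would exploit this by rewriting it as $(1-e^{-v})+\sum_{k\ge 2}(e^{-kv}-1)/(k(k-1))$ so that every resulting term, once divided by $v$ and multiplied by $e^{-v/(q-1)}$, becomes an absolutely convergent Frullani integrand. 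Applying Frullani's identity $\int_0^\infty (e^{-av}-e^{-bv})/v\,dv=\ln(b/a)$ term by term gives
\[ \int_0^\infty \frac{e^{-v/(q-1)}(1-e^{-v})\ln(1-e^{-v})}{v}\,dv = \ln q - \ln(q-1) - \sum_{k\ge 2}\frac{\ln(k+\alpha)}{k(k-1)}. \]
Finally, write $1/(k(k-1))=1/(k-1)-1/k$, reindex $j=k-1$ in the $1/(k-1)$ piece, and combine the ``boundary'' $\log_2((q-1)/q)=-\log_2(1+\alpha)$ with the resulting $k=1$ term $\log_2(2+\alpha)$ to obtain $\log_2((2+\alpha)/(1+\alpha))$. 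This produces exactly the target series $\sum_{k\ge 1}\frac{1}{k}\log_2\!\bigl((k+\alpha+1)/(k+\alpha)\bigr)$.

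\textbf{Main obstacle.} The delicate point is the second integral for $\phi(q)$. A naive term-by-term split of $-e^{-v}+\sum_{k}e^{-kv}/(k(k-1))$ produces divergent $\int e^{-av}/v$ integrals near $v=0$. The fix is to shift each exponential by its $v=0$ value so every integrand is of the convergent Frullani form; justifying the interchange of sum and integral then needs a monotone/dominated convergence argument using $\sum 1/(k(k-1))<\infty$ and the fact that the full integrand is bounded by $|\ln v|$ near $0$. Everything else is essentially bookkeeping and reindexing of telescoping sums.
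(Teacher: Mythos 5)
Your proposal is correct, and for $\rho(q)$ it is essentially the same as the paper's argument: after the logarithmic change of variable (with your extra scaling $v=(q-1)w$, a cosmetic difference), both proofs perform the same polynomial division, cancel the resulting algebraic terms against the Hurwitz-zeta shift identity, and finish. For $\phi(q)$ both proofs use the same two ingredients, namely a Taylor expansion of a logarithm together with the Frullani identity $\int_0^\infty(e^{-ax}-e^{-bx})/x\,dx=\ln(b/a)$ (the paper's Lemma~\ref{lem:int_id}), but the grouping differs. The paper keeps the factor $e^{-w}-e^{-qw}$ intact and expands only $-\ln(1-e^{-(q-1)w})=\sum_{k\ge1}e^{-k(q-1)w}/k$; every term of the resulting product series is then already a convergent Frullani integrand $\bigl(e^{-(1+k(q-1))w}-e^{-(q+k(q-1))w}\bigr)/w$, and the target sum drops out directly with no reindexing. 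You instead multiply $(1-e^{-v})$ through the log series first, arriving at $-e^{-v}+\sum_{k\ge2}e^{-kv}/(k(k-1))$, which makes each Frullani piece individually divergent near $v=0$ and forces the ``re-insert a constant'' repair and the subsequent telescoping of $1/(k(k-1))=1/(k-1)-1/k$ to recover the target series. Both routes are valid, but the obstacle you correctly flag as delicate is an artifact of your decomposition rather than of the problem; the paper's grouping avoids it entirely.
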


Refer to Appendix~\ref{sect:proofs} for proof
of Lemma~\ref{lem:HLL}.
\begin{lemma}\label{lem:HLL}
For any $q>1$, $\qLL$ is weakly scale-invariant with base $q$. Furthermore, we have
\begin{align*}
    \mathcal{H}(\qLL)=\frac{\phi(q)}{\ln q}\quad \text{ and }\quad
    \mathcal{I}(\qLL)= \frac{\rho(q)}{\ln q}.
\end{align*}
\end{lemma}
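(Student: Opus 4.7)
The plan is to mirror Theorem~\ref{thm:qPCSA} almost verbatim, but with the product-of-Bernoullis structure of the bit-vector replaced by the single-coordinate distribution $\psi_{\qLL,\lambda}(k) = e^{-\lambda/q^k} - e^{-\lambda/q^{k-1}}$. All three tasks (weak scale-invariance, evaluation of $\mathcal{H}$, evaluation of $\mathcal{I}$) reduce to the same telescoping trick: a sum over $k \in \mathbb{Z}$ composed with an integral over $r \in [0,1]$ collapses into a single integral over $w \in \mathbb{R}$ under the substitution $w = (r-k)\ln q$, whose Jacobian supplies the ubiquitous factor $1/\ln q$.

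First I would verify weak scale-invariance. The key algebraic identity is $\psi_{\qLL,q\lambda}(k) = e^{-\lambda/q^{k-1}} - e^{-\lambda/q^{k-2}} = \psi_{\qLL,\lambda}(k-1)$, obtained by direct substitution. Reindexing $k \mapsto k+1$ immediately gives $H(X_{\qLL,q\lambda}) = H(X_{\qLL,\lambda})$. For the Fisher criterion, differentiating the IDF yields $\partial_\lambda \psi_{\qLL,\lambda}(k) = q^{-k}(-e^{-\lambda/q^k} + q\, e^{-\lambda/q^{k-1}})$; squaring, dividing by $\psi_{\qLL,\lambda}(k)$, summing over $k$, and applying the same reindexing produces $I_{\qLL}(q\lambda) = q^{-2} I_{\qLL}(\lambda)$, where the $q^{-2}$ arises from the squared prefactor $q^{-k}$.

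Next I would compute the two integrals. For $\mathcal{H}(\qLL) = \int_0^1 H(X_{\qLL,q^r})\, dr$, substitute $\lambda = q^r$ into the IDF so that the summand at index $k$ depends on $r$ only through $w = (r-k)\ln q$ and reads $e^{-e^w} - e^{-qe^w}$. Swapping sum and integral, and applying the substitution in each term, tiles $\mathbb{R}$ by the intervals $[-k\ln q,(1-k)\ln q]$ and collapses the double sum into $\phi(q)/\ln q$. The calculation for $\mathcal{I}(\qLL)$ is analogous: the contribution at index $k$ in the Fisher sum, after multiplication by the weight $q^{2r}$, equals $q^{2(r-k)}(-e^{-e^w}+qe^{-qe^w})^2/(e^{-e^w}-e^{-qe^w}) = e^{2w}(-e^{-e^w}+qe^{-qe^w})^2/(e^{-e^w}-e^{-qe^w})$, which, after absorbing $e^{2w}$ into the squared numerator and applying the same tiling substitution, is exactly $\rho(q)/\ln q$.

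The substantive difficulty will be the analytic justification of differentiation-under-the-sum (needed to decompose $I_{\qLL}(\lambda)$ as a sum of componentwise Fisher contributions), together with the Fubini-style swap of the sum over $k \in \mathbb{Z}$ with the $r$-integral. Once the state space is extended from $\mathbb{N}$ to $\mathbb{Z}$, the tail terms at $k \to -\infty$ are differences of two doubly-exponentially small quantities and must be controlled uniformly in $\lambda$ over compact intervals. The cleanest approach is to bound $\psi_{\qLL,\lambda}(k)$ and its derivative by explicit majorants --- doubly-exponentially decaying as $k \to -\infty$ and geometrically decaying as $k \to +\infty$ --- so that dominated convergence applies. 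Everything else is routine bookkeeping paralleling the $\qPCSA$ computation in Theorem~\ref{thm:qPCSA}.
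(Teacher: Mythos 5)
Your proposal is correct and follows essentially the same route as the paper's proof in Appendix~\ref{sect:proofs}: reindex $k\mapsto k+1$ to get weak scale-invariance from $\psi_{\qLL,q\lambda}(k)=\psi_{\qLL,\lambda}(k-1)$, then swap the sum over $k\in\mathbb{Z}$ with the $r$-integral and tile $\mathbb{R}$ via $w=(r-k)\ln q$ to collapse $\mathcal{H}$ and $\mathcal{I}$ into $\phi(q)/\ln q$ and $\rho(q)/\ln q$. The only difference is that you flag the dominated-convergence justification of the Fubini swap and of differentiating under the sum, which the paper leaves implicit; that is a harmless extra bit of rigor, not a divergence in method.
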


\begin{theorem}\label{thm:fish-qLL}
For any $q>1$, the $\fish$ number of $\qLL$ is
\begin{align*}
    \fish(\qLL) &> \frac{H_0}{I_0}.
\intertext{Furthermore, we have}
    \lim_{q\to\infty} \fish(\qLL) &= \frac{H_0}{I_0}.
\end{align*}
\end{theorem}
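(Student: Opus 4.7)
The proof splits naturally into two halves which I would handle by different methods.

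\textbf{The limit.} Substitute $t = 1/(q-1)$ in Lemma~\ref{lem:HLL_tool}:
$$\phi(q) = \frac{1}{(1+t)\ln 2} + \frac{1}{\ln 2}\sum_{k\geq 1}\frac{1}{k}\ln\frac{k+1+t}{k+t}, \qquad \rho(q) = \sum_{j\geq 1}(j+t)^{-2}.$$
Each series term is dominated by a summable $O(1/k^2)$ bound uniform in $t\in[0,1]$, so as $t\to 0^+$ I may exchange limit and summation to obtain $\phi(q) \to H_0$ and $\rho(q) \to \zeta(2) = I_0$, giving $\fish(\qLL) = \phi(q)/\rho(q) \to H_0/I_0$.

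\textbf{The strict inequality.} I use an information-theoretic decomposition of $\qPCSA$. View the bit-vector $B = (b_k)_{k\in\mathbb{Z}}$ as the pair $(S,B')$, where $S = \max\{k : b_k = 1\}$ coincides with the $\qLL$ state and $B' = (b_j)_{j<S}$ carries the bits strictly below $S$. Conditional on $S=k$, only $b_k$ and $(b_{k'})_{k'>k}$ are pinned, so by independence the remaining bits $(b_j)_{j<k}$ keep their original Bernoulli$(1-e^{-\lambda/q^j})$ marginals. Averaging the chain rules for entropy and Fisher information (valid in the Poissonized model) over random offsets therefore yields
\begin{align*}
\mathcal{H}(\qPCSA) &= \mathcal{H}(\qLL) + \mathcal{H}(B'\mid S),\\
\mathcal{I}(\qPCSA) &= \mathcal{I}(\qLL) + \mathcal{I}(B'\mid S).
\end{align*}
Since Theorem~\ref{thm:qPCSA} gives $\mathcal{H}(\qPCSA)/\mathcal{I}(\qPCSA) = H_0/I_0$, this ratio is a positive-weight mediant of the two ratios on the right, so it suffices to prove $\mathcal{H}(B'\mid S)/\mathcal{I}(B'\mid S) < H_0/I_0$. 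Using that $\Pr(S > j\mid\lambda) = 1 - e^{-(\lambda/q^j)/(q-1)}$ depends only on $\lambda/q^j$, and repeating the change of variables from the proof of Theorem~\ref{thm:qPCSA}, both conditional quantities collapse to single integrals:
\begin{align*}
\mathcal{H}(B'\mid S) &= \frac{1}{\ln q}\int_{-\infty}^\infty \gamma_q(w)\,\alpha(w)\,dw,\\
\mathcal{I}(B'\mid S) &= \frac{1}{\ln q}\int_{-\infty}^\infty \gamma_q(w)\,\beta(w)\,dw,
\end{align*}
where $\alpha(w) = h(e^{-e^w})/\ln 2$, $\beta(w) = g(e^w)$, and the weight $\gamma_q(w) = 1 - e^{-e^w/(q-1)}$ is strictly increasing in $w$ with values in $(0,1)$. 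The unweighted integrals recover $H_0$ and $I_0$, so the target becomes the Chebyshev-type inequality
$$\frac{\int_{-\infty}^\infty \gamma_q(w)\alpha(w)\,dw}{\int_{-\infty}^\infty \gamma_q(w)\beta(w)\,dw} < \frac{\int_{-\infty}^\infty \alpha(w)\,dw}{\int_{-\infty}^\infty \beta(w)\,dw},$$
which follows from strict monotonicity of $\gamma_q$ paired with strict monotonicity of $\alpha(w)/\beta(w) = h(e^{-u})/g(u)$ (with $u = e^w$) in $u\in(0,\infty)$.

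\textbf{Main obstacle.} The crux is proving that $\Phi(u) := h(e^{-u})(e^u-1)/u^2$ is strictly decreasing on $(0,\infty)$. The boundary asymptotics are benign ($\Phi(u) \sim \ln(1/u)$ as $u\to 0^+$ and $\Phi(u) \sim 1/u$ as $u\to \infty$), but monotonicity throughout requires controlling the sign of $\Phi'(u)$ in the interior. My plan is to expand $-(1-x)\ln(1-x) = x - \sum_{k\geq 2}x^k/(k(k-1))$ at $x = e^{-u}$, rewrite $\Phi$ as an explicit series in $e^{-u}$, and try to express $-\Phi'(u)$ as a manifestly nonnegative series (or convex combination of such); failing a clean algebraic proof, a regime split $(0,c]$ versus $[c,\infty)$ with analytic bounds meeting at a convenient crossover $c$ should close the argument.
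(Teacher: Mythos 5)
Your proposal is correct and takes a \emph{genuinely different route} from the paper.  The paper proves the strict inequality by differentiating $\fish(\qLL)=\phi(q)/\rho(q)$ with respect to $q$ and establishing strict monotonicity for $q\ge 1.4$ via a numerically verified table (Lemma~\ref{lem:HLL_qhigh}), then handling $q\in(1,1.4]$ separately by an explicit lower bound (Lemma~\ref{lem:HLL_qlow}); together with the limit these give the result.  You instead decompose the $\qPCSA$ state as $(S,B')$, apply the chain rules for entropy and Fisher information, exploit the exact identity $\fish(\qPCSA)=H_0/I_0$ from Theorem~\ref{thm:qPCSA}, and use a mediant/Chebyshev rearrangement argument to reduce the claim to monotonicity of $\dot H(t)/\dot I(t)$.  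This avoids the case split and the numerical table entirely, and handles all $q>1$ uniformly.  Two remarks.  First, the statistic $S=\max\{k:b_k=1\}$ at cardinality $\lambda$ is distributed as the paper's $\qLL$ state at cardinality $\lambda/(q-1)$; you should note this rescaling explicitly (it is harmless because $\fish$ is scale-invariant, i.e.\ invariant under $\lambda\mapsto c\lambda$), but ``coincides'' is slightly too strong.  Second, your ``main obstacle'' is not actually open: $\Phi(u)=h(e^{-u})(e^u-1)/u^2$ equals $\ln 2\cdot \dot H(u)/\dot I(u)$, and the strict monotonicity of $\dot H/\dot I$ on $(0,\infty)$ is exactly Lemma~\ref{lem:monotonicity} of the paper, which is already proved in Appendix~\ref{sect:proofs_lowerbound} for the lower-bound argument.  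Citing that lemma closes your proof; the series-expansion plan you sketch is unnecessary.  The upshot is that your argument is not only correct but reveals a structural fact the paper leaves implicit: the gap between $\fish(\qLL)$ and $\fish(\qPCSA)$ is governed by the same $\dot H/\dot I$ monotonicity that drives the linearizable lower bound.
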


\begin{proof}
The first statement 
follows from Lemmas \ref{lem:HLL_qhigh} and \ref{lem:HLL_qlow}, which are stated below and proved in Appendix~\ref{sect:proofs}.
We prove the second statement. 
By Lemma \ref{lem:HLL}, we have
\begin{align*}
    \lim_{q\to\infty} \fish(\qLL) 
    &= \lim_{q\to\infty} \frac{\mathcal{H}(\qLL)}{\mathcal{I}(\qLL)}\\
    &= \lim_{q\to\infty} \frac{\displaystyle \frac{1-1/q}{\ln 2}+\sum_{k=1}^\infty\frac{1}{k}\log_2 \left(\frac{k+\frac{1}{q-1}+1}{k+\frac{1}{q-1}}\right)}{\displaystyle\sum_{k=1}^\infty\frac{1}{(k+\frac{1}{q-1})^2}}\\
    &=\frac{\displaystyle \frac{1}{\ln 2}+\sum_{k=1}^\infty\frac{1}{k}\log_2 \left(\frac{k+1}{k}\right)}{\displaystyle \sum_{k=1}^\infty\frac{1}{k^2}}
    \;=\; \frac{H_0}{I_0}.
\end{align*}
\end{proof}

Refer to Appendix~\ref{sect:proofs} for proofs of Lemmas~\ref{lem:HLL_qhigh} and \ref{lem:HLL_qlow}.

\begin{lemma}\label{lem:HLL_qhigh}
 $\fish(\qLL)$ is strictly decreasing for $q\geq 1.4$.
\end{lemma}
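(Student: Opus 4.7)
The plan is to exploit the factorization $\fish(\qLL)=\mathcal{H}(\qLL)/\mathcal{I}(\qLL)=\phi(q)/\rho(q)$, where the common $1/\ln q$ factors in Lemma~\ref{lem:HLL} cancel, and then to show directly that the function
\[
\eta(q)\;\bydef\;\frac{\phi(q)}{\rho(q)}
\]
is strictly decreasing on $[1.4,\infty)$. Using the convenient substitution $t = 1/(q-1)$, the expressions in Lemma~\ref{lem:HLL_tool} become
\[
\phi(q)\ln 2 \;=\; \frac{1}{t+1}+\sum_{k=1}^{\infty}\frac{1}{k}\ln\frac{k+t+1}{k+t},
\qquad
\rho(q)\;=\;\sum_{k=1}^{\infty}\frac{1}{(k+t)^{2}}\;=\;\zeta\bigl(2,\tfrac{q}{q-1}\bigr),
\]
so both $\phi$ and $\rho$ can be differentiated term by term. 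The derivative $\eta'(q)=(\phi'\rho-\phi\rho')/\rho^{2}$ has the same sign as $\phi'(q)\rho(q)-\phi(q)\rho'(q)$, so the goal becomes the inequality
\[
\frac{\phi'(q)}{\phi(q)} \;<\; \frac{\rho'(q)}{\rho(q)}\qquad(q\ge 1.4),
\]
which says that $\rho$ grows faster than $\phi$ in the logarithmic sense. Concretely, $\rho'(q)=\tfrac{2}{(q-1)^{2}}\zeta\bigl(3,\tfrac{q}{q-1}\bigr)$ and $\phi'(q)$ is a sum of the form $\tfrac{1}{q^{2}\ln 2}+\tfrac{1}{(q-1)^{2}\ln 2}\sum_{k\ge 1}\tfrac{1}{k(k+t)(k+t+1)}$, both of which can be written in closed form in terms of Hurwitz zeta values and digamma-type sums.

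To prove the key inequality, I would split $[1.4,\infty)$ into a computational range and an asymptotic range at some chosen threshold $q_{0}$ (say $q_{0}=10$). On the bounded piece $[1.4,q_{0}]$ the claim becomes a verification over a compact interval: one truncates each series at a large index $K$, controls the tail via $\int_{K}^{\infty}\cdot$ comparisons (the terms decay like $1/k^{2}$ or $1/(k^{2}(k+t))$), and shows that $(\ln\rho)'-(\ln\phi)'$ stays positive with enough margin to absorb the truncation error, using monotonicity of the derivatives in $q$ (which can be checked from second-derivative expressions) to pass from a finite grid of $q$-values to the whole subinterval. For $q\ge q_{0}$ I would expand both $\phi(q)$ and $\rho(q)$ around $t=0$, giving
\[
\phi(q)=H_{0}-t\,\tfrac{I_{0}}{\ln 2}+O(t^{2}),\qquad \rho(q)=I_{0}-2t\,\zeta(3)+O(t^{2}),
\]
whence $\eta(q)=\tfrac{H_{0}}{I_{0}}\bigl(1+t\bigl(\tfrac{2\zeta(3)}{I_{0}}-\tfrac{I_{0}}{H_{0}\ln 2}\bigr)+O(t^{2})\bigr)$; since $2\zeta(3)H_{0}\ln 2>I_{0}^{2}$ (numerically $\approx 5.43$ vs.\ $\approx 2.71$), the linear-in-$t$ coefficient is strictly positive, so $\eta$ is strictly decreasing in $q$ for large $q$, with explicit quantitative bounds on the $O(t^{2})$ remainder that remain valid down to $q=q_{0}$.

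The hard part is the middle regime near $q=1.4$, where neither a clean asymptotic expansion nor a term-by-term sign comparison works: the partial-fraction decomposition $\tfrac{1}{k(k+t)(k+t+1)}=\tfrac{1}{t(t+1)k}-\tfrac{1}{t(k+t)}+\tfrac{1}{(t+1)(k+t+1)}$ in $\phi'$ has components that diverge individually, so one must group terms before estimating, and the cross-product $\phi'\rho-\phi\rho'$ involves products of Hurwitz zeta values with no obvious sign. My strategy for this obstacle is to bound $\eta'(q)$ from above by a manifestly negative expression of the form $-C(q)/\rho(q)^{2}$ with $C(q)$ computable to any desired accuracy from finitely many terms plus a rigorous tail bound, thereby reducing the lemma to a finite amount of verified numerical computation together with the asymptotic analysis above.
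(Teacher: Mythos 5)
Your overall structure — reduce to $\eta(q)=\phi(q)/\rho(q)$, differentiate, and show $\phi'\rho<\phi\rho'$ on $[1.4,\infty)$ — is the same as the paper's, and your asymptotic expansion $\eta(q)=\tfrac{H_0}{I_0}\bigl(1+t(\tfrac{2\zeta(3)}{I_0}-\tfrac{I_0}{H_0\ln 2})+O(t^2)\bigr)$ with $t=1/(q-1)$ is a correct and nice way to settle the regime $q\gg1$. But you explicitly leave the ``middle regime near $q=1.4$'' unresolved, describing only a hope (``bound $\eta'(q)$ from above by a manifestly negative expression $-C(q)/\rho(q)^2$ with $C(q)$ computable\ldots'') and a vague plan to ``pass from a finite grid of $q$-values to the whole subinterval'' via second-derivative bounds. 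As written this is a gap, not a proof: you have not exhibited the expression $C(q)$, and controlling the sign of $\eta'$ on an entire interval from finitely many evaluations requires some monotonicity or Lipschitz structure that you have not identified.

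The paper's proof supplies exactly the missing piece, and it is worth knowing because it avoids second derivatives entirely. Writing $\ln 2\cdot\phi'(q)=\tfrac{1}{q(q-1)}\alpha(q)$ and $\rho'(q)=\tfrac{2}{(q-1)^2}\beta(q)$ with $\alpha(q)=\sum_{k\ge1}\tfrac{1}{k(k+1/(q-1))}$ and $\beta(q)=\sum_{k\ge1}\tfrac{1}{(k+1/(q-1))^3}$, one checks that $\operatorname{sgn}\eta'(q)=\operatorname{sgn}g(q,q)$, where
\[
g(a,b)\;\bydef\;\frac{b-1}{b}\,\alpha(b)\,\rho(b)\;-\;2\ln 2\cdot\beta(a)\,\phi(a).
\]
Since $\phi,\rho,\alpha,\beta$ are all positive and increasing in $q$, the two-variable function $g$ is increasing in $b$ and decreasing in $a$; hence $q\in[a,b)$ and $g(a,b)<0$ together force $g(q,q)<0$. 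This turns the continuum problem into a finite one: pick a partition $1.4=q_0<q_1<\cdots<q_n=\infty$ and verify $g(q_{k-1},q_k)<0$ at each step. The paper does this with an $8$-step table. Your asymptotic expansion could legitimately replace the last step of that table, but without the $g(a,b)$ bracketing (or a worked-out substitute with explicit error control) your treatment of $[1.4,q_0]$ is not a complete argument.
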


\begin{lemma}\label{lem:HLL_qlow}
$\fish(\qLL) > \fish(2\text{-}\LL)$ for $q\in(1,1.4]$.
\end{lemma}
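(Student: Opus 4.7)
The plan is to combine an asymptotic analysis as $q\to 1^+$ with a rigorous numerical verification on the remaining compact part of the range. Writing $a=1/(q-1)$, the interval $q\in(1,1.4]$ corresponds to $a\in[2.5,\infty)$, and the simplified formulas of Lemma~\ref{lem:HLL_tool} become
\begin{align*}
\phi(q) &= \frac{1/(a+1)}{\ln 2} + \sum_{k=1}^\infty \frac{1}{k}\log_2\!\left(1+\frac{1}{k+a}\right), & \rho(q) &= \sum_{k=0}^\infty \frac{1}{(k+1+a)^2}.
\end{align*}

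First, I would dispense with a neighborhood of $q=1$ analytically. Using $\log_2(1+1/(k+a)) = \frac{1}{(k+a)\ln 2} - O\!\left(\frac{1}{(k+a)^2}\right)$, the digamma identity $\sum_{k\ge 1}\bigl(\frac{1}{k}-\frac{1}{k+a}\bigr) = \psi(a+1)+\gamma = \ln a + \gamma + o(1)$, and the integral approximation $\rho(q) = 1/a + O(1/a^2)$, I obtain
\begin{align*}
\fish(\qLL) \;=\; \frac{\phi(q)}{\rho(q)} \;=\; \log_2 a + \frac{\gamma+1}{\ln 2} + o(1) \quad \text{as } q\to 1^+.
\end{align*}
In particular $\fish(\qLL)\to\infty$, and by making these error terms effective (via explicit tail bounds on the series) I can exhibit a specific $q_1\in(1,1.4]$ and a constant $c>0$ such that $\fish(\qLL)>\fish(2\text{-}\LL)+c$ whenever $q\in(1,q_1]$.

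Second, on the compact subinterval $[q_1,1.4]$ I would verify the inequality by certified numerical evaluation. The series defining $\phi(q)$, $\rho(q)$, and their $q$-derivatives (obtained by termwise differentiation) all have $O(1/k^2)$ tails that are easily bounded for $q\ge q_1$; combining these yields an explicit Lipschitz constant $L$ for $\fish(\qLL)$ on $[q_1,1.4]$, together with a sharp enough enclosure of $\fish(2\text{-}\LL)$ itself. It then suffices to evaluate $\fish(\qLL)$ on a finite grid of mesh $h<c/(2L)$, with truncated-series error at most $c/4$ per point, and confirm $\fish(\qLL)>\fish(2\text{-}\LL)+c/2$ at every grid point; the Lipschitz bound then closes the strict inequality for all $q\in[q_1,1.4]$.

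The main obstacle is this verification step: one must produce certified (as opposed to merely floating-point) bounds on truncation errors, on $\phi'$ and $\rho'$, and on $\fish(2\text{-}\LL)$, with enough precision that the gap $\fish(\qLL)-\fish(2\text{-}\LL)$ (which may be small near $q=1.4$) survives the error budget. A cleaner alternative would be to extend the monotonicity argument of Lemma~\ref{lem:HLL_qhigh} down through $(1,1.4]$, so that the claim collapses to the single boundary inequality $\fish(1.4\text{-}\LL)>\fish(2\text{-}\LL)$ (immediate from Lemma~\ref{lem:HLL_qhigh}); however, extending that derivative analysis below $q=1.4$ requires finer control of the Hurwitz-zeta-type series as $q\to 1^+$, and the split in the lemma statements suggests that such a unified analytic bound is not readily available.
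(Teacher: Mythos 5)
Your strategy is workable in principle but genuinely different from — and considerably heavier than — the paper's, and both of its halves are left at the level of a plan rather than executed. The paper instead produces a single closed-form lower bound on $\fish(\qLL)$ that is itself monotone in $q$, which collapses the whole interval to one endpoint evaluation. Applying $\ln x > 1-\frac{1}{x}$ termwise to the series for $\phi(q)$, and the telescoping comparison $\rho(q)=\sum_{k\ge 1}\bigl(k+\tfrac{1}{q-1}\bigr)^{-2} < \sum_{k\ge 1}\bigl[(k-1+\tfrac{1}{q-1})^{-1}-(k+\tfrac{1}{q-1})^{-1}\bigr]=q-1$, one obtains
\begin{align*}
\ln 2\cdot\fish(\qLL) \;>\; \frac{1}{q} + \sum_{k=1}^\infty\frac{1}{k\bigl((q-1)k+q\bigr)},
\end{align*}
whose right-hand side is manifestly decreasing in $q$, so it suffices to evaluate it at the single point $q=1.4$ (giving $\approx 2.11863$) and compare with $\ln 2\cdot\fish(2\text{-}\LL)\approx 2.1097$ — no asymptotic expansion near $q=1$, no interval split, and no Lipschitz/grid certification. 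Your observation that $\fish(\qLL)\to\infty$ as $q\to 1^+$ is correct (it also falls out of the paper's bound, which diverges like $\ln\frac{1}{q-1}$), but making it effective on $(1,q_1]$ and then running a certified numerical sweep on $[q_1,1.4]$ is exactly the work the paper's one-shot estimate avoids; the main thing your route would buy is independence from finding the right pair of ad hoc inequalities, at the cost of the interval-arithmetic verification that you correctly flag as the hard part. Your fallback of extending the monotonicity argument of Lemma~\ref{lem:HLL_qhigh} down through $(1,1.4]$ is also not what the paper does: the paper sidesteps monotonicity of the true ratio entirely by exhibiting a monotone \emph{lower bound} for it.
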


\ignore{
\subsection{The $\fish$ Numbers of $\Min{q,s}$ Sketches}

The \MinCount{} sketch~\cite{Giroire09,ChassaingG06,Lumbroso10} is usually described as storing real numbers, 
though in practice some bounded precision suffices.  Here we consider a family of discrete, base-$q$
generalizations of \MinCount{} with $k=1$ we call $\Min{q,s}$.  Roughly speaking, we assume $h:[U]\rightarrow \mathbb{R}\cap [0,1]$,
but only store the minimum hash value seen in a base-$q$ floating point representation with precision $s$.\footnote{For $\qPCSA$ and $\qLL$ sketches, one can choose any base $q>1$. However, for $\Min{q,s}$ sketches, it naturally requires  both $q$ and $s$ to be integers.}
I.e., the state space consists of pairs $(t,k)$ where $t$ is an $s$-digit integer (mantissa) in base $q$ and 
$k$ is an integer exponential offset. Thus the pair $(t,k)$ represents any number in the range $[\frac{t}{q^k},\frac{t+1}{q^{k}})$.
We require $t$ to be in the range $[q^{s-1},q^s-1]$ so that each real number corresponds to a unique 
$(t,k)$ pair.  
After $\lambda$ insertions, the probability that the final state is $(t,k)$ 
is equal to $(1-\frac{t}{q^{k}})^\lambda - (1-\frac{t+1}{q^{k}})^\lambda$. 
As before, since we care about the behavior when $\lambda$ goes to infinity, we can make the following approximations.
\begin{enumerate}
    \item The probability of the state $(t,k)$ is equal to $e^{-\frac{\lambda t}{q^k}}-e^{-\frac{\lambda(t+1)}{q^{k}}}$.
    \item The range of exponential offsets ($k$) 
    is extended to $\mathbb{Z}$, e.g., $(t,-3)$ is a final state with non-zero probability.
\end{enumerate}
Then we can give the following definition.
\begin{definition}[IDF of $\MIN$ sketches]
Let $q>1$ be the base and $s\geq 1$ be the precision parameter. 
The state space of $\Min{q,s}$ is $\mathcal{M}_{\Min{q,s}}=\{q^{s-1},q^{s-1}+1,\ldots,q^s-1\}\times \mathbb{Z}$ and the induced distribution for cardinality $\lambda$ is
\begin{align*}
    \psi_{\Min{q,s},\lambda}(t,k)=e^{-\frac{\lambda t}{q^k}}-e^{-\frac{\lambda(t+1)}{q^{k}}}.
\end{align*}
\end{definition}

\begin{lemma}\label{lem:MIN}
For any $q>1$ and $t\geq 1$, $\Min{q,s}$ is weakly scale-invariant with base $q$. Furthermore we have,
\begin{align*}
    \mathcal{H}(\Min{q,s})=\frac{1}{\ln q} \sum_{t=q^{s-1}}^{q^s-1}\phi\left(\frac{t+1}{t}\right)
    \quad\text{ \ and \ }\quad 
    \mathcal{I}(\Min{q,s})=\frac{1}{\ln q}\sum_{t=q^{s-1}}^{q^s-1}\rho\left(\frac{t+1}{t}\right).
\end{align*}
\end{lemma}
See Appendix \ref{sect:proofs} for the proof.

\begin{theorem}\label{thm:min}
For any $q>1$ and $s\geq 1$, there exists some $q'\in(1,1+\frac{1}{q^{s-1}}]$ such that
\begin{align*}
    \fish(\Min{q,s})\geq \fish(q'\text{-}\LL).
\end{align*}
Specifically, this implies
\begin{align*}
    \fish(\Min{q,s})>\frac{H_0}{I_0}.
\end{align*}
\end{theorem}
\begin{proof}
By Lemma \ref{lem:MIN} and Theorem~\ref{thm:fish-qLL}, we have 
\begin{align*}
    \fish(\Min{q,s})
    \;=\; \frac{\sum_{t=q^{s-1}}^{q^s-1}\phi(\frac{t+1}{t})}{\sum_{t=q^{s-1}}^{q^s-1}\rho(\frac{t+1}{t})}
    \;\geq\; \min_{t\in\{q^{s-1},q^{s-1}+1,\ldots,q^s-1\}}\frac{\phi(\frac{t+1}{t})}{\rho(\frac{t+1}{t})}
    \;>\; \inf_r\frac{\phi(r)}{\rho(r)}
    \;=\; \frac{H_0}{I_0}.
\end{align*}
\end{proof}

\begin{cor}\label{cor:HLL_qlow2}
For any $q\in(1,2]$,
$\fish(\qLL) \geq \fish(2\text{-}\LL) > H_0/I_0$. 
\end{cor}
\begin{proof}
Directly follows from Lemmas \ref{lem:HLL_qhigh} and \ref{lem:HLL_qlow}.
\end{proof}

\begin{cor}
For any $q>1$ and $s\geq 1$, 
\begin{align*}
    \fish(\Min{q,s})\geq \fish(2\text{-}\LL) > H_0/I_0.
\end{align*}
\end{cor}
\begin{proof}
Directly follows from Theorem \ref{thm:min} and Corollary \ref{cor:HLL_qlow2}.
\end{proof}
}

\section{A Sharp Lower Bound on Linearizable Sketches}\label{sect:lowerbound}

In this section we give a formal definition of \emph{linearizable} sketches and prove a lower bound on the $\fish$-numbers of linearizable sketches.  It would be ideal if we could extend 
this lower bound to the class of all mergeable sketches, but mergeability is a trickier property to characterize and analyze.  
We make some progress in this direction and characterize mergeability in terms of the the state space
of the sketch.

Recall the set-up from Section~\ref{sect:introduction}.
The \Dartboard{} $[0,1]^2$ is partitioned
into cells $\mathcal{C}$. 
A sketching scheme
may nominally treat $\mathcal{C}$ as an infinite
set, but we can always truncate it at a large finite 
number of cells with negligible loss in efficiency.
The hash function $h : [U] \to \mathcal{C}$ chooses a random cell $c_i\in \mathcal{C}$ with probability equal to its area.
A state $S\subseteq \mathcal{C}$ is a subset of occupied cells, which are by definition the union of those hash values that, by Axiom 2, cause no state change if encountered.
The state is determined by a transition 
function $f : 2^{\mathcal{C}} \times \mathcal{C} \to 2^{\mathcal{C}}$
where $f(S,h(a))=S'$ maps the previous state $S$ and hash value of 
$a$ to the next state $S'$.
We define a sketch to be mergeable if $S(\mathcal{A})$
is insensitive to the multiplicity of elements in $\mathcal{A}$
and the order in which they are processed, that is,
if the transition $f$ is commutative and idempotent.\footnote{Given $S(\mathcal{A}),S(\mathcal{B})$ we can generate
$S(\mathcal{A},\mathcal{B})$ by hypothesizing a data set $\mathcal{B}'$ for which $h(\mathcal{B}')=S(B)$ and inserting the elements of $\mathcal{B}'$ into $S(A)$ one-by-one.  
By Axiom 2 and the commutativity of the transition $f$ we have
$S(\mathcal{B},\mathcal{B}')
=S(\mathcal{B}',\mathcal{B})
=S(\mathcal{B})=S(\mathcal{B}')$,
hence
$S(\mathcal{A},\mathcal{B}')
=S(\mathcal{B}',\mathcal{A})
=S(\mathcal{B},\mathcal{A})$,
where the order of sets indicates the order in 
which they were processed.
}

Let $\mathcal{S} \subseteq 2^{\mathcal{C}}$ be the state 
space of a \Dartboard{} 
sketch \textsf{X}.  Theorem~\ref{thm:mergeable-characterization}
gives a complete characterization of mergeability of \textsf{X} 
in terms of $\mathcal{S}$.

\begin{theorem}\label{thm:mergeable-characterization}
    A \Dartboard{} sketch \textsf{X} using finite cell partition $\mathcal{C}$ is mergeable if and only if the state space $\mathcal{S} \subseteq 2^{\mathcal{C}}$
    satisfies \emph{validity} and \emph{closure}.
\begin{description}
    \item[Validity.] For any set $D\subseteq \mathcal{C}$ of at most $U$ cells there exists a state $S\in\mathcal{S}$ such that $D\subseteq S$.
    \item[Closure.] For any two distinct states 
    $S,S'\in \mathcal{S}$, $S\cap S'\in \mathcal{S}$.
\end{description}
\end{theorem}

\begin{proof}
    Validity is necessary for any \Dartboard{} sketch, mergeable or not.  Without it, it is impossible to satisfy the axioms
    of the \Dartboard{} if there is 
    no state $S$ for which $D\subseteq S\in \mathcal{S}$. 
    Let $S_0,S_1 \in \mathcal{S}$ 
    be in the state space,
    and $S^* = S_0\cap S_1$. 
    Suppose $\mathcal{A}_0, \mathcal{A}_1, \mathcal{A}^*$ are data sets 
    for which $h(\mathcal{A}_0), h(\mathcal{A}_1), h(\mathcal{A}^*)$ 
    are equal to $S_0,S_1,S^*$, respectively.
    If, contrary to the claim,
    $S^*\not\in\mathcal{S}$ is not in the state space, then by Axiom 1,
    $S(\mathcal{A}^*)\supset S^*$,
    meaning there is a cell 
    $c\in S(\mathcal{A}^*)-S^*$.
    Suppose $c\not\in S_0$.
    By Axiom 2, 
    $S(\mathcal{A}_0, \mathcal{A}^*)=S(\mathcal{A}_0)$ does not contain $c$, while
    $S(\mathcal{A}^*,\mathcal{A}_0) \supseteq S(\mathcal{A}^*)$ must contain $c$, 
    with the sets written in the order they are processed.  This
    contradicts the mergeability of \textsf{X}
    and the commutativity of the 
    transition function.
\end{proof}

\subsection{Linearizable Sketches}\label{sect:linearizable}

Informally, a sketch in the \DartboardModel{} is 
called \emph{linearizable} if there is a fixed
permutation of its cells $(c_0,c_1,\ldots,c_{|\mathcal{C}|-1})$ 
such that if $S_0 \in\mathcal{S}$ is the state, 
whether $c_i\in S_0$ is a function 
of $S_0 \cap \{c_0,\ldots,c_{i-1}\}$ and 
whether $c_i$ has been hit by a dart.  

\begin{definition}[Linearizability]
Consider a \Dartboard{} sketch \textsf{X} over a cell partition $\mathcal{C}$.  Let $Z_i\in\{0,1\}$ be the indicator for whether cell $c_i\in \mathcal{C}$ has been hit by a dart, and let $Y_i\in\{0,1\}$
be the indicator for whether $c_i$ is occupied.
Axiom 1 implies $Z_i\leq Y_i$.
We call \textsf{X} \emph{linearizable} if there is a fixed permutation of the cells, say it is
$(c_0,\ldots,c_{|\mathcal{C}|-1})$ without loss of generality, and a monotone boolean function $\phi : \{0,1\}^*\to\{0,1\}$ such that 
\[
Y_i = Z_i \vee \phi(\mathbf{Y}_{i-1}), \qquad \text{where } \mathbf{Y}_{i-1} = (Y_0,\ldots,Y_{i-1}).
\]
\end{definition}

We call $\phi$ the \emph{forced occupation} function. 

\begin{lemma}
    The \PCSA, (\Hyper)\LogLog, and \Bottom-$k$ sketches are linearizable, while
    \AdaptiveSampling{} is not.
\end{lemma}

\begin{proof}
    The state of \PCSA{} is $\mathbf{Y}=(Z_0,\ldots,Z_{|\mathcal{C}|-1})$, so it is linearizable with respect to any permutation and $\phi=0$.  (\Hyper)\LogLog{} is linearizable when the cells are put in non-decreasing order by size.  The function $\phi(Y_0,\ldots,Y_{i-1})=1$ if 
    for some cell $c_j$ in the same column as $c_i$, $j<i$ and $Y_j=1$.
    The \Bottom-$k$ sketch is linearizable 
    where the cells/hash values appear in increasing order and $\phi(\mathbf{Y}_{i-1})=1$ iff $\mathbf{Y}_{i-1}$ has weight at least $k$.
    \AdaptiveSampling{} is very similar to \Bottom-$k$ in that it stores all hash values in the range $[0,2^{-\ell})$, where $\ell$ is minimum such that the number of hash values stored is \emph{at most} $k$.
    Because of the variable number of hash values stored, it is technically not linearizable with respect to any permutation of the cells/hash values.
\end{proof}

\begin{lemma}\label{lem:linearizable-subset-mergeable}
All linearizable sketches are mergeable.     
\end{lemma}

\begin{proof}
    Let $\mathbf{Y}^{\mathcal{A}},\mathbf{Y}^{\mathcal{B}}\in\{0,1\}^{\mathcal{C}}$ be the bit-vector representations of 
    states $S(\mathcal{A}),S(\mathcal{B})$ after processing sets $\mathcal{A},\mathcal{B}\subset [U]$.
    We prove it is mergeable by induction, where the base 
    case is 
    just a special case of the inductive case.  Suppose that $\mathbf{Y}_{i-1}^{\mathcal{A}\cup\mathcal{B}}$ has already been computed. 
    We have
    \begin{align*}
    Y_{i}^{\mathcal{A}\cup\mathcal{B}} 
    &= 
    Z_i^{\mathcal{A}\cup\mathcal{B}} \vee \phi(\mathbf{Y}_{i-1}^{\mathcal{A}\cup\mathcal{B}}) & \text{Definition of $Y_i^{\mathcal{A}\cup\mathcal{B}}$.}\\
    &= Z_i^{\mathcal{A}} \vee Z_i^{\mathcal{B}} \vee \phi(\mathbf{Y}_{i-1}^{\mathcal{A}\cup\mathcal{B}}) &  \text{Definition of $Z_i^{\mathcal{A}\cup\mathcal{B}}$.}\\
    &= Z_i^{\mathcal{A}} \vee \phi(\mathbf{Y}_{i-1}^{\mathcal{A}}) \vee Z_i^{\mathcal{B}} \vee \phi(\mathbf{Y}_{i-1}^{\mathcal{B}}) 
    \vee \phi(\mathbf{Y}_{i-1}^{\mathcal{A}\cup\mathcal{B}})
    & \text{Monotonicity of $\phi$.}\\
    &= Y_i^{\mathcal{A}} \vee Y_i^{\mathcal{B}} 
    \vee \phi(\mathbf{Y}_{i-1}^{\mathcal{A}\cup\mathcal{B}}) & \text{Definition of $Y_i^{\mathcal{A}},Y_i^{\mathcal{B}}$.}
    \end{align*}
    The third equality follows since 
    $\mathbf{Y}_{i-1}^{\mathcal{A}\cup\mathcal{B}}$ dominates both $\mathbf{Y}_{i-1}^{\mathcal{A}},\mathbf{Y}_{i-1}^{\mathcal{B}}$,
    and by the monotonicity of $\phi$,
    $\phi(\mathbf{Y}_{i-1}^{\mathcal{A}\cup\mathcal{B}})
    =\phi(\mathbf{Y}_{i-1}^{\mathcal{A}})
    \vee
    \phi(\mathbf{Y}_{i-1}^{\mathcal{B}})
    \vee
    \phi(\mathbf{Y}_{i-1}^{\mathcal{A}\cup\mathcal{B}})$.
    Therefore it is possible to compute $Y_i^{\mathcal{A}\cup\mathcal{B}}$
    from $Y_i^{\mathcal{A}},Y_i^{\mathcal{B}}$ and $\mathbf{Y}_{i-1}^{\mathcal{A}\cup\mathcal{B}}$, which concludes the induction.
    (In the base case $\mathbf{Y}_{-1}^{\mathcal{A}}=\mathbf{Y}_{-1}^{\mathbf{B}}=\mathbf{Y}_{-1}^{\mathcal{A}\cup\mathcal{B}}=\epsilon$ and $\phi(\epsilon)$ is constant.)
\end{proof}

\subsection{The Lower Bound}\label{sect:lb}

When phrased in terms of the \DartboardModel, 
our analysis of the $\fish$-number of $\PCSA$ 
(Section~\ref{sect:fish-numbers})
took the following approach.  
We fixed a moment in \emph{time} $\lambda$
and \emph{aggregated} the Shannon entropy 
and normalized Fisher information over
all \emph{cells} on the \Dartboard.

Our lower bound on linearizable sketches 
begins from the opposite point of view.  
We fix a particular cell $c_i \in \mathcal{C}$
of size $p_i$ and consider how it might contribute
to the Shannon entropy and normalized Fisher information at 
\emph{various times}.  The $\dot{H},\dot{I}$ functions defined in Lemma~\ref{lem:hdot_idot} are useful for describing these contributions.

\begin{lemma}\label{lem:hdot_idot}
Let $Z$ be the indicator variable for whether 
a particular cell of size $p$ has been hit by a dart. 
At time $\lambda$, $\Pr(Z = 0)=e^{-p\lambda}$ and
\begin{align*}
    H(Z)&=\dot{H}(p\lambda) \quad\mbox{ and }\quad \lambda^2\cdot I_Z(\lambda)=\dot{I}(p\lambda),\\
\intertext{where}
    \dot{H}(t)&\bydef \frac{1}{\ln 2}\left(t e^{-t} -(1- e^{-t})\ln( 1-e^{-t})\right),\\
    \dot{I}(t)&\bydef \frac{t^2}{e^t-1}. 
\end{align*}
\end{lemma}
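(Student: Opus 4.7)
The proof plan is essentially a direct computation, since the lemma asserts explicit formulas for the entropy and Fisher information of a single Bernoulli cell indicator. The key is to reduce both quantities to the single underlying parameter $t = p\lambda$ and then verify the algebraic identities.

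First I would justify the PMF. Under the Poissonized model, the number of darts landing in a cell of size $p$ up to time $\lambda$ is $\mathrm{Poisson}(p\lambda)$, so $\Pr(Z=0)=e^{-p\lambda}$ and $\Pr(Z=1)=1-e^{-p\lambda}$. Writing $t=p\lambda$, the indicator $Z$ is Bernoulli with parameter $1-e^{-t}$.

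Next, the entropy computation. Starting from the Bernoulli entropy formula
\begin{align*}
H(Z) &= -e^{-p\lambda}\log_2 e^{-p\lambda} - (1-e^{-p\lambda})\log_2(1-e^{-p\lambda})\\
     &= \frac{1}{\ln 2}\bigl(p\lambda\, e^{-p\lambda} - (1-e^{-p\lambda})\ln(1-e^{-p\lambda})\bigr),
\end{align*}
which is exactly $\dot H(p\lambda)$ after substituting $t=p\lambda$.

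For the Fisher information, I would apply the definition $I_Z(\lambda)=\mathbb{E}\bigl[(\partial_\lambda \log f_\lambda(Z))^2\bigr]$ with $f_\lambda(0)=e^{-p\lambda}$ and $f_\lambda(1)=1-e^{-p\lambda}$. Differentiating and dividing gives score values $-p$ (on $Z=0$) and $p e^{-p\lambda}/(1-e^{-p\lambda})$ (on $Z=1$). Taking expectation,
\begin{align*}
I_Z(\lambda) &= e^{-p\lambda}\cdot p^2 + (1-e^{-p\lambda})\cdot \frac{p^2 e^{-2p\lambda}}{(1-e^{-p\lambda})^2}
= p^2 e^{-p\lambda}\!\left(1 + \frac{e^{-p\lambda}}{1-e^{-p\lambda}}\right)
= \frac{p^2}{e^{p\lambda}-1}.
\end{align*}
Multiplying by $\lambda^2$ yields $\lambda^2 I_Z(\lambda) = (p\lambda)^2/(e^{p\lambda}-1) = \dot I(p\lambda)$, as claimed.

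There is no serious obstacle here; the only thing to be careful about is the algebraic simplification in the Fisher information computation, where the two Bernoulli terms must combine cleanly to $p^2/(e^{p\lambda}-1)$. Having recognized that the substitution $t=p\lambda$ is the natural normalization (entropy depends only on the Bernoulli parameter, and the Fisher information for $\lambda$ carries a chain-rule factor of $p^2$ that pairs with $\lambda^2$ to form $(p\lambda)^2$), the identities follow by inspection.
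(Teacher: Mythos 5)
Your proof is correct and follows essentially the same approach as the paper: identify $Z$ as Bernoulli with parameter $1-e^{-p\lambda}$ under Poissonization, compute the entropy directly, and compute the Fisher information by summing the two score-squared terms. The only cosmetic difference is that you use the log-derivative form of the score rather than the paper's $(\partial_\lambda f)/f$ form, which are equivalent.
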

In other words, the number of darts in this cell is a $\operatorname{Poisson}(t)$ random variable, $t=p\lambda$, and both entropy and normalized information can be expressed in terms of $t$ via functions $\dot{H},\dot{I}$.

\begin{proof}
$\Pr(Z = 0)=e^{-p\lambda }$ follows from the definition of the process. Then we have, by the definition of entropy and Fisher information,
\begin{align*}
    H(Z)&= -  e^{-p\lambda} \log_2  e^{-p\lambda} -(1- e^{-p\lambda})\log_2( 1-e^{-p\lambda})\\
    &=p\lambda e^{-p\lambda}/\ln2 -(1- e^{-p\lambda})\log_2( 1-e^{-p\lambda}) \;=\; \dot{H}(p\lambda),\\
    \lambda^2\cdot  I_Z(\lambda)&=\lambda^2\left(\frac{e^{-2p\lambda}p^2}{e^{-p\lambda}}+\frac{e^{-2p\lambda}p^2}{1-e^{-p\lambda}}\right)=\frac{e^{-p\lambda}(p\lambda)^2}{1-e^{-p\lambda}} \;=\; \dot{I}(p\lambda).
\end{align*}
\end{proof}

Still fixing $c_i\in \mathcal{C}$ with size $p_i$, 
let us now aggregate its \emph{potential} 
contributions to entropy/information
\emph{over all time}.  
We say \underline{potential} contribution because in a linearizable sketch, it is possible for cell $c_i$ to be ``killed'';
at the moment  $\phi(\mathbf{Y}_{i-1})$
switches from 0 to 1, $Z_i$
is no longer relevant.
We measure time on a log-scale, so $\lambda = e^x$.  
Unsurprisingly, the potential contributions of $c_i$ do not depend on $p_i$:

\begin{lemma}\label{lem:integration}
\begin{align*}
     \int_{-\infty}^\infty \dot{H}(e^x) dx=H_0 \quad \text{ and }\quad
    \int_{-\infty}^\infty \dot{I}(e^x) dx =I_0.
\end{align*}
\end{lemma}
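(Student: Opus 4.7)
The plan is to observe that both identities reduce, after an elementary manipulation, to the definitions of $H_0$ and $I_0$ given in Definition \ref{def:H_0I_0}. No change of variables beyond the identity $t = e^x$ is needed; the point is simply to rewrite $\dot{H}$ and $\dot{I}$ in the forms $h(e^{-t})/\ln 2$ and $g(t)$ that appear in the definitions.

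For the Fisher integral, I would note that $\dot{I}(t) = \frac{t^2}{e^t-1}$ is literally the function $g$ of Definition \ref{def:H_0I_0}. Substituting $t = e^x$ gives
\begin{align*}
\int_{-\infty}^\infty \dot{I}(e^x)\,dx \;=\; \int_{-\infty}^\infty g(e^x)\,dx \;=\; I_0,
\end{align*}
which is exactly the definition of $I_0$.

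For the entropy integral, the key computation is to recognize that $\dot{H}(t) = \tfrac{1}{\ln 2}\,h(e^{-t})$, where $h(x) = -x\ln x - (1-x)\ln(1-x)$. Indeed, setting $x = e^{-t}$ one has $-x\ln x = t e^{-t}$ and $-(1-x)\ln(1-x) = -(1-e^{-t})\ln(1-e^{-t})$, whose sum matches the expression for $\dot{H}(t)$ in Lemma \ref{lem:hdot_idot}. Substituting $t = e^x$ then yields
\begin{align*}
\int_{-\infty}^\infty \dot{H}(e^x)\,dx \;=\; \frac{1}{\ln 2}\int_{-\infty}^\infty h\bigl(e^{-e^x}\bigr)\,dx \;=\; H_0,
\end{align*}
again by definition of $H_0$.

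There is essentially no obstacle in this proof; it is a bookkeeping verification that the per-cell entropy and information, viewed on a log-time axis, are precisely the integrands used to define $H_0$ and $I_0$. The only care needed is to match signs and factors of $\ln 2$ correctly when rewriting $\dot{H}$ via the binary entropy function $h$. One might optionally remark on why the improper integrals converge (the integrand $\dot{I}(e^x)$ decays like $e^{2x}$ as $x\to-\infty$ and like $e^{2x}e^{-e^x}$ as $x\to+\infty$, and similarly for $\dot{H}(e^x)$), but this is routine given the explicit closed forms derived for $H_0$ and $I_0$ in Lemma \ref{lem:H0I0}.
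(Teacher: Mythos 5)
Your proof is correct and mirrors the paper's own argument: both rely on the observations $\dot{H}(t) = h(e^{-t})/\ln 2$ and $\dot{I}(t) = g(t)$ to reduce the integrals directly to the defining integrals of $H_0$ and $I_0$ in Definition~\ref{def:H_0I_0}. Your verification of the identity for $\dot{H}$ is correct, and your proof (unlike the paper's one-line justification) usefully makes explicit why Lemma~\ref{lem:H0I0} is not actually needed here, only the original integral definitions.
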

\begin{proof}
Follows from 
Definition \ref{def:H_0I_0}
and
Lemma \ref{lem:H0I0}, 
since $\frac{1}{\ln 2}\cdot h(e^{-e^w})=\dot{H}(e^w)$ and $g(e^w)=\dot{I}(e^w)$.
\end{proof}

In other words, if we let cell $c_i$ ``live'' forever (fix $\phi(\mathbf{Y}_{i-1})=0$ for all time) 
it would contribute $H_0$ to the aggregate entropy
and $I_0$ to the aggregate normalized Fisher information.  In reality $c_i$ may die at some particular time,
which leads to a natural \emph{optimization} question.
When is the most advantageous time $\lambda$ to kill $c_i$, as a function of 
its density $t_i = p_i\lambda$?

Figure \ref{fig:hi} plots $\dot{H}(t)$, $\dot{I}(t)$ and---most importantly---the ratio 
$\dot{H}(t)/\dot{I}(t)$. 
It \emph{appears} as if $\dot{H}(t)/\dot{I}(t)$ is monotonically decreasing in $t$ and this is, in fact, the case, as established in Lemma~\ref{lem:monotonicity}. See Appendix~\ref{sect:proofs_lowerbound}.
\begin{lemma}\label{lem:monotonicity}
$\dot{H}(t)/\dot{I}(t)$ is decreasing in $t$ on $(0,\infty)$.
\end{lemma}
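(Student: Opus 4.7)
The plan is to prove $(\dot H/\dot I)'(t) < 0$ on $(0,\infty)$ by first performing the substitution $u = e^{-t} \in (0,1)$. A direct calculation gives
$$\dot H(t) = \frac{h_b(u)}{\ln 2}, \qquad \dot I(t) = \frac{u\,(\ln u)^2}{1-u},$$
where $h_b(u) = -u \ln u - (1-u)\ln(1-u)$ is the binary entropy in nats. Because $t \mapsto u = e^{-t}$ is strictly decreasing, showing $\dot H/\dot I$ decreases in $t$ is equivalent to showing that
$$R(u) := \frac{(1-u)\,h_b(u)}{u\,(\ln u)^2}$$
is strictly \emph{increasing} on $(0,1)$. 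A quick sanity check: $R(0^+) = 0$ since $h_b(u) \sim -u\ln u$, and $R(u) \to \infty$ as $u \to 1^-$ (via $v = 1-u$, $h_b \sim -v\ln v$ and $(\ln u)^2 \sim v^2$, so $R \sim -\ln v$), consistent with monotonicity.

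Next, I would compute the logarithmic derivative. Using $h_b'(u) = \ln((1-u)/u)$, one obtains
$$\frac{R'(u)}{R(u)} \;=\; \frac{\ln\bigl((1-u)/u\bigr)}{h_b(u)} \;-\; \frac{1}{u(1-u)} \;-\; \frac{2}{u\ln u},$$
and the task is to show this is strictly positive on $(0,1)$. Note the third term is positive (since $\ln u < 0$) while the other two can have either sign according as $u \lessgtr 1/2$; at $u=1/2$ the first term vanishes and the sum is $-4 + 4/\ln 2 > 0$, as expected.

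The heart of the proof is establishing the transcendental inequality $R'(u)/R(u) > 0$. My plan is to clear denominators and attack it with series. Using $-\ln(1-u) = \sum_{k\ge 1} u^k/k$, one writes $h_b(u) = -u\ln u + u - \sum_{k \ge 2} u^k/[k(k-1)]$, so every term in $R'/R$ admits a joint expansion in powers of $u$ and $\ln u$. After multiplying through by $u(1-u)h_b(u) > 0$ and collecting, the inequality reduces to comparing two such expansions coefficient by coefficient. By the symmetry $h_b(u) = h_b(1-u)$ (which $R$ itself lacks), it is natural to handle $(0,1/2]$ and $[1/2,1)$ separately, Taylor-expanding around $0$ in the former and around $1$ in the latter.

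The main obstacle is a double cancellation at the endpoints: as $u \to 0^+$, each of the three terms in $R'/R$ has a leading $1/u$ singularity, and they cancel exactly; the analogous cancellation occurs at $u \to 1^-$ in the variable $v = 1-u$. Verifying positivity therefore requires careful control of the \emph{sub-leading} terms in each asymptotic expansion. If the direct series comparison becomes unwieldy, a clean alternative route I would try in parallel is Pinelis's monotone form of L'H\^opital's rule: since $\dot H(t), \dot I(t) \to 0$ as $t \to \infty$, monotonicity of the simpler ratio $\dot H'(t)/\dot I'(t)$, where $\dot H'(t) = -e^{-t}\bigl(t + \ln(1-e^{-t})\bigr)/\ln 2$, would imply monotonicity of $\dot H/\dot I$ on a suitable tail, with the remaining bounded interval handled by direct numerical verification combined with a small neighborhood argument.
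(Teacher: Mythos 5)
Your setup is correct (the substitution $u=e^{-t}$, the identities for $\dot H$ and $\dot I$ in $u$, and the logarithmic derivative of $R$ all check out), but you have not proved the lemma: the entire burden is deferred to the ``transcendental inequality'' $R'(u)/R(u)>0$, and both of your proposed routes are left as plans rather than arguments. The series route hits exactly the obstruction you diagnose, a two-fold cancellation of the $1/u$ poles near $u=0$ (and of $1/(1-u)$ near $u=1$), so positivity rests on controlling subleading terms across all of $(0,1)$ -- you give no estimate, no truncation bound, and no coefficient comparison that would make this go through. The Pinelis monotone-L'H\^opital fallback is also only sketched, and by your own description would establish monotonicity on ``a suitable tail'' with the rest of $(0,\infty)$ handled numerically; a numerical check on a bounded interval does not close a strict inequality on an open set without an explicit a priori modulus of continuity or derivative bound, which you don't supply. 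So as written this is a reduction to an unproved inequality, not a proof.

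The paper sidesteps all of this with a structural observation you missed: writing everything in terms of $t$ and using $e^t-1=e^t(1-e^{-t})$,
\begin{align*}
\frac{\dot H(t)\ln 2}{\dot I(t)}
\;=\;
\frac{1-e^{-t}}{t}
\;+\;
\left(\frac{1-e^{-t}}{t}\right)^{2}\cdot\bigl(-e^{t}\ln(1-e^{-t})\bigr),
\end{align*}
a sum of products of three functions, $\frac{1-e^{-t}}{t}$, its square, and $-e^{t}\ln(1-e^{-t})$, each of which is positive and decreasing on $(0,\infty)$. Decreasingness of $\frac{1-e^{-t}}{t}$ follows from $e^t\ge 1+t$, and decreasingness of $-e^t\ln(1-e^{-t})$ follows from $x\ge\ln(1+x)$ applied at $x=1/(e^t-1)$ -- two one-line derivative checks with no endpoint cancellations and no series. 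If you want to salvage your $u$-variable framing, the same decomposition reads $R(u)=\frac{1-u}{-\ln u}+\left(\frac{1-u}{\ln u}\right)^{2}\cdot\frac{-\ln(1-u)}{u}$, and you would then show each factor is positive and increasing in $u$; but the point is that the ratio should be split additively into manifestly monotone pieces rather than attacked whole through its logarithmic derivative.
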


\begin{figure}[h!]
\centering
    \includegraphics[width=0.6\linewidth]{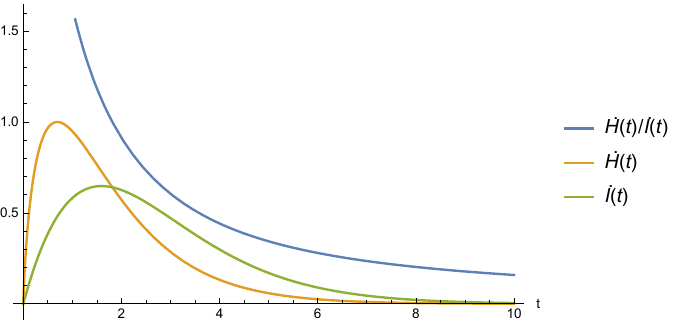}
    \caption{\label{fig:hi}$\dot{H}(t),\dot{I}(t)$ and $\dot{H}(t)/\dot{I}(t)$}
\end{figure}

Lemma~\ref{lem:monotonicity} is
the critical observation.
Although the \emph{cost} $\dot{H}(t)$ 
and \emph{value} $\dot{I}(t)$ fluctuate with $t$, the cost-per-unit-value 
\emph{only improves with time}. 
In other words, the optimum moment to ``kill'' any cell $c_i$ should be \emph{never}, 
and any linearizable sketch that routinely
kills cells prematurely should, on average, 
perform strictly worse than $\PCSA$---the ultimate pacifist sketch.

The rest of the proof formalizes this intuition.  
One difficulty is that $H_0/I_0$ is not a \emph{hard} lower bound
at any particular moment in time.  For example, 
if we just want to perform well when the cardinality $\lambda$
is in, say, $[10^6, 2\cdot 10^6]$, then we can easily beat $H_0/I_0$ by a constant factor.\footnote{Clifford and Cosma~\cite{CliffordC12} calculated
the optimal Fisher information for Bernoulli observables when $\lambda$ was known to lie in a small range.}
However, if we want to perform well over a sufficiently long 
time interval $[a,b]$, then, at best, 
the worst case efficiency over that interval tends
to $H_0/I_0$ in the limit.

\medskip 

Define $Z_{i,\lambda},Y_{i,\lambda}$ to be the variables $Z_i,Y_i$ at time $\lambda$.
Let $\mathbf{Y}=\mathbf{Y}_{|\mathcal{C}|-1} = (Y_0,\ldots,Y_{|\mathcal{C}|-1})$ be the vector of indicators encoding the state of the sketch and $\mathbf{Y}_{[\lambda]} = (Y_{0,\lambda},\ldots,Y_{|\mathcal{C}|-1,\lambda})$ 
refer to $\mathbf{Y}$ at time $\lambda$.

\begin{prop}\label{prop:phi_monotone}
For any linearizable sketch and any $c_i\in\mathcal{C}$, $\Pr(\phi(\mathbf{Y}_{i-1,\lambda})=0)$ is non-increasing with $\lambda$.
\end{prop}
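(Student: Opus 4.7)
The plan is to establish monotonicity via a natural coupling of the Poissonized dart process across time, combined with an induction on the cell index that exploits the monotonicity of $\phi$.

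First I would set up the coupling. Realize all dart throws on a single probability space as a Poisson point process on $[0,1]^2 \times [0,\infty)$ of intensity $1$, where the third coordinate is the arrival time. Under this coupling, $Z_{j,\lambda}$---the indicator that cell $c_j$ has been hit by some dart by time $\lambda$---is pointwise non-decreasing in $\lambda$: if $\lambda_1 < \lambda_2$, then every dart that has landed in $c_j$ by time $\lambda_1$ has also landed by time $\lambda_2$, so $Z_{j,\lambda_1} \le Z_{j,\lambda_2}$ almost surely.

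Next I would prove by induction on $j \in \{0,1,\ldots,|\mathcal{C}|-1\}$ that $Y_{j,\lambda}$ is non-decreasing in $\lambda$ almost surely under this coupling. For the base case $j=0$, we have $Y_{0,\lambda} = Z_{0,\lambda} \vee \phi(\emptyset)$, where $\phi(\emptyset)$ is a constant and $Z_{0,\lambda}$ is non-decreasing in $\lambda$, so $Y_{0,\lambda}$ is non-decreasing in $\lambda$. For the inductive step, assume $Y_{k,\lambda}$ is non-decreasing in $\lambda$ for every $k < j$. Then $\mathbf{Y}_{j-1,\lambda}$ is coordinate-wise non-decreasing in $\lambda$. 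Since $\phi$ is monotone (as required by the definition of a linearizable sketch), $\phi(\mathbf{Y}_{j-1,\lambda})$ is non-decreasing in $\lambda$. Combining with the non-decreasing $Z_{j,\lambda}$, the relation $Y_{j,\lambda} = Z_{j,\lambda} \vee \phi(\mathbf{Y}_{j-1,\lambda})$ shows $Y_{j,\lambda}$ is also non-decreasing in $\lambda$.

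Applying the induction through $j = i-1$ and invoking monotonicity of $\phi$ one final time yields that $\phi(\mathbf{Y}_{i-1,\lambda})$ is non-decreasing in $\lambda$ almost surely. Hence, for any $\lambda_1 < \lambda_2$, the event $\{\phi(\mathbf{Y}_{i-1,\lambda_2})=0\}$ is contained in $\{\phi(\mathbf{Y}_{i-1,\lambda_1})=0\}$, which gives $\Pr(\phi(\mathbf{Y}_{i-1,\lambda_2})=0) \le \Pr(\phi(\mathbf{Y}_{i-1,\lambda_1})=0)$, as claimed. There is no real obstacle here; the only delicate point is choosing the right coupling so that monotonicity becomes a pointwise (rather than merely stochastic) statement, after which the monotonicity of $\phi$ does all the work.
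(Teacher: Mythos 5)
Your proposal is correct and follows essentially the same route as the paper: the paper's one-line proof cites Rule (R3) (once a cell is occupied, it stays occupied) together with the monotonicity of $\phi$, and your coupling-plus-induction argument is precisely a detailed derivation of (R3) for linearizable sketches followed by the same final application of $\phi$'s monotonicity. The Poisson coupling you use to turn stochastic monotonicity into pointwise monotonicity is implicit in the paper's Poissonized dartboard model.
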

\begin{proof}
Follows from Axiom 3 
and the monotonicity of $\phi$.
\end{proof}

The proof depends on \emph{linearizability} mainly
through Lemma~\ref{lem:chain_rule}, which uses the chain rule to bound aggregate entropy/information in 
terms of a weighted sum
of cell entropy/information.
The weights here correspond
to the probability that the
cell is still alive, which,
by Proposition~\ref{prop:phi_monotone}, 
is non-increasing over time.
\begin{lemma}\label{lem:chain_rule}
For any linearizable sketch and any $\lambda>0$, we have
\begin{align*}
    H(\mathbf{Y}_{[\lambda]}) &=  \sum_{i=0}^{|\mathcal{C}|-1} \ \dot{H}(p_i\lambda)\Pr(\phi(\mathbf{Y}_{i-1,\lambda})=0),\\
    \lambda^2\cdot I_{\mathbf{Y}}(\lambda) & =\sum_{i=0}^{|\mathcal{C}|-1}  \dot{I}(p_i\lambda)\Pr(\phi(\mathbf{Y}_{i-1,\lambda})=0).
\end{align*}
\end{lemma}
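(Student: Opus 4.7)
The plan is to apply the chain rules for entropy and Fisher information to the vector $\mathbf{Y}_{[\lambda]}$ with respect to the cell ordering $(c_0,\ldots,c_{|\mathcal{C}|-1})$, and then simplify each conditional term using the linearizability structure $Y_{i,\lambda}=Z_{i,\lambda}\vee \phi(\mathbf{Y}_{i-1,\lambda})$ together with the independence of darts across cells that Poissonization gives us.

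First, by the entropy chain rule,
\begin{equation*}
H(\mathbf{Y}_{[\lambda]}) = \sum_{i=0}^{|\mathcal{C}|-1} H(Y_{i,\lambda}\mid \mathbf{Y}_{i-1,\lambda}),
\end{equation*}
and analogously, by the Fisher information chain rule,
\begin{equation*}
I_{\mathbf{Y}}(\lambda) = \sum_{i=0}^{|\mathcal{C}|-1} I_{Y_{i,\lambda}\mid \mathbf{Y}_{i-1,\lambda}}(\lambda).
\end{equation*}
I would then expand each conditional quantity by summing over realizations $\mathbf{y}_{i-1}$ of $\mathbf{Y}_{i-1,\lambda}$. The key observation is that for any fixed $\mathbf{y}_{i-1}$, the value of $\phi(\mathbf{y}_{i-1})$ is determined, and the linearizability equation $Y_{i,\lambda}=Z_{i,\lambda}\vee \phi(\mathbf{Y}_{i-1,\lambda})$ gives two cases. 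If $\phi(\mathbf{y}_{i-1})=1$, then $Y_{i,\lambda}=1$ almost surely given this event, so it contributes $0$ to both conditional entropy and conditional Fisher information (the likelihood is $\lambda$-independent, so the score vanishes). If $\phi(\mathbf{y}_{i-1})=0$, then $Y_{i,\lambda}=Z_{i,\lambda}$ given this event.

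The second key step is to argue that, conditional on $\mathbf{Y}_{i-1,\lambda}=\mathbf{y}_{i-1}$ with $\phi(\mathbf{y}_{i-1})=0$, $Z_{i,\lambda}$ is distributed exactly as its unconditional distribution. This follows from Poissonization: the dart counts in disjoint cells are independent, so $Z_{i,\lambda}$ is independent of $(Z_{0,\lambda},\ldots,Z_{i-1,\lambda})$, and $\mathbf{Y}_{i-1,\lambda}$ is a deterministic function of the latter by linearizability (induction on $i$). Hence the conditional distribution of $Z_{i,\lambda}$ is $\operatorname{Bernoulli}(1-e^{-p_i\lambda})$, and by Lemma~\ref{lem:hdot_idot} the conditional contributions are $\dot{H}(p_i\lambda)$ and $\dot{I}(p_i\lambda)/\lambda^2$ respectively. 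Combining the two cases yields
\begin{equation*}
H(Y_{i,\lambda}\mid \mathbf{Y}_{i-1,\lambda}) = \Pr(\phi(\mathbf{Y}_{i-1,\lambda})=0)\cdot \dot{H}(p_i\lambda),
\end{equation*}
and likewise $\lambda^2\cdot I_{Y_{i,\lambda}\mid \mathbf{Y}_{i-1,\lambda}}(\lambda) = \Pr(\phi(\mathbf{Y}_{i-1,\lambda})=0)\cdot \dot{I}(p_i\lambda)$. Summing over $i$ gives both identities.

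The part requiring the most care is the Fisher information piece: the chain rule holds under mild regularity conditions, and one must check that conditioning on the discrete event $\{\phi(\mathbf{Y}_{i-1,\lambda})=0\}$ preserves the score-function identity. This is essentially automatic here because the support of $Y_{i,\lambda}$ conditional on $\mathbf{y}_{i-1}$ is either $\{0,1\}$ (when $\phi=0$) or the singleton $\{1\}$ (when $\phi=1$), both independent of $\lambda$, so differentiation under the summation over $\mathbf{y}_{i-1}$ is justified as in Section~\ref{sect:poisson_regular}. The weighting factor $\Pr(\phi(\mathbf{Y}_{i-1,\lambda})=0)$, which depends on $\lambda$, is exactly the quantity that by Proposition~\ref{prop:phi_monotone} is non-increasing in $\lambda$---a fact that will be used in the subsequent lower bound argument but is not needed for this identity.
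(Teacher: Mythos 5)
Your proof is correct and takes the same approach as the paper's: apply the chain rules for entropy and Fisher information to the ordered vector $\mathbf{Y}_{[\lambda]}$, then simplify each conditional term using $Y_i = Z_i \vee \phi(\mathbf{Y}_{i-1})$ and the Poissonization-induced independence of $Z_i$ from $\mathbf{Y}_{i-1}$. You simply spell out the middle step (the case split on $\phi(\mathbf{y}_{i-1})$ and the justification that $Z_i$ has its unconditional Bernoulli distribution given $\mathbf{Y}_{i-1}$) that the paper states in one line as $H(Y_{i,\lambda}\mid\mathbf{Y}_{i-1,\lambda}) = H(Z_{i,\lambda})\Pr(\phi(\mathbf{Y}_{i-1,\lambda})=0)$.
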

\begin{proof}
By the chain rule of entropy, we have
\begin{align*}
    H(\mathbf{Y}_{[\lambda]})&= \sum_{i=0}^{|\mathcal{C}|-1} H(Y_{i,\lambda}\mid \mathbf{Y}_{i-1,\lambda})= \sum_{i=0}^{|\mathcal{C}|-1} H(Z_{i,\lambda})\Pr(\phi(\mathbf{Y}_{i-1,\lambda})=0)
    = \sum_{i=1}^{|\mathcal{C}|-1} \ \dot{H}(p_i\lambda)\Pr(\phi(\mathbf{Y}_{i-1,\lambda})=0),
\end{align*}
where the last equality follows from Lemma \ref{lem:hdot_idot}.
Similarly, by the chain rule of Fisher information number, we have
\begin{align*}
    \lambda^2\cdot I_{\mathbf{Y}}(\lambda)&=\sum_{i=0}^{|\mathcal{C}|-1} \lambda^2\cdot I_{Y_{i}\mid \mathbf{Y}_{i-1}}(\lambda) = \sum_{i=0}^{|\mathcal{C}|-1} \lambda^2\cdot I_{Z_{i}}(\lambda)\Pr(\phi(\mathbf{Y}_{i-1})=0)  =\sum_{i=0}^{|\mathcal{C}|-1}  \dot{I}(p_i\lambda)\Pr(\phi(\mathbf{Y}_{i-1})=0),
\end{align*}
where the last equality follows from Lemma \ref{lem:hdot_idot}.
\end{proof}

Definition~\ref{def:bfHI} introduces
some useful notation for talking about the
aggregate contributions of \emph{some} cells to
\emph{some} period of time (on a log-scale) $W=[a,b]$, i.e., all $\lambda\in[e^a,e^b]$.
\begin{definition}\label{def:bfHI}
Fix a linearizable sketch.
Let $C\subset \mathcal{C}$ be a collection of cells and $W\subset \mathbb{R}$ be an interval
of the reals. Define:
\begin{align*}
    \mathbf{H}(C\to W) &= \int_W \sum_{c_i\in C} \dot{H}(p_ie^x)\Pr(\phi(\mathbf{Y}_{i-1,e^x})=0) dx,\\
    \mathbf{I}(C\to W) &= \int_W \sum_{c_i\in C} \dot{I}(p_ie^x)\Pr(\phi(\mathbf{Y}_{i-1,e^x})=0) dx.
\end{align*}
\end{definition}

A linearizable sketching \emph{scheme} is 
really an algorithm that takes a few parameters, 
such as a desired space bound and a maximum allowable cardinality, and produces a partition $\mathcal{C}$
of the \Dartboard, a function $\phi$ (implicitly defining the state space $\mathcal{S}$), 
and a cardinality estimator $\hat{\lambda} : \mathcal{S}\rightarrow \mathbb{R}$.  
Since we are concerned with asymptotic performance we can assume $\hat{\lambda}$ is MLE,
so the sketch is captured by just $\mathcal{C},\phi$.

In Theorem~\ref{thm:lowerbound} 
we assume that such a linearizable
sketching scheme has produced $\mathcal{C},\phi$ such
that the \emph{entropy} (i.e., space, in expectation)
is \emph{at most} $\tilde{H}$ at all times, and that the
normalized information is \emph{at least} $\tilde{I}$
for all times $\lambda\in [e^a,e^b]$.
It is proved that $\tilde{H}/\tilde{I} \geq (1-o_d(1))H_0/I_0$, where $d=b-a$ is the width of the interval and $o_d(1)\to 0$ as $d\to \infty$.  The take-away message (proved in Corollary~\ref{cor:lb-scaleinvariant-H0I0}) 
is that all 
linearizable sketches have $\fish$-number at least $H_0/I_0$.

\begin{theorem}\label{thm:lowerbound}
Fix reals $a<b$ with $d=b-a>1$. Let $\tilde{H},\tilde{I}>0$. If a linearizable sketch satisfies that 
\begin{itemize}
    \item For all $\lambda>0$, $H(\mathbf{Y}_{[\lambda]})\leq \tilde{H}$,
    \item For all $\lambda \in[e^a,e^{b}]$, $\lambda^2\cdot I_{\mathbf{Y}}(\lambda) \geq \tilde{I}$,
\end{itemize}
 then 
\begin{align*}
    \frac{\tilde{H}}{\tilde{I}} \;\geq\; \frac{H_0}{I_0}\frac{1-\max\{8d^{-1/4},5e^{-d/2}\}}{1+\frac{(344+4\sqrt{d})}{d}\frac{H_0}{I_0} \left(1-\max\{8d^{-1/4},5e^{-d/2}\}\right)}
    \;=\; (1-o_d(1))\frac{H_0}{I_0}
    .
\end{align*}
\end{theorem}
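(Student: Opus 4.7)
The plan is to integrate the chain-rule identities of Lemma~\ref{lem:chain_rule} over $x\in W:=[a,b]$ (with $\lambda=e^x$). The hypotheses together with Definition~\ref{def:bfHI} give $\mathbf{H}(\mathcal{C}\to W)\le d\tilde H$ and $\mathbf{I}(\mathcal{C}\to W)\ge d\tilde I$. It then suffices to establish an inequality of the shape
\[
\mathbf{H}(\mathcal{C}\to W)\;\ge\;A\cdot\tfrac{H_0}{I_0}\,\mathbf{I}(\mathcal{C}\to W)\;-\;BA\cdot\tfrac{H_0}{I_0}\cdot d\tilde H,
\]
for the specific constants $A,B$ appearing in the theorem, because substituting $d\tilde H\ge\mathbf{H}(\mathcal{C}\to W)$ on the left and $\mathbf{I}(\mathcal{C}\to W)\ge d\tilde I$ on the right, and then solving for $\tilde H/\tilde I$, yields exactly the claimed ratio.

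The per-cell ingredient is a correlation inequality. For a cell $c_i$ of size $p_i$, change variables $y=x+\log p_i$ to rewrite the extended integrals as
\[
\bar{\mathbf{H}}_i:=\int_{\mathbb{R}}\dot H(p_ie^x)w_i(x)\,dx=\int_{\mathbb{R}}\dot H(e^y)\,\tilde w_i(y)\,dy,
\]
and analogously $\bar{\mathbf{I}}_i$, where $w_i(x):=\Pr(\phi(\mathbf{Y}_{i-1,e^x})=0)$ and $\tilde w_i(y):=w_i(y-\log p_i)$. By Proposition~\ref{prop:phi_monotone}, $\tilde w_i$ is non-increasing in $y$, and by Lemma~\ref{lem:monotonicity}, so is $r(y):=\dot H(e^y)/\dot I(e^y)$. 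Let $\nu$ be the probability measure on $\mathbb{R}$ with density $\dot I(e^y)/I_0$, which is well-defined by Lemma~\ref{lem:integration}. Chebyshev's sum (FKG) inequality applied to the two co-monotone functions $r$ and $\tilde w_i$ gives
\[
\bar{\mathbf{H}}_i=I_0\,\mathbb{E}_\nu[r\,\tilde w_i]\;\ge\;I_0\,\mathbb{E}_\nu[r]\,\mathbb{E}_\nu[\tilde w_i]=\tfrac{H_0}{I_0}\,\bar{\mathbf{I}}_i,
\]
with equality when $w_i\equiv 1$. Summing and decomposing $\bar{\mathbf{H}}_i=\mathbf{H}_i(W)+H_i^L+H_i^R$ and analogously for information reduces the task to bounding the signed tail deficit
$\Delta:=\sum_i(H_i^L+H_i^R)-(H_0/I_0)\sum_i(I_i^L+I_i^R)$ by roughly $BA\cdot(H_0/I_0)\cdot d\tilde H$.

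This tail control is the main obstacle and is where the explicit error terms originate. Classify cells by their natural scale $s_i:=-\log p_i$ relative to a shrunken window $[a+T,b-T]$ with $T\asymp\sqrt d$. For cells with $s_i$ inside this shrunken window, the per-cell tails $H_i^L,H_i^R$ are super-exponentially small, because $\dot H(e^y)\sim e^y e^{-e^y}$ as $y\to+\infty$ and $\sim|y|e^y$ as $y\to-\infty$ (and similarly for $\dot I$); summing over the at most $O(d\tilde H/H_0)$ such cells produces the $5e^{-d/2}$-type contribution to the final bound. For boundary cells with $s_i\notin[a+T,b-T]$, per-cell tails can be large, so we instead control their aggregate by invoking the hypothesis $\sum_i\dot H(p_ie^x)w_i(x)=H(\mathbf{Y}_{[e^x]})\le\tilde H$ at \emph{every} $x$ (not only $x\in W$): integrating this bound over bands of width $O(T)$ outside $W$ yields a total tail entropy of $O(T\tilde H)=O(\sqrt d\cdot\tilde H)$, which produces the $(344+4\sqrt d)/d$ coefficient $B$; optimizing the threshold $T$ against the two error sources yields the polynomial $8d^{-1/4}$ factor. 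The reason the leading error in $A$ is polynomial $d^{-1/4}$ rather than purely exponential is precisely that boundary cells can only be charged against the uniform entropy budget $\tilde H$ and not against exponentially decaying per-cell tails. Assembling all estimates and performing the rearrangement above gives the theorem.
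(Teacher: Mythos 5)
Your high-level strategy shares the two essential ingredients of the paper's proof: the Chebyshev/FKG-type correlation inequality between $\dot{H}/\dot{I}$ (non-increasing by Lemma~\ref{lem:monotonicity}) and the non-increasing weight $w_i$ (Proposition~\ref{prop:phi_monotone}), and a threshold-based split of cells with tails controlled by the uniform entropy budget $\tilde{H}$. Your per-cell FKG computation $\bar{\mathbf{H}}_i\ge (H_0/I_0)\bar{\mathbf{I}}_i$ is correct and is the content of the paper's Lemma~\ref{lem:rational_inequality}. The organization, however, is genuinely different: the paper applies the correlation inequality \emph{within} the window to strip out the weight, then extends only the \emph{lower} limit of integration (using that $\dot H/\dot I$ is decreasing), and bounds the resulting one-sided entropy-tail ratio via Lemma~\ref{lem:H_relax}. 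Consequently the paper needs only a \emph{one-sided} threshold set $\mathcal{C}^*=\{p_i<e^{-a-\Delta}\}$ and never counts cells; it bounds ratios of integrals and relegates the Fisher-information contribution of the complement $\mathcal{C}\setminus\mathcal{C}^*$ to a separate lemma (Lemma~\ref{lem:I_relax}). Your route applies FKG over all of $\mathbb{R}$ first, uses a two-sided shrunken window, and then controls tails by a cell-count argument. Both can plausibly be made to work, and the paper's version is tighter in its bookkeeping.

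That said, several details in your plan do not withstand scrutiny. First, the threshold is not $T\asymp\sqrt d$; the paper takes $\Delta=\tfrac{1}{2}\ln d$, so $e^\Delta=\sqrt d$ is what produces the $4\sqrt d$ (via the lower bound $\underline{h}_e([-\Delta,0])\ge e^{-\Delta-1}$ in Lemma~\ref{lem:I_relax}), while $8d^{-1/4}$ comes directly from $8e^{-t/2}$ at $t=\tfrac{1}{2}\ln d$ in Lemma~\ref{lem:H_relax} — not from optimizing a free parameter. Your attribution of $5e^{-d/2}$ to a cell count is also off: that term arises in Lemma~\ref{lem:H_relax} from the regime $-t+d\le\ln\ln 2$ (i.e., the window $[a,b]$ itself sitting at small scales relative to the cell), not from multiplying a super-exponentially small per-cell tail by a cell count. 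Second, your ``at most $O(d\tilde H/H_0)$ interior cells'' bound is not airtight: a cell $c_i$ may be killed well before $\lambda=e^{s_i}$ (so $w_i$ is tiny at its natural scale), making its contribution to the entropy budget negligible — so the pigeonhole does not bound the raw number of cells. You would need to weight the count by $w_i$ and show the weighted tails are small, which is precisely what the paper's integral-ratio formulation of Lemma~\ref{lem:H_relax} accomplishes cleanly. Third, your final rearrangement gives $\tilde H/\tilde I\ge \tfrac{H_0/I_0}{1+K/d}$ for a single aggregate $K$, whereas the theorem's form has two separate error sources combined multiplicatively; matching them would require you to show
$\sum_i\bigl(H_i^L+H_i^R\bigr)\le\tilde{H}\cdot d\cdot\frac{\epsilon_1+R(1-\epsilon_1)\epsilon_2}{1-\epsilon_1}$ with $R=H_0/I_0$, $\epsilon_1=\max(8d^{-1/4},5e^{-d/2})$, $\epsilon_2=(344+4\sqrt d)/d$, a computation your outline does not establish. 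The core ideas are correct; the concrete route as sketched contains real gaps.
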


The expression for this $1-o_d(1)$ factor arises from the following two technical lemmas,
proved in Appendix~\ref{sect:proofs_lowerbound}.

\begin{lemma}\label{lem:H_relax}
For any $d>0$ and $t\geq \frac{1}{2}\ln d$,
\begin{align*}
   \frac{\int_{-\infty}^{-t} \dot{H}(e^x) dx}{\int_{-\infty}^{-t+d} \dot{H}(e^x) dx} \leq \max\{8d^{-1/4},5e^{-d/2}\}.
\end{align*}
\end{lemma}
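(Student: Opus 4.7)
The strategy is to reduce the inequality to an explicit tail comparison. Writing $G(T) := \int_{-\infty}^{-T}\dot{H}(e^x)dx = \int_T^\infty \dot{H}(e^{-u})du$, the claim becomes
\[
\frac{G(t)}{G(t-d)} \leq \max(8d^{-1/4},\, 5e^{-d/2}) \quad \text{for } t \geq \tfrac{1}{2}\ln d.
\]
Because $G$ is decreasing, the ratio is always at most $1$, so the statement is only non-trivial once $8d^{-1/4} < 1$, i.e.\ once $d > 4096$; for smaller $d$ it holds trivially. From here the plan is: (i) derive a clean pointwise estimate of $\dot{H}(t)$ near $0$; (ii) upper-bound $G(t)$ explicitly; and (iii) lower-bound $G(t-d)$ using two different estimates depending on whether $t \leq d$ or $t > d$.

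For (i), recall $\dot{H}(t) = \tfrac{1}{\ln 2}(te^{-t}-(1-e^{-t})\ln(1-e^{-t}))$. Since $x\mapsto -x\ln x$ is increasing on $[0,1/e]$ and $1-e^{-t}\leq t$, one gets $-(1-e^{-t})\ln(1-e^{-t})\leq -t\ln t$ on $(0,1]$, hence the upper bound $\dot{H}(t)\leq t(1-\ln t)/\ln 2$. A matching lower bound $\dot{H}(t)\geq -t\ln t/(2\ln 2)$ on $(0,1]$ follows from $1-e^{-t}\geq t/2$ and $e^{-t}\geq e^{-1}$. Substituting $t=e^{-u}$ and using the identity $\int_T^\infty(1+u)e^{-u}du=(T+2)e^{-T}$ yields the working bound $G(T)\leq (T+2)e^{-T}/\ln 2$ for $T\geq 0$.

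For the case split, first suppose $t\leq d$, so $t-d\leq 0$ and $G(t-d)\geq G(0) =: c_\ast = \int_0^1 \dot{H}(s)/s\, ds$, a concrete positive constant (the lower bound on $\dot{H}$ above shows $c_\ast\geq 1/(2\ln 2)\cdot \int_0^1(-\ln s)ds=1/(2\ln 2)$, and with a little more care $c_\ast\geq 1$). Since $(T+2)e^{-T}$ is decreasing in $T$, the ratio is maximised at $t=\tfrac{1}{2}\ln d$, giving
\[
\frac{G(t)}{G(t-d)} \leq \frac{2+\tfrac{1}{2}\ln d}{c_\ast \sqrt{d}\ln 2},
\]
which is $o(d^{-1/4})$, and a direct calculation shows it is below $8d^{-1/4}$ for every $d>4096$. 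Now suppose $t>d$. Apply the lower bound on $\dot{H}$ to obtain $G(T)\geq C_1(T+1)e^{-T}$ for $T\geq 2$ (via $\int_T^\infty u e^{-u}du=(T+1)e^{-T}$), while $G(T)\geq G(2)=:c_{\ast\ast}>0$ for $T\in[0,2]$. Combined with $G(t)\leq (t+2)e^{-t}/\ln 2$, in both sub-cases the ratio reduces to something of the form $\frac{C(d+t)}{t-d+\Theta(1)}e^{-d}$, which is maximised near $t=d$ and evaluates to $O(d\cdot e^{-d})$. Since $de^{-d}=o(e^{-d/2})$, one gets the bound $5e^{-d/2}$ for $d$ beyond an explicit threshold.

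The main obstacle is purely quantitative: matching the stated constants $8$ and $5$ requires keeping explicit lower bounds on $c_\ast$ and $c_{\ast\ast}$ and controlling the quantitative error in the asymptotic $\dot{H}(e^{-u}) = (u+1)e^{-u}/\ln 2 + O((1+u)e^{-2u})$ rather than using only the crude monotone inequalities above. None of this is conceptually deep, but it turns the proof into a careful exercise in elementary calculus, and the cleanest estimates come from monotonicity/convexity of $\dot{H}$ near $0$ together with the closed-form $\int (1+u)e^{-u}du$.
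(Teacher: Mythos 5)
Your proposal is correct in outline but takes a genuinely different route from the paper's. The paper's proof of Lemma~\ref{lem:H_relax} rests on two ingredients it develops separately: the clean pointwise bound $\dot{H}_e(t)\le 2\sqrt{t}$ (Lemma~\ref{lem:H_bound}), which immediately gives $\int_{-\infty}^{-t}\dot{H}_e(e^x)\,dx\le 4e^{-t/2}\le 4d^{-1/4}$; and Lemma~\ref{lem:HtpHt}, the monotonicity in $t$ of the ratio $\dot{H}_e(t/p)/\dot{H}_e(t)$ on $(0,\ln 2)$. The paper then splits on whether $-t+d$ exceeds $\ln\ln 2$: in the first regime it lower-bounds the denominator by the constant $\tfrac12 = \int_{-\infty}^{\ln\ln 2}e^xe^{-e^x}\,dx$, and in the second it uses the monotone ratio lemma to compare the two integrals \emph{pointwise}, giving $\dot{H}_e(\ln 2/e^d)/\dot{H}_e(\ln 2)<5e^{-d/2}$. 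Your approach replaces the $\sqrt{t}$ envelope and the ratio-monotonicity lemma with the sharper asymptotic $\dot{H}(e^{-u})\approx(1+u)e^{-u}/\ln 2$ and explicit antiderivatives of $(1+u)e^{-u}$, and splits on $t\le d$ versus $t>d$ rather than on the sign of $(-t+d)-\ln\ln 2$; the observation that the claim is vacuous for $d\le 4096$ (ratio $\le 1\le 8d^{-1/4}$) usefully restricts attention to the regime $t\ge\tfrac12\ln d>4$. That restriction also quietly repairs a gap in your pointwise upper bound: the chain $-(1-e^{-t})\ln(1-e^{-t})\le -t\ln t$ fails for $t$ near $1$ (at $t=1$ the left side is $\approx 0.29$ and the right is $0$), so the bound $\dot{H}(e^{-u})\le(1+u)e^{-u}/\ln 2$ is only directly justified by your argument for $u\ge 1$, i.e.\ $t\le 1/e$; since the numerator only ever uses $T=t>4$, this is harmless, but your claim of validity on all of $(0,1]$ is an overstatement of what your argument establishes. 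What the paper's approach buys is that Lemma~\ref{lem:HtpHt} does the quantitative bookkeeping structurally; what your approach buys is that everything reduces to the closed-form integrals $\int_T^\infty(1+u)e^{-u}\,du=(T+2)e^{-T}$ and $\int_T^\infty ue^{-u}\,du=(T+1)e^{-T}$, at the price of somewhat more constant-chasing (your $c_\ast$, $c_{\ast\ast}$, and the threshold $d=4096$), which you correctly flag as the remaining work.
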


\begin{lemma}\label{lem:I_relax}
Let $d=b-a>1$, $\Delta  = \frac{1}{2}\ln d$ and $\mathcal{C}^*=\{c_i\in \mathcal{C}\mid p_i<e^{-a-\Delta}\}$. Assume that for all $\lambda>0$, $H(\mathbf{Y}_{[\lambda]})\leq \tilde{H}$ (the first condition of Theorem \ref{thm:lowerbound}). Then we have
\begin{align*}
    \mathbf{I}(\mathcal{C}\setminus\mathcal{C}^*\to [a,b]) \leq (344+4e^{\Delta})\tilde{H}.
\end{align*}
\end{lemma}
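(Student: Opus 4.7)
\textbf{Proof plan for Lemma~\ref{lem:I_relax}.}

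The strategy is to reduce the integrated information quantity $\mathbf{I}(\mathcal{C}\setminus\mathcal{C}^* \to [a,b])$ to an \emph{instantaneous} entropy bound, evaluated at the single time $\lambda = e^a$. The key observation is that for cells in $\mathcal{C}\setminus\mathcal{C}^*$ (those of size $p_i \geq e^{-a-\Delta}$), at time $e^a$ the quantity $t_a := p_i e^a$ already lies in $[e^{-\Delta},\infty)$, which is large enough that $\dot{H}(t_a)$ is not too small relative to a cell's entire remaining information budget $\int_{t_a}^\infty \frac{u}{e^u-1}\,du$.

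\textbf{Step 1 (reduction to a single time slice).} Let $q_i(\lambda) := \Pr(\phi(\mathbf{Y}_{i-1,\lambda})=0)$. By Proposition~\ref{prop:phi_monotone}, $q_i$ is non-increasing, so $q_i(e^x) \leq q_i(e^a)$ for $x \geq a$. Hence
\[
\mathbf{I}(\mathcal{C}\setminus\mathcal{C}^* \to [a,b]) \;\leq\; \sum_{c_i \in \mathcal{C}\setminus\mathcal{C}^*} q_i(e^a)\int_a^b \dot{I}(p_i e^x)\,dx.
\]
Substituting $u = p_i e^x$ (so $du = u\,dx$) and using $\dot{I}(u) = u^2/(e^u-1)$ gives the clean form
\[
\int_a^b \dot{I}(p_i e^x)\,dx \;=\; \int_{p_i e^a}^{p_i e^b} \frac{u}{e^u-1}\,du \;\leq\; \int_{t_a}^\infty \frac{u}{e^u-1}\,du, \qquad t_a = p_i e^a.
\]

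\textbf{Step 2 (per-cell ratio to instantaneous entropy).} We split $\mathcal{C}\setminus\mathcal{C}^*$ by the value of $t_a$:
\[
\mathcal{C}_A = \{c_i : t_a \in [e^{-\Delta},\,1]\}, \qquad \mathcal{C}_B = \{c_i : t_a > 1\}.
\]
For $\mathcal{C}_A$: a direct derivative computation shows $\dot{H}$ is increasing on $[0,t^*]$ for some $t^*>1$, so $\dot{H}(t_a) \geq \dot{H}(e^{-\Delta})$; combining $\dot H(t) = h(e^{-t})/\ln 2$ with the small-argument estimate $h(x) \approx x(1-\ln x)$ yields $\dot{H}(e^{-\Delta}) \geq \tfrac{1}{\ln 2}e^{-\Delta}$. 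Since $\int_{t_a}^\infty \frac{u}{e^u-1}\,du \leq I_0 = \pi^2/6$ (Lemma~\ref{lem:integration}), we obtain
\[
\int_{t_a}^\infty \frac{u}{e^u-1}\,du \;\leq\; \frac{\pi^2\ln 2}{6}\,e^\Delta\,\dot{H}(t_a) \;\leq\; 4e^\Delta\,\dot{H}(t_a).
\]
For $\mathcal{C}_B$: for $t \geq 1$, the bounds $e^u-1 \geq (1-e^{-1})e^u$ and $-\ln(1-e^{-t}) \geq e^{-t}$ give $\int_t^\infty \frac{u}{e^u-1}\,du \leq \tfrac{e}{e-1}(t+1)e^{-t}$ and $\dot{H}(t) \geq \tfrac{1}{\ln 2}(t + 1 - e^{-1})e^{-t}$, so the ratio is bounded by an absolute constant, comfortably below $344$.

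\textbf{Step 3 (sum and apply entropy bound).} Multiplying each per-cell bound by $q_i(e^a)$ and summing:
\[
\mathbf{I}(\mathcal{C}\setminus\mathcal{C}^*\to[a,b]) \;\leq\; 4e^\Delta \!\!\sum_{c_i \in \mathcal{C}_A}\!\! q_i(e^a)\dot{H}(p_i e^a) \;+\; 344 \!\!\sum_{c_i \in \mathcal{C}_B}\!\! q_i(e^a)\dot{H}(p_i e^a).
\]
Since $\mathcal{C}_A, \mathcal{C}_B \subseteq \mathcal{C}$ and each sum is a partial sum of the full entropy expression from Lemma~\ref{lem:chain_rule},
\[
\sum_{c_i \in \mathcal{C}_A \cup \mathcal{C}_B}\!\! q_i(e^a)\dot{H}(p_i e^a) \;\leq\; H(\mathbf{Y}_{[e^a]}) \;\leq\; \tilde{H},
\]
so $\mathbf{I}(\mathcal{C}\setminus\mathcal{C}^*\to[a,b]) \leq (344+4e^\Delta)\tilde{H}$.

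\textbf{Main obstacle.} The technical crux is the pointwise inequality for $\mathcal{C}_A$: here $\dot{H}(t_a)$ can be as small as $\Theta(e^{-\Delta})$ while $\int_{t_a}^\infty \frac{u}{e^u-1}du$ can be nearly the full constant $I_0$. The factor $e^\Delta$ in the lemma is therefore essentially tight, forced by the ``worst'' cells which sit right at the threshold $p_i = e^{-a-\Delta}$ separating $\mathcal{C}^*$ from its complement. The large-$t_a$ regime ($\mathcal{C}_B$) is routine; the small-$t_a$ regime ($\mathcal{C}_A$) is what drives the form of the bound.
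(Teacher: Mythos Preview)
Your approach is correct and in fact simpler than the paper's, modulo one small slip. You assert that $\dot H$ is increasing on $[0,t^*]$ for some $t^*>1$; a direct computation gives $\dot H_e'(t)=-e^{-t}\ln(e^t-1)$, which vanishes at $t^*=\ln 2<1$. So $\dot H$ is unimodal with peak at $\ln 2$, and on $[e^{-\Delta},1]$ the correct lower bound is $\dot H(t_a)\ge\min(\dot H(e^{-\Delta}),\dot H(1))$. This is harmless: $\dot H(1)\approx 0.95$ is a fixed positive constant, and via $\dot H_e(t)\ge te^{-t}$ one has $\dot H(e^{-\Delta})\ge e^{-\Delta-1}/\ln 2$, so in either regime $I_0/\dot H(t_a)\le 4e^\Delta$ and your bound on $\mathcal{C}_A$ survives unchanged.

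The route genuinely diverges from the paper's for the large cells $\mathcal{C}_B$ (those with $p_i\ge e^{-a}$). You evaluate the entropy hypothesis at the single time $e^a$ and bound the per-cell ratio $\big(\int_{t_a}^\infty \tfrac{u}{e^u-1}\,du\big)\big/\dot H(t_a)$ uniformly for $t_a\ge 1$; this is clean and yields a constant around $1.4$, far below $344$. The paper instead buckets these cells as $B_k=\{c_i:p_i\in[e^{-a+k},e^{-a+k+1})\}$ and, for each $k$, applies the entropy hypothesis at a \emph{different} time $\lambda=e^{a-k}$ (where $p_i\lambda\in[1,e]$, so $\dot H_e\ge 0.24$), thereby bounding the weight $w(B_k)=\sum_{c_i\in B_k}q_i(e^{a-k})$; it then needs a separate tail estimate (Lemma~\ref{lem:I_sum}) to control $\sum_k\int_0^\infty\sup_{w\in[k,k+1)}\dot I(e^{x+w})\,dx\le 119$. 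Your argument avoids both the bucketing and the auxiliary lemma at the price of a short pointwise ratio computation; the paper's argument is more mechanical but considerably less sharp. For $\mathcal{C}_A$ (the paper's $B^*$) the two treatments essentially coincide.
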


\begin{proof}[Proof of Theorem~\ref{thm:lowerbound}]
First, since for all $\lambda\in[e^a,e^b]$, we have both $H(\mathbf{Y}_{[\lambda]})\leq \tilde{H}$ and $\lambda^2\cdot I_{\mathbf{Y}}(\lambda) \geq\tilde{I}$, we know, by Lemma \ref{lem:chain_rule},
\begin{align}
    \frac{\mathbf{H}(\mathcal{C}\to [a,b])}{\mathbf{I}(\mathcal{C}\to [a,b])}
    \;=\;
    \frac{\int_a^b H(\mathbf{Y}_{[e^x]})dx}{\int_a^b e^{2x}I_{\mathbf{Y}}(e^{x}) dx} 
    \;\leq\; \frac{ \tilde{H}d}{\tilde{I}d}
    \;=\;
    \frac{\tilde{H}}{\tilde{I}}.\label{eq:step_1}
\end{align}
Thus it is sufficient to bound $\frac{\mathbf{H}(\mathcal{C}\to [a,b])}{\mathbf{I}(\mathcal{C}\to [a,b])}$. 
Define $\Delta  = \frac{1}{2}\ln d$ and $\mathcal{C}^*=\{c_i\in \mathcal{C}\mid p_i<e^{-a-\Delta}\}$. 
We then have
\begin{align}
   &\frac{\mathbf{H}(\mathcal{C}\to [a,b])}{\mathbf{I}(\mathcal{C}\to [a,b])}
    \geq  \frac{\mathbf{H}(\mathcal{C}^*\to [a,b])}{\mathbf{I}(\mathcal{C}\to [a,b])}
    = \frac{\mathbf{H}(\mathcal{C}^*\to [a,b])}{\mathbf{I}(\mathcal{C}^*\to [a,b])}\cdot \frac{\mathbf{I}(\mathcal{C}^*\to [a,b])}{\mathbf{I}(\mathcal{C}\to [a,b])}.\label{eq:step_2}
\end{align}
We shall bound $\frac{\mathbf{H}(\mathcal{C}^*\to [a,b])}{\mathbf{I}(\mathcal{C}^*\to [a,b])}$
and $\frac{\mathbf{I}(\mathcal{C}^*\to [a,b])}{\mathbf{I}(\mathcal{C}\to [a,b])}$ separately.

\medskip

First, for any cell $c_i\in \mathcal{C}^*$, 
let $f(t)=\dot{H}(p_ie^t)$, $g(t)=\dot{I}(p_ie^t)$ and $h(t)=\Pr(\phi(\mathbf{Y}_{i-1,e^t})=0)$. By Lemma \ref{lem:monotonicity} and Proposition \ref{prop:phi_monotone}, we know that 
$f(t)/g(t)$ and $h(t)$ are non-increasing 
in $t$. By Lemma \ref{lem:integration}, we know both $f(t)$ and $g(t)$ have finite integral over $(-\infty,\infty)$. It is also easy to see that $f(t)>0$, $g(t)>0$ and $h(t)\in[0,1]$ for all $t\in \mathbb{R}$. 
By the first part of Lemma \ref{lem:rational_inequality} (Appendix~\ref{sect:lem:rational_inequality}) 
we conclude that
\begin{align*}
    \frac{\int_a^b \dot{H}(p_ie^t) \Pr(\phi(\mathbf{Y}_{i-1,e^t})=0) dt}{\int_a^b \dot{I}(p_ie^t) \Pr(\phi(\mathbf{Y}_{i-1,e^t})=0) dt} \geq \frac{\int_a^b \dot{H}(p_ie^t) dt}{\int_a^b \dot{I}(p_ie^t)  dt}.
\end{align*}
In addition, we have
\begin{align*}
    \frac{\int_a^b \dot{H}(p_ie^t) dt}{\int_a^b \dot{I}(p_ie^t)  dt} \geq \frac{\int_a^b \dot{H}(p_ie^t) dt}{\int_{-\infty}^b \dot{I}(p_ie^t)  dt}=\frac{\int_{-\infty}^b \dot{H}(p_ie^t) dt}{\int_{-\infty}^b \dot{I}(p_ie^t)  dt}\cdot \frac{\int_a^b \dot{H}(p_ie^t) dt}{\int_{-\infty}^b \dot{H}(p_ie^t)  dt}\geq \frac{\int_{-\infty}^\infty \dot{H}(p_ie^t) dt}{\int_{-\infty}^\infty \dot{I}(p_ie^t)  dt}\cdot \frac{\int_a^b \dot{H}(p_ie^t) dt}{\int_{-\infty}^b \dot{H}(p_ie^t)  dt},
\end{align*}
where the last inequality follows from
the second part of Lemma~\ref{lem:rational_inequality}
(Appendix~\ref{sect:lem:rational_inequality}).
By Lemma \ref{lem:integration} we know that $\frac{\int_{-\infty}^\infty \dot{H}(p_ie^t) dt}{\int_{-\infty}^\infty \dot{I}(p_ie^t)  dt}= H_0/I_0$.
By applying Lemma~\ref{lem:H_relax}, we have
\begin{align*}
    \frac{\int_a^b \dot{H}(p_ie^t) dt}{\int_{-\infty}^b \dot{H}(p_ie^t)  dt}=1-\frac{\int_{-\infty}^a \dot{H}(p_ie^t) dt}{\int_{-\infty}^b \dot{H}(p_ie^t)  dt}=1-\frac{\int_{-\infty}^{a+\ln p_i} \dot{H}(e^t) dt}{\int_{-\infty}^{a+\ln p_i +d} \dot{H}(e^t)  dt}\geq 1-\max\{8d^{-1/4},5e^{-d/2}\}.
\end{align*}
Note here that since cell $c_i\in \mathcal{C}^*$,
$a+\ln p_i \leq a+(-a-\Delta)
= -\frac{1}{2}\ln d$, as required by Lemma~\ref{lem:H_relax}. 
Therefore, for any $c_i\in \mathcal{C}^*$, we have
\begin{align*}
     \frac{\int_a^b \dot{H}(p_ie^t) \Pr(\phi(\mathbf{Y}_{i-1,e^t})=0) dt}{\int_a^b \dot{I}(p_ie^t) \Pr(\phi(\mathbf{Y}_{i-1,e^t})=0) dt} \geq \frac{H_0}{I_0}(1-\max\{8d^{-1/4},5e^{-d/2}\}).
\end{align*}
Summing over all cells in $\mathcal{C}^*$, this also implies that
\begin{align}
    \frac{\mathbf{H}(\mathcal{C}^*\to [a,b])}{\mathbf{I}(\mathcal{C}^*\to [a,b])}= \frac{\sum_{c_i\in \mathcal{C}^*}\int_a^b \dot{H}(p_ie^t) \Pr(\phi(\mathbf{Y}_{i-1,e^t})=0) dt}{\sum_{c_i\in \mathcal{C}^*}\int_a^b \dot{I}(p_ie^t) \Pr(\phi(\mathbf{Y}_{i-1,e^t})=0) dt}\geq \frac{H_0}{I_0}(1-\max\{8d^{-1/4},5e^{-d/2}\}). \label{eq:main_1}
\end{align}
Secondly, by Lemma \ref{lem:I_relax}, we have
\begin{align*}
    \frac{\mathbf{I}(\mathcal{C}^*\to [a,b])}{\mathbf{I}(\mathcal{C}\to [a,b])}=1-\frac{\mathbf{I}(\mathcal{C}\setminus\mathcal{C}^*\to [a,b])}{\mathbf{I}(\mathcal{C}\to [a,b])}\geq 1-\frac{(344+4e^{\Delta})\tilde{H}}{\mathbf{I}(\mathcal{C}\to [a,b])}.
\end{align*}
Since for all $\lambda\in[e^a,e^b]$, $\lambda^2\cdot I_{\mathbf{Y}}(\lambda) \geq\tilde{I}$, we have, by Lemma \ref{lem:chain_rule}
\begin{align*}
    \mathbf{I}(\mathcal{C}\to [a,b]) = \int_a^b e^{2t}I_{\mathbf{Y}}(e^t)dt\geq \tilde{I}d.
\end{align*}
Thus we have
\begin{align}
    \frac{\mathbf{I}(\mathcal{C}^*\to [a,b])}{\mathbf{I}(\mathcal{C}\to [a,b])}\geq 1-\frac{(344+4e^{\Delta})\tilde{H}\ln 2}{\tilde{I}d}.\label{eq:main_2}
\end{align}
By combining inequalities (\ref{eq:step_1}), (\ref{eq:step_2}), (\ref{eq:main_1}), and (\ref{eq:main_2}), we have
\begin{align}
    \frac{\tilde{H}}{\tilde{I}}
    &\geq    \frac{H_0}{I_0} \left(1-\max\{8d^{-1/4},5e^{-d/2}\}\right)\left(1-\frac{(344+4\sqrt{d})\tilde{H}}{d\tilde{I}}\right), \label{eq:combine}
\intertext{and by rearranging inequality (\ref{eq:combine}), we finally conclude that}
    \frac{\tilde{H}}{\tilde{I}} 
    &\geq \frac{H_0}{I_0} \frac{1-\max\{8d^{-1/4},5e^{-d/2}\}}{1+\frac{(344+4\sqrt{d})}{d}\frac{H_0}{I_0} \left(1-\max\{8d^{-1/4},5e^{-d/2}\}\right)}
    \;=\; (1-o_d(1))\frac{H_0}{I_0}.\nonumber
\end{align}
\end{proof}

\begin{cor}\label{cor:lb-scaleinvariant-H0I0}
Let $A_q$ be any linearizable, weakly scale-invariant sketch with base $q$. 
Then $\fish(A_q)\geq H_0/I_0$.
\end{cor}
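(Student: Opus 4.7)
The strategy is to use the random-offset smoothing device of Section~\ref{sect:randomoffsets} to convert the single basic sketch $A_q$ into a sequence of linearizable sketches $B_m$ whose entropy and normalized Fisher information are uniformly (in $\lambda$) close to the averaged quantities $\mathcal{H}(A_q)$ and $\mathcal{I}(A_q)$, and then feed each $B_m$ into Theorem~\ref{thm:lowerbound}. Weak scale-invariance is what makes this work: it forces $g_H(x) \bydef H(X_{A_q, q^x})$ and $g_I(x) \bydef q^{2x} I_{A_q}(q^x)$ to be $1$-periodic, and by Definition~\ref{def:fish} we have $\mathcal{H}(A_q) = \int_0^1 g_H$ and $\mathcal{I}(A_q) = \int_0^1 g_I$.

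Concretely, for each integer $m \geq 1$ let $B_m$ consist of $m$ independent copies of $A_q$ with hardcoded deterministic offsets $R_i = i/m$. Each offsetted copy remains linearizable: the offset is realized by adjoining a single ``dead'' cell of size $1 - q^{-R_i}$ that is born occupied with constant $\phi \equiv 1$ (trivially monotone), consistent with rules (R1)--(R3); concatenating the $m$ cell orderings and $\phi$-functions then makes $B_m$ itself linearizable. Independence of the copies together with the $w = \lambda q^{-R_i}$ change of variable used in Lemma~\ref{lem:smoothed-fisher} gives
\begin{align*}
H(X_{B_m, \lambda}) &= \sum_{i=0}^{m-1} g_H(\log_q \lambda - i/m), &
\lambda^2 I_{B_m}(\lambda) &= \sum_{i=0}^{m-1} g_I(\log_q \lambda - i/m).
\end{align*}
These are equispaced Riemann sums for $m\int_0^1 g_H$ and $m\int_0^1 g_I$. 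Because $g_H, g_I$ are continuous (from smoothness of $\psi_{A_q, \lambda}$ in $\lambda$ under Poissonization) and $1$-periodic, standard uniform convergence yields a sequence $\varepsilon_m \to 0$ with $\sup_\lambda H(X_{B_m, \lambda}) \leq m\mathcal{H}(A_q)(1+\varepsilon_m)$ and $\inf_\lambda \lambda^2 I_{B_m}(\lambda) \geq m\mathcal{I}(A_q)(1-\varepsilon_m)$. Applying Theorem~\ref{thm:lowerbound} to $B_m$ on any interval $[a, a+d]$, with $\tilde H = m\mathcal{H}(A_q)(1+\varepsilon_m)$ and $\tilde I = m\mathcal{I}(A_q)(1-\varepsilon_m)$, produces
\[
\frac{\mathcal{H}(A_q)(1+\varepsilon_m)}{\mathcal{I}(A_q)(1-\varepsilon_m)} \;\geq\; (1 - o_d(1))\,\frac{H_0}{I_0}.
\]
Letting first $d \to \infty$ and then $m \to \infty$ gives $\fish(A_q) = \mathcal{H}(A_q)/\mathcal{I}(A_q) \geq H_0/I_0$.

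The most delicate step is confirming that $B_m$ is a \emph{bona fide} linearizable dartboard sketch in the sense of Section~\ref{sect:linearizable}: the offset gadget must be expressible through the cell/$\phi$ formalism (addressed by the ``born-occupied'' dead cell above), and the $m$ independent copies must fit a single dartboard (this requires observing that the dartboard model is agnostic to the underlying probability space, so we may replace $[0,1]^2$ with the product $[0,1]^{2m}$ without altering any step of Theorem~\ref{thm:lowerbound}, which uses only cell probabilities and the linearization order). Once that accounting is settled, the remaining ingredients --- continuity of $g_H, g_I$, uniform convergence of the equispaced Riemann sum for a continuous periodic integrand, and the fact that the two limits $d \to \infty$ and $m \to \infty$ are independent --- are all routine.
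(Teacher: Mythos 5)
Your proposal is correct and takes essentially the same route as the paper: form the vector $A_q^m$ of $m$ independent copies with uniform offset vector $(0,1/m,\ldots,(m-1)/m)$, observe that this is linearizable by concatenating the cell orders, note that the resulting sketch is weakly scale-invariant with base $q^{1/m}$ so its entropy and normalized information are uniformly pinched to their period-averages, apply Theorem~\ref{thm:lowerbound} with $d\to\infty$, and finally send $m\to\infty$. Your dead-cell encoding of the offset and the explicit Riemann-sum convergence are just slightly more verbose renderings of what the paper's proof states tersely (``as $m$ becomes large, the sketch is smoothed, i.e.\ $\tilde H_m/\tilde I_m$ converges to $\fish(A_q)$'').
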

\begin{proof}
Fix $m>0$.  Let $A_q^m$ be a vector of $m$ independent, offsetted $A_q$ sketches
with respect to the uniform offset vector $(0,1/m,2/m,\ldots,(m-1)/m)$.
First note that, since $A_q$ is linearizable, $A_q^m$ is also linearizable
since we can simply concatenate the linear orders on the cells of each independent subsketch.

Let $\tilde{H}_m=\sup\{H(X_{A_q^m,q^r})\mid r\in[0,1/m]\}$ and $\tilde{I}_m=\inf\{q^{2r}I_{A_q^m}(q^r)\mid r\in[0,1/m]\}$.
Since $A_q^m$ is weakly scale-invariant sketch with base $q^{1/m}$, for any $\lambda>0$ 
we have 
\begin{align*}
    H(X_{A_q^m,\lambda}) \leq \tilde{H}_m\quad\text{ and } \quad \lambda^2\cdot I_{A_q^m}(\lambda) \geq \tilde{I}_m.
\end{align*}
Therefore we can apply Theorem \ref{thm:lowerbound} to $A_q^m$ with arbitrary large $d=b-a$. This implies that $\tilde{H}_m/\tilde{I}_m \geq H_0/I_0$. 
On the other hand, note that as $m$ becomes large, the sketch is smoothed, i.e., $\tilde{H}_m/\tilde{I}_m$ converges to $\fish(A_q)$ as $m\to \infty$. We conclude that $\fish(A_q)\geq H_0/I_0$.
\end{proof}

\section{\fishmonger: A Compressed, Smoothed $\PCSA$-based Sketch}\label{sect:fishmonger}

The results of Section~\ref{sect:fish-numbers} can properly 
be thought of as \emph{lower bounds} 
on the performance of $\qPCSA$ and $\qLL$, 
so it is natural to ask whether there are matching \emph{upper bounds}, 
at least in principle. 
Specifically, we can always compress a sketch so
that its \emph{expected} size is equal to its entropy.
Scheuermann and Mauve~\cite{ScheuermannM07} and Lang~\cite{Lang17} 
showed that this is effective experimentally, and Lang~\cite{Lang17}
numerically 
calculated the 
entropy of 2-$\PCSA$ 
and 2-$\LL$. 
However, it may be desirable to store the sketch in 
a \emph{fixed} memory footprint, i.e., to guarantee a certain
worst case size bound \emph{at all times}.

\medskip 

In this section we describe a sketch \fishmonger{} 
that can be stored in 
$(1+o(1))mH_0$ $+ O(\sqrt{m\log\log U}+\log^2 \log U)$ bits and 
achieves a standard error of 
$(1+o(1))\sqrt{1/(mI_0)}$.  
\fishmonger{} is based on a smoothed, compressed $e$-$\PCSA$ sketch, 
with a different estimation function, 
and a fixed level of redundancy.
It is characterized by the following features.
\begin{itemize}
    \item The \emph{abstract} state space of the sketch is 
    $\{0,1\}^{m\times(\ln U+2c\ln m)}$.  Due to compression the 
    \emph{true} state space of the sketch is in correspondence
    with a \emph{subset} of $\{0,1\}^{m\times(\ln U+2c\ln m)}$.
    (Whenever these two states need to be distinguished,
    denote $\check{S}$ to be the abstract state and $S$ the
    true state.  At time zero we have $S_0=\check{S}_0=0$.)
    \item When $a\in [U]$ is 
    processed we interpret 
    $h(a)$ as a random matrix $Z_a \in \{0,1\}^{m\times(\ln U+2c\ln m)}$ 
    where $\Pr(Z_a(i,j)=0) = e^{-e^{-j-i/m}}$
    and $j$ ranges 
    from $-c\ln m+1$ to $\ln U+c\ln m$.
    (The constant $c$ controls the 
    error probabilities.)
    We then set 
    $S \leftarrow S\vee Z_a$ (component-wise OR).
    In other words, the rows 
    $S(0),\ldots,S(m-1)$
    are independent $e$-$\PCSA$-type 
    sketches, but indexed starting below 0, at $-c\ln m+1$, 
    and effecting a uniform offset vector 
    (see Section~\ref{sect:randomoffsets}) of:
    \[
    (0,1/m,2/m,\ldots,(m-1)/m).
    \]
    \item The cardinality is estimated using the 
    Maximum Likelihood Estimator. 
    Define $l(S \mid \lambda)$ to be the $\log_2$-likelihood of seeing $S$ after 
    sketching a set of cardinality $\lambda$. 
    \[
    l(S \mid \lambda) = \log_2\left(\Pr_{Z_1,\ldots,Z_\lambda}(Z_1\vee\cdots \vee Z_{\lambda} = S)\right).
    \]
    The estimator is then defined to be
    \[
    \hat{\lambda}(S) = \argmax_{\lambda} l(S \mid \lambda).
    \]
    The MLE can be computed in $O(m\poly(\log U))$ time via binary search.
    Clifford and Cosma~\cite{CliffordC12} and Ertl~\cite{Ertl17}, 
    discuss MLE algorithms with improved convergence for \textsf{LogLog}-type 
    sketches but the ideas carry over to $\PCSA$ as well.

    \item The sketch stores the estimate $\hat{\lambda}(S)$ explicitly,
    then allocates 
    $(1+o(1))m\cdot \mathcal{H}(e\text{-}\PCSA) + B \leq (1+o(1))m H_0 + B$ 
    bits for storing $S$.  If 
    $-l\left(S \;\middle|\; \hat{\lambda}(S)\right) \leq 
    (1+o(1))m\cdot H_0 + B$ then $S$ is successfully stored.
    If not, then the last update to $S$ cannot be 
    recorded ($\check{S}\neq S$)
    and the state of the sketch reverts to its state before 
    processing the last element.
\end{itemize}

The crux of the analysis is to show that when 
$B=O(\sqrt{m\log(m\log U)}+\log^2\log U)$, 
it is \emph{always} possible to store $\check{S}$ 
in compressed form, with high probability $1-1/\poly(m\log U)$.

\begin{theorem}\label{thm:fishmonger}
The \fishmonger{} algorithm processes a sequence $\mathcal{A} \in [U]^*$ 
and maintains a sketch $S$ using 
\[
O(\sqrt{m\log\log U}+\log^2\log U) + (1+o(1))mH_0 
\approx 3.25724m \mbox{ bits for $m\gg \log^2\log U$.} 
\]
that ideally represents an abstract 
$m\log U$-bit sketch $\check{S}$. 
With probability $1-1/\poly(m)$, 
$S=\check{S}$ at all times, and $\hat{\lambda}(S)$ 
is an asymptotically unbiased
estimate, as $m\to\infty$, of the 
cardinality $\lambda$ of $\mathcal{A}$ with standard error 
\[
\sqrt{\frac{(1+o(1))}{mI_0}} \approx \frac{0.77969}{\sqrt{m}}.
\]
\end{theorem}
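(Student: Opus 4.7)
The plan is to prove Theorem~\ref{thm:fishmonger} by establishing three facts in sequence: (a) under the true distribution, the abstract state $\check{S}$ has expected negative log-likelihood $(1+o(1))mH_0$; (b) this quantity concentrates tightly around its mean uniformly over the entire stream, so the compressed representation always fits within the allocated budget and hence $S=\check{S}$; (c) the MLE applied to $\check{S}$ achieves the claimed standard error.

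For (a), observe that the uniform offset construction (rows $i = 0,\ldots,m-1$ fed effective cardinalities $\lambda e^{-i/m}$) is exactly the smoothing scheme of Section~\ref{sect:randomoffsets} applied to $e\text{-}\PCSA$, in the limit as $m\to\infty$ with deterministic rather than random offsets. By independence of rows, $H(\check{S}\mid\lambda)=\sum_i H(\check{S}(i)\mid\lambda)$; combining the smoothing argument of Section~\ref{sect:randomoffsets} with Theorem~\ref{thm:qPCSA} at $q=e$ (where $\ln q=1$) yields $H(\check{S}\mid\lambda)=(1+o(1))\,m\,\mathcal{H}(e\text{-}\PCSA)=(1+o(1))mH_0$, uniformly in $\lambda$.

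For (b), I would write $-l(\check{S}\mid\lambda)=\sum_{i=0}^{m-1}X_i$ with $X_i=-l(\check{S}(i)\mid\lambda)$ independent across $i$, each with mean $H(\check{S}(i)\mid\lambda)=O(1)$ and sub-exponential tails with variance $O(1)$ (since only $O(1)$ bits per row carry appreciable entropy; the remaining bits are $0$ or $1$ almost surely). A Bernstein-type inequality then gives
\[
\Pr\bigl(\,\lvert -l(\check{S}\mid\lambda) - mH_0\rvert > t\,\bigr)\leq 2\exp\bigl(-\Omega(\min(t^2/m, t))\bigr).
\]
Because $\check{S}$ changes at most $O(m\log U)$ times during the stream (each transition consumes a free cell and there are $m\log U$ cells), a union bound over \emph{visited} states with $t=O(\sqrt{m\log(m\log U)})=o(m)$ delivers uniform-in-time concentration with probability $1-1/\poly(m)$. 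Combining with $-l(\check{S}\mid\hat\lambda(\check{S}))\leq -l(\check{S}\mid\lambda)$ (the MLE inequality), every compressed write fits. The residual $O(\log^2\log U)$ bits house the explicit encoding of $\hat\lambda$ (costing $O(\log\log U)$ bits in a natural floating-point format), the length prefix of the arithmetic-coded block, and edge-case padding.

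For (c), the Poissonized regularity conditions of Section~\ref{sect:poisson_regular} apply, so the MLE is asymptotically normal with variance $1/I_{\check{S}}(\lambda)$. By Theorem~\ref{thm:qPCSA} and row-independence, $\lambda^2 I_{\check{S}}(\lambda)=(1+o(1))\,m\,\mathcal{I}(e\text{-}\PCSA)=(1+o(1))mI_0$, giving relative standard error $\sqrt{(1+o(1))/(mI_0)}\approx 0.77969/\sqrt{m}$, and asymptotic unbiasedness is the standard consequence of consistency plus asymptotic normality. The hardest step will be (b): pushing through the "at all times" concentration with a tight $O(\log^2\log U)$ additive slack. The naive approach of union-bounding over every cardinality is far too loose; one must exploit piecewise-constancy of $\check{S}$ with $O(m\log U)$ breakpoints, handle the sub-exponential (rather than bounded) tails of the $X_i$ that arise from the $\log U$ truncation boundary, and verify that the rare rejected updates---which occur exactly when the state is atypical in a controllable direction---contribute bias and variance strictly of order $o(1/\sqrt{m})$ so as not to contaminate the leading-order standard error.
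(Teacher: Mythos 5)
Your steps (a) and (c) match the paper's argument: $H(\check S\mid\lambda)=(1+o(1))mH_0$ and $\lambda^2 I_{\check S}(\lambda)=(1+o(1))mI_0$ both follow from row independence, Theorem~\ref{thm:qPCSA} at $q=e$, and the uniform-offset smoothing. The MLE inequality $-l(\check S\mid\hat\lambda(\check S))\le -l(\check S\mid\lambda)$ as the reduction to concentration of $-l(\check S\mid\lambda)$ is also exactly the paper's move.

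The genuine gap is in your step (b), and specifically in the phrase ``a union bound over visited states.'' A union bound requires a per-event probability for a set of \emph{fixed} events, but ``the $k$th visited state'' is determined by a random stopping time $\tau_k$, and the distribution of $\check S_{\tau_k}$ is coupled to $\tau_k$ itself; you cannot directly bound $\Pr(-l(\check S_{\tau_k}\mid\hat\lambda(\check S_{\tau_k}))>B)$ by the tail bound you proved for a fixed cardinality $\lambda$. The paper sidesteps this by introducing a \emph{deterministic} checkpoint set $\Lambda=\{\lfloor(1+\epsilon_0)^k\rfloor\}$, applying the tail bound at each fixed $\lambda_k\in\Lambda$, and then separately proving (via a second union bound on transition probabilities $P_k=O(m'/\lambda_k)$) that w.h.p.\ the state changes at most once between consecutive checkpoints, so every visited state is in fact witnessed at some checkpoint cardinality. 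That checkpointing construction, not just ``piecewise constancy,'' is the missing idea. Two smaller differences: (i) you decompose $-l$ per row ($m$ sub-exponential terms) whereas the paper decomposes per bit ($m\log U$ terms) and \emph{truncates} each at $M=O(\log m')$ inside a window $\mathscr{I}$, handling the outside-$\mathscr{I}$ contributions by a direct union bound on rare bit values; this truncation is what licenses a clean Bernstein bound, and you only flag the issue (``sub-exponential tails from the $\log U$ truncation boundary'') without resolving it. (ii) The $O(\log^2\log U)$ additive term also absorbs the cost of replacing $\hat\lambda$ with a coarse floating-point $\tilde\lambda$ whose total KL penalty over all $m'$ bits is shown to be $<1$ bit; your ``$O(\log\log U)$ bits in a natural floating-point format'' is the right instinct but you must verify the coding penalty, which the paper does explicitly.
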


The remainder of this section constitutes a proof of Theorem~\ref{thm:fishmonger}.
We make use of Bernstein's inequality.

\begin{theorem}[See {\cite{DubhashiPanconesi09}}] \label{thm:concentration}
Let $X_0, \ldots, X_{m-1}$ be independent random variables
such that $X_i - \E(X_i)\leq M$ for all $i$.
Let $X=\sum_i X_i$ and $V=\sum_i \Var(X_i)$. Then
\[
\Pr(X > \E(X) + B) < \exp\left(-B^2\middle/\left(2V + 2MB/3\right)\right).
\]
\end{theorem}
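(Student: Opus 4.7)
The plan is to prove Bernstein's inequality by the standard Chernoff / Laplace transform method, exploiting the one-sided upper bound $X_i - \E(X_i) \leq M$ to control the moment generating function of the centered variables. First I would center: let $Y_i = X_i - \E(X_i)$, so $Y_i \leq M$, $\E(Y_i)=0$, and $\Var(Y_i)=\Var(X_i)$. Writing $Y=\sum_i Y_i$ we have $\Pr(X > \E(X) + B) = \Pr(Y > B)$, and Markov's inequality applied to $e^{tY}$ for an arbitrary $t>0$ gives, by independence, $\Pr(Y > B) \leq e^{-tB}\prod_i \E(e^{tY_i})$.

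Next I would bound each factor $\E(e^{tY_i})$ using the elementary fact that $\psi(x) := (e^x - 1 - x)/x^2$ is nondecreasing. Since $tY_i \leq tM$, this yields $e^{tY_i} \leq 1 + tY_i + \frac{e^{tM}-1-tM}{M^2}Y_i^2$. Taking expectations (the linear term vanishes since $\E(Y_i)=0$) and then using $1+z \leq e^z$ gives $\E(e^{tY_i}) \leq \exp\!\left(\frac{e^{tM}-1-tM}{M^2}\Var(X_i)\right)$. Multiplying over $i$ produces $\prod_i \E(e^{tY_i}) \leq \exp\!\left(\frac{e^{tM}-1-tM}{M^2}V\right)$, so that $\Pr(Y>B) \leq \exp\!\left(-tB + \frac{e^{tM}-1-tM}{M^2}V\right)$.

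The third step is to tame the term $e^{tM}-1-tM$ via the elementary inequality $e^u - 1 - u \leq \frac{u^2/2}{1-u/3}$, valid for $u \in [0,3)$. This follows by comparing power series coefficients: for $k \geq 2$, $u^k/k! \leq (u^2/2)(u/3)^{k-2}$, and summing the geometric tail gives exactly the claimed right-hand side. Substituting $u = tM$ reduces the exponent to $-tB + \frac{(t^2/2)V}{1 - tM/3}$, provided $tM<3$.

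Finally I would optimize in $t$. Choosing $t = B/(V + MB/3)$ gives $tM/3 = MB/(3V+MB) < 1$ automatically, and a one-line algebra shows $1 - tM/3 = 3V/(3V+MB)$, whence $\frac{(t^2/2)V}{1-tM/3} = tB/2$. The exponent then collapses to $-tB + tB/2 = -tB/2 = -B^2/(2V + 2MB/3)$, which is the claimed bound. The only nonroutine step is the quadratic-in-$Y_i$ upper bound on $e^{tY_i}$ that exploits \emph{one-sided} boundedness (we never need $|Y_i| \leq M$, only $Y_i \leq M$); everything else is bookkeeping, and the optimizing $t$ lies in the admissible range without any case analysis.
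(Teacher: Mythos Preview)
Your proof is correct and complete: this is the standard derivation of Bernstein's one-sided inequality via the Cram\'er--Chernoff method, and every step checks out (the monotonicity of $(e^x-1-x)/x^2$, the series comparison giving $e^u-1-u \le \tfrac{u^2/2}{1-u/3}$, and the algebra with the optimizing $t=B/(V+MB/3)$).

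There is nothing to compare against in the paper itself: the theorem is stated with a citation to Dubhashi--Panconesi and is not proved there, being used only as a black-box concentration tool in the analysis of \fishmonger{}. Your write-up simply supplies the standard proof that the reference contains.
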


Observe that the number of times the abstract state 
$\check{S}$ can change is $m' = m(\ln U+2c\ln m)$. 
Since the sketch is idempotent, we can conflate ``time''
with cardinality, and let $S_\lambda,\check{S}_\lambda$ be the states
after seeing $\lambda$ distinct elements.
We will first prove that at any 
particular time $\lambda$, the probability that $\check{S}_\lambda$
cannot be stored in the specified number of bits
is low, namely $1/\poly(m')$.  We then argue 
that this implies that $\forall \lambda. \check{S}_\lambda=S_\lambda$ 
holds with probability $1-1/\poly(m')$, i.e., the actual state is identical to the abstract state \emph{at all times}.

\medskip

Fix any time $\lambda$.
By the independence of the rows $\{\check{S}(i)\}_{i\in [m]}$ of $\check{S}$ we have
\begin{align*}
    H(\check{S} \mid \lambda) 
            &= \sum_{i\in [m]} H(\check{S}(i) \mid \lambda)\\
            &= \sum_{i\in [m]} \E(-l(\check{S}(i) \mid \lambda))\\
            &= (1+o(1))m\cdot \mathcal{H}(e\text{-}\PCSA)
            \;=\; (1+o(1))mH_0,
\intertext{where the last line follows 
from Theorem~\ref{thm:qPCSA} 
and 
the fact that in the limit ($m\to\infty$), 
the offset vector is uniformly dense in $[0,1)$.
By definition of the MLE $\hat{\lambda}(\check{S})$, 
we have for every state $\check{S}$,}
-l\left(\check{S} \mid \hat{\lambda}(\check{S})\right) &\leq -l(\check{S} \mid \lambda).
\end{align*}
In particular,
\[
\Pr\Big(-l(\check{S} \mid \hat{\lambda}(\check{S})) 
\,>\, H(\check{S} \mid \lambda) + B\Big) 
            \;\leq\; \Pr\Big(-l(\check{S} \mid \lambda) \,>\, H(\check{S} \mid \lambda) + B\Big).
\]
Thus, it suffices to analyze the distribution 
of the upper tail of 
$-l(\check{S} \mid \lambda)$.  

Define $X_{i,j}$ to be the log-likelihood $-l(\check{S}(i,j) \mid \lambda)$.
Note that $\check{S}(i,j)$ is Bernoulli with
$p_{i,j} = \Pr(\check{S}(i,j)=0) 
= e^{-\lambda e^{-(j+i/m)}}$.  
In particular, if $j>\ln \lambda$, 
$p_{i,j} = 1-\Theta(\lambda e^{-j})$
and $p_{i,j}\log^2 p_{i,j} = \Theta((\lambda e^{-j})^2)$
and if $j < \ln \lambda$ then
$p_{i,j} = e^{-\Theta(\lambda e^{-j})}$
and $(1-p_{i,j})\log^2(1-p_{i,j}) 
= e^{-\Theta(\lambda e^{-j})}$.
Due to the independence of the $(X_{i,j})$,
the total variance $V$ is therefore
\begin{align*}
    V &= \Var(-l(\check{S} \mid \lambda))\\
    &= \sum_{i\in [m],j\in [-c\ln m+1,\ln U+c\ln m]} \Var(X_{i,j})\\
    &\leq \sum_{i\in [m],j\in [-c\ln m+1,\ln U+c\ln m]} \left(p_{i,j}\log_2^2 p_{i,j} + (1-p_{i,j})\log_2^2(1-p_{i,j})\right)\\
    &\leq Cm,
\end{align*}
for some constant $C$.

Define 
$\mathscr{I} \subset [m]\times [-c\ln m+1, \ln U+c\ln m]$ 
to be the set of all indices $(i,j)$
such that 
\[
\ln\lambda - \ln(c\ln m')   \leq   j+i/m   \leq  \ln \lambda + c\ln m'.
\]
If $(i,j)\in\mathscr{I}$ with $j+i/m \geq \ln \lambda$
then $\Pr(\check{S}(i,j)=1) = 1-p_{i,j} = \Theta(e^{-(j+i/m) + \ln \lambda}) = \Omega((m')^{-c})$.
If $(i,j)\in \mathscr{I}$ with $j+i/m \leq \ln \lambda$
then $\Pr(\check{S}(i,j)=0) = \Theta(e^{\lambda e^{-(j+i/m)}}) = \Omega((m')^{-c})$.
Thus, for each cell $(i,j)$ within $\mathscr{I}$, 
$X_{i,j}$ satisfies a \emph{worst case} deviation of
\[
-l(\check{S}(i,j) \mid \lambda) - E(X_{i,j}) \le c\log_2 m' + O(1) \bydef M.
\]
Redefine $X_{i,j}$ so that this deviation of $M$ is satisfied outside
$\mathscr{I}$ as well.
\begin{align*}
X_{i,j} &= \min\left\{-l\left(\check{S}(i,j) \;\middle|\; \lambda\right),\; M\right\},\\
X       &= \sum_{i\in [m], j\in [-c\ln m+1, \ln U+c\ln m]} X_{i,j}.
\end{align*}
\noindent We choose
\begin{align*}
B &= \sqrt{2Cm\ln\epsilon^{-1}}+ (2/3)M\ln\epsilon^{-1},
\end{align*}
\noindent and apply Theorem~\ref{thm:concentration}.
\begin{align*}
\Pr\left(X > H(\check{S}\mid\lambda) + B\right) 
&\leq \exp\left(-B^2\middle/\left(2V + (2/3)MB\right)\right)\\
&\leq \exp\left(-\frac{B^2}{2Cm + (2/3)MB}\right)\\
&< \epsilon.
\end{align*}
Outside of $\mathscr{I}$, the most probable outcomes (i.e., 
those minimizing negated log-likelihood) are to have
$\check{S}(i,j) = 1$ whenever $j+i/m$ is too small to be in $\mathscr{I}$
and $\check{S}(i,j) = 0$ whenever $j+i/m$ is too large to be in $\mathscr{I}$.
When this occurs, $X$ is identical to $-l(\check{S} \mid \lambda)$.  
By a union bound, this fails to occur with probability at most 
$m'\cdot (m')^{-c} = (m')^{-c+1}$.
Thus, with probability at least $1-\epsilon - (m')^{-c+1}$
we achieve the successful outcome
\[
-l(\check{S} \mid \lambda) 
\;=\; 
X
\;\leq\; H(\check{S} \mid \lambda) + B 
\;\leq\; (1+o(1))mH_0 + B.
\]
We set $\epsilon = (m')^{-c+1}$ and hence
\[
B = O\left(\sqrt{m\log m'} + \log^2(m')\right) = O\left(\sqrt{m\log(m\log U)} + (\log\log U)^2\right).
\]

\medskip

At first glance, setting $\epsilon$ so high 
seems insufficient to the task of proving that w.h.p.,
$\forall \lambda.\, \check{S}_\lambda = S_\lambda$. 
Ordinarily we would take a union bound over all $\lambda\in [1,U]$, 
necessitating an $\epsilon \ll U^{-1}$.
The key observation is that
$S$ changes at most $m'$ times, so it suffices to take a union bound
over a set $\Lambda$ of \emph{checkpoint} 
times that witness all states of the sketch.

Define $\epsilon_0 = \sqrt{\epsilon}$ and 
$\Lambda=\{\lambda_1,\lambda_2,\ldots\}$ to be the set of all times
(i.e., cardinalities) of the form 
\[
\lambda_k = \floor{(1+\epsilon_0)^k} \leq U.
\]
By a union bound, we fail to have 
success at all checkpoint times in $\Lambda$ with probability at most 
\[
|\Lambda|\cdot 2\epsilon 
< 
(\log_{1+\epsilon_0} U)\cdot 2\epsilon 
= 
O(\epsilon\cdot  \epsilon_0^{-1}\log U) = O(\epsilon_0 \log U).
\]
We now need to argue that all states of the data structure can be witnessed, w.h.p.,
by only checking it at times in $\Lambda$, i.e., in any interval $(\lambda_k,\lambda_{k+1})$,
the state changes at most once.  

Observe the the probability that the
next element causes a change to 
the sketch never increases, since
bits in $\check{S}$ or $S$ only 
get flipped from 0 to 1.
Define $P_k$ to be the probability,
at time $\lambda_k$, that the next
element causes a change to the sketch.
Observe that $P_k$ is itself a random variable: 
it is the probability that the next $Z_a$ contains a 1 
in some location that is 0 in $\check{S}$. 
It is straightforward to show that when the true
cardinality is $\lambda_k$, $\E(P_k) = \Theta(m/\lambda_k)$,
and via Chernoff-Hoeffding bounds~\cite{DubhashiPanconesi09},
that $\Pr(P_k > c'm'/\lambda_k) = \exp(-m')$ 
for a sufficiently large constant $c'$.  
Thus we proceed under the assumption
that $P_k = O(m'/\lambda_k)$ for all $k$.

If checkpoints $\Lambda$ do \emph{not} witness all states of the sketch,
then there must have been an index $k$ such that the sketch
changed state \emph{twice} in the interval $(\lambda_k,\lambda_{k+1})$.
For fixed $k$, the probability that this occurs is, by a union bound, at most
\begin{align*}
{\lambda_{k+1}-\lambda_k \choose 2}P_k^2 
 < (\epsilon_0 \lambda_k)^2 (c'm'/\lambda_k)^2
 = O(\epsilon (m')^2).
\end{align*}
Taking another union bound over all $k$ 
shows that $\Lambda$ fails to witnesses all sketch states with probability 
\begin{align*}
O(|\Lambda|\epsilon (m')^2) 
&= O(\epsilon_0^{-1}\log U \epsilon (m')^2)
= O(\epsilon_0 (m')^{3})\\
&= O((m')^{-(c-1)/2 + 3}).
\end{align*}
Setting $c$ sufficiently large, we conclude that
\[
\Pr(\forall \lambda.\, \check{S}_\lambda = S_\lambda) \ge 1 - 1/\poly(m').
\]
Whenever $\check{S}=S$, 
Theorem~\ref{thm:qPCSA} 
implies the standard error of $\hat{\lambda}$ is
\[
\sqrt{\frac{1+o(1)}{m\cdot \mathcal{I}(e\text{-}\PCSA)}}
\;=\; \sqrt{\frac{(1+o(1))}{mI_0}}
\;\leq\; \frac{(1+o(1))0.77969}{\sqrt{m}}.
\]
The space used by the sketch (in bits) is
\begin{align*}
&\log U + O(\sqrt{m\log(m\log U)}+\log^2\log U) + (1+o(1))m\cdot \mathcal{H}(e\text{-}\PCSA)\\
&= 
(1+o(1))\left(\log U + mH_0\right).
\end{align*}

Here the $\log U$ term accounts for the cost of explicitly 
storing the estimate $\hat{\lambda}(S)$.  This can be further
reduced to $O(\log\log U + \log m)$ bits by storing instead 
a floating point approximation 
\[
\tilde{\lambda} \in [\hat{\lambda}, (1+1/m')\hat{\lambda}].
\]
By using $\tilde{\lambda}$ in lieu
of $\hat{\lambda}$ we degrade the efficiency of the arithmetic encoding.
The efficiency loss is
$-l(\check{S} \mid \tilde{\lambda}) + l(\check{S} \mid \hat{\lambda})$.
Fix an entry $(i,j)$. 
Define $\hat{p} = e^{-\hat{\lambda}e^{-(j+i/m)}}$ to be the probability 
that $\check{S}(i,j)=0$, assuming cardinality $\hat{\lambda}$,
and define $\tilde{p} = e^{-\tilde{\lambda}e^{-(j+i/m)}}$ 
analogously for $\tilde{\lambda}$.  
The loss in encoding efficiency for location $(i,j)$ 
is the KL-divergence between 
the two distributions, i.e.,
\begin{align*}
D_{\operatorname{KL}}(\hat{p} \,\|\, \tilde{p}) 
&= \hat{p}\log_2\left(\frac{\hat{p}}{\tilde{p}}\right) + (1-\hat{p})\log_2\left(\frac{1-\hat{p}}{1-\tilde{p}}\right)\\
&\leq \hat{p}\log_2\left(\frac{\hat{p}}{\tilde{p}}\right) 
\hspace*{1cm} \mbox{$\tilde{\lambda}\ge \hat{\lambda}$, hence $\tilde{p} \leq \hat{p}$}\\
&= \hat{p}\frac{1}{\ln 2}(\tilde{\lambda}-\hat{\lambda})e^{-(j+i/m)}\\
&\leq \hat{p}\frac{1}{\ln 2}\hat{\lambda}e^{-(j+i/m)}/m'\\
&= \hat{p}\log_2(\hat{p}^{-1}) / m'
< H(\hat{p})/m' < 1/m'.
\end{align*}
In other words, 
over all $m'$ entries 
in $\check{S}$, 
the total loss in encoding efficiency due to using $\tilde{\lambda}$ 
is less than 1 bit.

\begin{rem}
In the proof of Theorem~\ref{thm:fishmonger} 
we treated the unlikely event that $\check{S}\neq S$ 
as a \emph{failure}, but in practice nothing bad happens.
As these errors occur with probability $1/\poly(m\log U)$ 
they have a negligible effect on the standard error.

The proof could be simplified considerably if we do
not care about the dependence on $U$.  
For example, we could set 
$\epsilon = 1/\poly(U)$ and apply a standard union bound rather than look at the ``checkpoints'' $\Lambda$.
We could have also applied a recent
tail bound of Zhao~\cite{Zhao20} 
for the log-likelihood of a set 
of independent Bernoulli random variables.  These two simplifications
would lead to a redundancy of
\[
B = O\paren{\sqrt{m'\log\epsilon^{-1}}} = O(\sqrt{m}\log U).
\]
\end{rem}

\section{Conclusion}\label{sect:conclusion}

In this paper we introduced a new approach to measuring the efficiency of data sketches.
At a high level, a sketch records the outcomes of some probabilistic \emph{experiments}.
These outcomes must be stored (costing space)
and may be used to estimate statistical quantities such as cardinality.  
The \emph{Fisher-Shannon} ($\fish$) number
of the sketch is the most natural way
to measure the \emph{bang for the buck}
of the experiment, taking into account its storage cost, 
under an efficient encoding,
and value for estimating the parameter of interest, under a statistically optimal estimator.

We proved that, despite decades of work developing
\emph{improved} cardinality sketches, the original
\PCSA{} sketch of Flajolet and Martin~\cite{FlajoletM85} 
has the best \fish-number among popular sketches, 
and is even provably optimal in the class of 
\emph{linearizable} sketches, 
which is a subset of mergeable sketches.
As a practical matter, we proved that to achieve, 
say, $1\%$ standard error, the \fishmonger{} sketch
needs slightly more than 
$(0.01)^2 H_0/I_0 \approx $ 19,800 bits, 
which is about 2.42 kilobytes, while our lower bound
says that no linearizable sketch can 
beat 2.42 kilobytes.
This level of theoretical precision 
is uncommon
within the field of sketching and streaming.

\medskip 

We highlight two open problems.
\begin{itemize}
    \item Shannon entropy and Fisher information are both 
    subject to data processing inequalities, i.e., no
    deterministic transformation can increase entropy/information.
    Our lower bound (Section~\ref{sect:lowerbound}) 
    can be thought
    of as a specialized data processing inequality 
    for $\fish$, where the deterministic transformation maps $(Z_0,\ldots,Z_{|\mathcal{C}|-1})$ to 
    $(Y_0,\ldots,Y_{|\mathcal{C}|-1})$, 
    but with two notable
    features.  First, the transformation has to 
    be of a certain 
    type (the \emph{linearizability} assumption).
    Second, we need to measure $\mathcal{H}/\mathcal{I}$ 
    over a sufficiently long period of \emph{time}.  
    The second feature is essential to the $H_0/I_0$ lower bound.
    The open question is whether the first feature can be relaxed.
    We conjecture that $H_0/I_0$ is a lower bound on \emph{all mergeable sketches}, and believe that the state space characterization of mergeable sketches of Theorem~\ref{thm:mergeable-characterization} may be valuable in resolving this conjecture.

    \item Ever since Flajolet and Martin~\cite{FlajoletM85}, researchers have been rightfully disappointed that the cutting edge in cardinality estimation is not very accessible to the undergraduate (or even graduate) computer science population.
    In contrast, it is easily possible to teach \textsf{AMS} sketches, \textsf{CountMin}, and \textsf{CountSketch} rigorously.
    \L{}ukasiewicz and Uzna\'{n}ski~\cite{LukasiewiczU22}
    introduced a variant of (\Hyper)\LogLog{} using
    Gumble-distributed random varianbles, which simplified
    some aspects of the analysis.
    Chakraborty, Vinodchandran, and Meel~\cite{ChakrabortyVM22} showed that it is possible to do cardinality estimation using only ``fresh'' random bits, not a hash function \emph{per se}, 
    which incurs some significant loss in efficiency.
    We consider it an open problem to find a \emph{teachable} analysis of \PCSA{} or (\Hyper)\LogLog{} that demonstrates a standard error of $O(1/\sqrt{m})$ plus a small periodic term.
\end{itemize}

\paragraph{Acknowledgement.} 
We thank Liran Katzir for suggesting references~\cite{Ertl17,Ertl18,CliffordC12} and 
an anonymous reviewer for bringing the work of
Lang~\cite{Lang17} and Scheuermann and 
Mauve~\cite{ScheuermannM07} 
to our attention.  We also 
thank Bob Sedgewick and J{\'e}r{\'e}mie Lumbroso 
for discussing the 
cardinality estimation problem 
at Dagstuhl 19051.


\begin{thebibliography}{AGPR99b}

\bibitem[AGMS02]{AlonGMS02}
Noga Alon, Phillip~B. Gibbons, Yossi Matias, and Mario Szegedy.
\newblock Tracking join and self-join sizes in limited storage.
\newblock {\em J. Comput. Syst. Sci.}, 64(3):719--747, 2002.

\bibitem[AGPR99a]{AcharyaGPR99}
Swarup Acharya, Phillip~B. Gibbons, Viswanath Poosala, and Sridhar Ramaswamy.
\newblock The {A}qua approximate query answering system.
\newblock In {\em Proceedings of the {ACM} International Conference on
  Management of Data (SIGMOD)}, pages 574--576, 1999.

\bibitem[AGPR99b]{AcharyaGPR99a}
Swarup Acharya, Phillip~B. Gibbons, Viswanath Poosala, and Sridhar Ramaswamy.
\newblock Join synopses for approximate query answering.
\newblock In {\em Proceedings of the {ACM} International Conference on
  Management of Data (SIGMOD)}, pages 275--286, 1999.

\bibitem[AMS99]{AlonMS99}
Noga Alon, Yossi Matias, and Mario Szegedy.
\newblock The space complexity of approximating the frequency moments.
\newblock {\em J. Comput. Syst. Sci.}, 58(1):137--147, 1999.

\bibitem[{Apa}19]{DataSketches}
{Apache Foundation}.
\newblock Apache {D}ata{S}ketches: A software library of stochastic streaming
  algorithms. \url{https://datasketches.apache.org/}.
\newblock 2019.

\bibitem[AR13]{abramovich2013statistical}
Felix Abramovich and Ya'acov Ritov.
\newblock {\em Statistical theory: a concise introduction}.
\newblock CRC Press, 2013.

\bibitem[BC09]{BrodyC09}
Joshua Brody and Amit Chakrabarti.
\newblock A multi-round communication lower bound for gap {H}amming and some
  consequences.
\newblock In {\em Proceedings 24th Annual {IEEE} Conference on Computational
  Complexity ({CCC})}, pages 358--368, 2009.

\bibitem[BD00]{bickel2001mathematical}
Peter~J. Bickel and Kjell~A. Doksum.
\newblock {\em Mathematical Statistics: Basic Ideas and Selected Topics, Vol.
  1, 2nd Ed.}
\newblock Prentice Hall, 2000.

\bibitem[BGH{\etalchar{+}}09]{BeyerGHRS09}
Kevin~S. Beyer, Rainer Gemulla, Peter~J. Haas, Berthold Reinwald, and Yannis
  Sismanis.
\newblock Distinct-value synopses for multiset operations.
\newblock {\em Commun. {ACM}}, 52(10):87--95, 2009.

\bibitem[BJK{\etalchar{+}}02]{Bar-YossefJKST02}
Ziv Bar{-}Yossef, T.~S. Jayram, Ravi Kumar, D.~Sivakumar, and Luca Trevisan.
\newblock Counting distinct elements in a data stream.
\newblock In {\em Proceedings 6th International Workshop on Randomization and
  Approximation Techniques ({RANDOM})}, volume 2483 of {\em Lecture Notes in
  Computer Science}, pages 1--10, 2002.

\bibitem[BKS02]{Bar-YossefKS02}
Ziv Bar{-}Yossef, Ravi Kumar, and D.~Sivakumar.
\newblock Reductions in streaming algorithms, with an application to counting
  triangles in graphs.
\newblock In {\em Proceedings 13th Annual {ACM}-{SIAM} Symposium on Discrete
  Algorithms (SODA)}, pages 623--632, 2002.

\bibitem[B\l20]{Blasiok20}
Jaros\l{}aw B\l{}asiok.
\newblock Optimal streaming and tracking distinct elements with high
  probability.
\newblock {\em {ACM} Trans. Algorithms}, 16(1):3:1--3:28, 2020.

\bibitem[CB02]{CasellaB02}
George Casella and Roger~L. Berger.
\newblock {\em Statistical Inference, 2nd Ed.}
\newblock Brooks/Cole, Belmont, CA, 2002.

\bibitem[CC12]{CliffordC12}
Peter Clifford and Ioana~A. Cosma.
\newblock A statistical analysis of probabilistic counting algorithms.
\newblock {\em Scandinavian Journal of Statistics}, 39(1):1--14, 2012.

\bibitem[CCSN11]{ChenCSN11}
Aiyou Chen, Jin Cao, Larry Shepp, and Tuan Nguyen.
\newblock Distinct counting with a self-learning bitmap.
\newblock {\em Journal of the American Statistical Association},
  106(495):879--890, 2011.

\bibitem[CG06]{ChassaingG06}
Philippe Chassaing and Lucas Gerin.
\newblock Efficient estimation of the cardinality of large data sets.
\newblock {\em Discrete Mathematics \& Theoretical Computer Science}, DMTCS
  Proceedings vol. AG, Fourth Colloquium on Mathematics and Computer Science
  Algorithms, Trees, Combinatorics and Probabilities, 2006.

\bibitem[CKY17]{CohenKY17}
Reuven Cohen, Liran Katzir, and Aviv Yehezkel.
\newblock A minimal variance estimator for the cardinality of big data set
  intersection.
\newblock In {\em Proceedings of the 23rd {ACM} {SIGKDD} International
  Conference on Knowledge Discovery and Data Mining (KDD)}, pages 95--103,
  2017.

\bibitem[Coh97]{Cohen97}
Edith Cohen.
\newblock Size-estimation framework with applications to transitive closure and
  reachability.
\newblock {\em J. Comput. Syst. Sci.}, 55(3):441--453, 1997.

\bibitem[Coh15]{Cohen15}
Edith Cohen.
\newblock All-distances sketches, revisited: {HIP} estimators for massive
  graphs analysis.
\newblock {\em {IEEE} Trans. Knowl. Data Eng.}, 27(9):2320--2334, 2015.

\bibitem[CPT15]{ChristianiPT15}
Tobias Christiani, Rasmus Pagh, and Mikkel Thorup.
\newblock From independence to expansion and back again.
\newblock In {\em Proceedings 47th Annual {ACM} Symposium on Theory of
  Computing ({STOC})}, pages 813--820, 2015.

\bibitem[CT06]{CoverT06}
Thomas~M. Cover and Joy~A. Thomas.
\newblock {\em Elements of Information Theory {(Second Edition)}}.
\newblock Wiley, 2006.

\bibitem[CVM22]{ChakrabortyVM22}
Sourav Chakraborty, N.~V. Vinodchandran, and Kuldeep~S. Meel.
\newblock Distinct elements in streams: An algorithm for the (text) book.
\newblock In {\em Proceedings of the 30th Annual European Symposium on
  Algorithms ({ESA})}, volume 244 of {\em LIPIcs}, pages 34:1--34:6. Schloss
  Dagstuhl - Leibniz-Zentrum f{\"{u}}r Informatik, 2022.

\bibitem[DF03]{DurandF03}
Marianne Durand and Philippe Flajolet.
\newblock Loglog counting of large cardinalities.
\newblock In {\em Proceedings 11th Annual European Symposium on Algorithms
  (ESA)}, volume 2832 of {\em Lecture Notes in Computer Science}, pages
  605--617. Springer, 2003.

\bibitem[DP09]{DubhashiPanconesi09}
Devdatt~P. Dubhashi and Alessandro Panconesi.
\newblock {\em Concentration of Measure for the Analysis of Randomized
  Algorithms}.
\newblock Cambridge University Press, 2009.

\bibitem[Dur04]{Durand04}
Marianne Durand.
\newblock {\em Combinatoire analytique et algorithmique des ensembles de
  donn{\'e}es. (Multivariate holonomy, applications in combinatorics, and
  analysis of algorithms)}.
\newblock PhD thesis, Ecole Polytechnique X, 2004.

\bibitem[Ert17]{Ertl17}
Otmar Ertl.
\newblock New cardinality estimation methods for {H}yper{L}og{L}og sketches.
\newblock {\em CoRR}, abs/1706.07290, 2017.

\bibitem[Ert18]{Ertl18}
Otmar Ertl.
\newblock Bagminhash - minwise hashing algorithm for weighted sets.
\newblock In {\em Proceedings of the 24th {ACM} {SIGKDD} International
  Conference on Knowledge Discovery {\&} Data Mining ({KDD})}, pages
  1368--1377, 2018.

\bibitem[EVF06]{EstanVF06}
Cristian Estan, George Varghese, and Michael~E. Fisk.
\newblock Bitmap algorithms for counting active flows on high-speed links.
\newblock {\em {IEEE/ACM} Trans. Netw.}, 14(5):925--937, 2006.

\bibitem[FFGM07]{FlajoletFGM07}
Philippe Flajolet, {\'{E}}ric Fusy, Olivier Gandouet, and Fr\'{e}d\'{e}ric
  Meunier.
\newblock {HyperLogLog: the analysis of a near-optimal cardinality estimation
  algorithm}.
\newblock In {\em Proceedings of the 18th International Meeting on
  Probabilistic, Combinatorial, and Asymptotic Methods for the Analysis of
  Algorithms (AofA)}, pages 127--146, 2007.

\bibitem[Fla90]{Flajolet90}
Philippe Flajolet.
\newblock On adaptive sampling.
\newblock {\em Computing}, 43(4):391--400, 1990.

\bibitem[FM85]{FlajoletM85}
Philippe Flajolet and G.~Nigel Martin.
\newblock Probabilistic counting algorithms for data base applications.
\newblock {\em J. Comput. Syst. Sci.}, 31(2):182--209, 1985.

\bibitem[Gir09]{Giroire09}
Fr{\'{e}}d{\'{e}}ric Giroire.
\newblock Order statistics and estimating cardinalities of massive data sets.
\newblock {\em Discret. Appl. Math.}, 157(2):406--427, 2009.

\bibitem[GM99]{GibbonsM99}
Phillip~B. Gibbons and Yossi Matias.
\newblock Synopsis data structures for massive data sets.
\newblock In {\em Proceedings of the Tenth Annual {ACM-SIAM} Symposium on
  Discrete Algorithms (SODA)}, pages 909--910, 1999.

\bibitem[GMP02]{GibbonsMP02}
Phillip~B. Gibbons, Yossi Matias, and Viswanath Poosala.
\newblock Fast incremental maintenance of approximate histograms.
\newblock {\em {ACM} Trans. Database Syst.}, 27(3):261--298, 2002.

\bibitem[GT01]{GibbonsT01}
Phillip~B. Gibbons and Srikanta Tirthapura.
\newblock Estimating simple functions on the union of data streams.
\newblock In {\em Proceedings 13th Annual {ACM} Symposium on Parallel
  Algorithms and Architectures ({SPAA})}, pages 281--291, 2001.

\bibitem[HLMV12]{HelmiLMV12}
Ahmed Helmi, J{\'e}r{\'e}mie Lumbroso, Conrado Mart{\'i}nez, and Alfredo Viola.
\newblock {Data Streams as Random Permutations: the Distinct Element Problem}.
\newblock In {\em Proceedings of the 23rd International Meeting on
  Probabilistic, Combinatorial, and Asymptotic Methods for the Analysis of
  Algorithms (AofA)}, pages 323--338, 2012.

\bibitem[HNH13]{HeuleNH13}
Stefan Heule, Marc Nunkesser, and Alexander Hall.
\newblock Hyper{L}og{L}og in practice: algorithmic engineering of a state of
  the art cardinality estimation algorithm.
\newblock In {\em Proceedings 16th International Conference on Extending
  Database Technology (EDBT)}, pages 683--692, 2013.

\bibitem[IW03]{IndykW03}
Piotr Indyk and David~P. Woodruff.
\newblock Tight lower bounds for the distinct elements problem.
\newblock In {\em Proceedings of the 44th IEEE Symposium on Foundations of
  Computer Science {(FOCS)}}, pages 283--288, 2003.

\bibitem[JLS25]{JansonLS25}
Svante Janson, J{\'{e}}r{\'{e}}mie~O. Lumbroso, and Robert Sedgewick.
\newblock Bit-array-based alternatives to {H}yper{L}og{L}og.
\newblock {\em Theor. Comput. Sci.}, 1054:115450, 2025.

\bibitem[JW13]{JayramW13}
T.~S. Jayram and David~P. Woodruff.
\newblock Optimal bounds for {J}ohnson-{L}indenstrauss transforms and streaming
  problems with subconstant error.
\newblock {\em {ACM} Trans. Algorithms}, 9(3):26:1--26:17, 2013.

\bibitem[KNW10]{KaneNW10}
Daniel~M. Kane, Jelani Nelson, and David~P. Woodruff.
\newblock An optimal algorithm for the distinct elements problem.
\newblock In {\em Proceedings 29th {ACM} Symposium on Principles of Database
  Systems ({PODS})}, pages 41--52, 2010.

\bibitem[Lan17]{Lang17}
Kevin~J. Lang.
\newblock Back to the future: an even more nearly optimal cardinality
  estimation algorithm.
\newblock {\em CoRR}, abs/1708.06839, 2017.

\bibitem[LU22]{LukasiewiczU22}
Aleksander \L{}ukasiewicz and Przemys\l{}aw Uzna\'{n}ski.
\newblock Cardinality estimation using {G}umbel distribution.
\newblock In {\em Proceedings of the 30th Annual European Symposium on
  Algorithms ({ESA})}, volume 244 of {\em LIPIcs}, pages 76:1--76:13. Schloss
  Dagstuhl - Leibniz-Zentrum f{\"{u}}r Informatik, 2022.

\bibitem[Lum10]{Lumbroso10}
J{\'e}r{\'e}mie Lumbroso.
\newblock {An optimal cardinality estimation algorithm based on order
  statistics and its full analysis}.
\newblock In {\em {Proceedings of the 21st International Meeting on
  Probabilistic, Combinatorial, and Asymptotic Methods in the Analysis of
  Algorithms (AofA)}}, pages 489--504, 2010.

\bibitem[MNW98]{MoffatNW98}
Alistair Moffat, Radford~M. Neal, and Ian~H. Witten.
\newblock Arithmetic coding revisited.
\newblock {\em {ACM} Trans. Inf. Syst.}, 16(3):256--294, 1998.

\bibitem[PW20]{PettieW20-arxiv}
Seth Pettie and Dingyu Wang.
\newblock Information theoretic limits of cardinality estimation: {F}isher
  meets {S}hannon.
\newblock {\em CoRR}, abs/2007.08051v2, November 2020.

\bibitem[PWY21]{PettieWY21}
Seth Pettie, Dingyu Wang, and Longhui Yin.
\newblock Non-mergeable sketching for cardinality estimation.
\newblock In {\em Proceedings of the 48th International Colloquium on Automata,
  Languages, and Programming ({ICALP})}, volume 198 of {\em LIPIcs}, pages
  104:1--104:20. Schloss Dagstuhl - Leibniz-Zentrum f{\"{u}}r Informatik, 2021.

\bibitem[Sed]{Sedgewick}
Robert Sedgewick.
\newblock Cardinality estimation.
\newblock Presentation delivered at AofA (2016), Knuth-80 (2018), and Dagstuhl
  19051 (2019). \url{https://www.cs.princeton.edu/~rs/talks/Cardinality.pdf}.

\bibitem[SM07]{ScheuermannM07}
Bj{\"{o}}rn Scheuermann and Martin Mauve.
\newblock Near-optimal compression of probabilistic counting sketches for
  networking applications.
\newblock In {\em Proceedings of the 4th International Workshop on Foundations
  of Mobile Computing ({DIALM-POMC})}, 2007.

\bibitem[Tin14]{Ting14}
Daniel Ting.
\newblock Streamed approximate counting of distinct elements: beating optimal
  batch methods.
\newblock In {\em Proceedings 20th {ACM} Conference on Knowledge Discovery and
  Data Mining ({KDD})}, pages 442--451, 2014.

\bibitem[Vaa98]{Vaart98}
A.~W.~van~der Vaart.
\newblock {\em Asymptotic Statistics}.
\newblock Cambridge Series in Statistical and Probabilistic Mathematics.
  Cambridge University Press, 1998.

\bibitem[WNC87]{WittenNC87}
Ian~H. Witten, Radford~M. Neal, and John~G. Cleary.
\newblock Arithmetic coding for data compression.
\newblock {\em Commun. {ACM}}, 30(6):520--540, 1987.

\bibitem[WP23]{WangP23}
Dingyu Wang and Seth Pettie.
\newblock Better cardinality estimators for {H}yper{L}og{L}og, {PCSA}, and
  beyond.
\newblock In {\em Proceedings of the 42nd {ACM} {SIGMOD-SIGACT-SIGAI} Symposium
  on Principles of Database Systems ({PODS})}, pages 317--327, 2023.

\bibitem[XCZL20]{XiaoCZL20}
Qingjun Xiao, Shigang Chen, You Zhou, and Junzhou Luo.
\newblock Estimating cardinality for arbitrarily large data stream with
  improved memory efficiency.
\newblock {\em {IEEE/ACM} Trans. Netw.}, 28(2):433--446, 2020.

\bibitem[Zeg15]{zegers2015fisher}
Pablo Zegers.
\newblock Fisher information properties.
\newblock {\em Entropy}, 17(7):4918--4939, 2015.

\bibitem[Zha20]{Zhao20}
Yunpeng Zhao.
\newblock A note on new {B}ernstein-type inequalities for the log-likelihood
  function of {B}ernoulli variables.
\newblock {\em Statistics \& Probability Letters}, 163:108779, 2020.

\end{thebibliography}

\newcommand{\etalchar}[1]{$^{#1}$}

\appendix

\section{Proofs from Section \ref{sect:fish-numbers}}\label{sect:proofs}

\subsection{Lemma~\ref{lem:int_id}}\label{sect:int_id}

Lemma~\ref{lem:int_id} is applied in the proofs of Lemmas~\ref{lem:H0I0} and \ref{lem:HLL_tool},
in Sections~\ref{sect:lemH0I0} and \ref{sect:lemHLL_tool}, respectively.

\begin{lemma}\label{lem:int_id} For any $b>a>0$ we have the following identity.
\begin{align*}
    \int_0^\infty \frac{e^{-a x}-e^{-bx}}{x}dx= \ln b - \ln a.
\end{align*}
\end{lemma}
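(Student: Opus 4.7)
The plan is to recognize this as a standard Frullani-type integral and evaluate it by rewriting the integrand as an integral and swapping the order of integration. The key observation is the identity
\begin{align*}
\frac{e^{-ax} - e^{-bx}}{x} = \int_a^b e^{-tx}\, dt,
\end{align*}
which follows from the fundamental theorem of calculus applied to $t \mapsto -\frac{1}{x} e^{-tx}$.

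Given this, I would substitute into the left-hand side and apply Tonelli's theorem (which applies without fuss since for $b > a > 0$ the integrand $e^{-tx}$ is non-negative on $(0,\infty) \times [a,b]$):
\begin{align*}
\int_0^\infty \frac{e^{-ax} - e^{-bx}}{x}\, dx
= \int_0^\infty \int_a^b e^{-tx}\, dt\, dx
= \int_a^b \int_0^\infty e^{-tx}\, dx\, dt
= \int_a^b \frac{1}{t}\, dt
= \ln b - \ln a.
\end{align*}

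No real obstacle arises here: convergence of the original integral at $0$ is ensured because the numerator $e^{-ax} - e^{-bx} = (b-a)x + O(x^2)$ as $x \to 0$, cancelling the $1/x$ singularity, and convergence at $\infty$ is exponential. The swap of integration is the only step requiring justification, and Tonelli settles it immediately since the integrand is positive. Alternatively, one could note this is the special case $f(x) = e^{-x}$ of the general Frullani formula $\int_0^\infty \frac{f(ax)-f(bx)}{x}\, dx = (f(0)-f(\infty)) \ln(b/a)$, but the direct Fubini argument above is self-contained and short.
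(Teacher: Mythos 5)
Your proof is correct. Both you and the paper end up using the same two-dimensional object, but you walk through it in opposite directions: you write $\frac{e^{-ax}-e^{-bx}}{x}=\int_a^b e^{-tx}\,dt$ and swap the order of integration via Tonelli, integrating out $x$ first to get $\int_a^b \frac{1}{t}\,dt$; the paper instead starts from $\ln b - \ln a = \int_a^b \frac{1}{t}\,dt$, writes $\frac{1}{t}$ as an evaluated antiderivative $-\frac{e^{-xt}}{t}\big|_{x=0}^{\infty}$, changes variables $r=tx$, and expresses the result as the difference of two exponential-integral-type quantities $\int_{ax}^\infty \frac{e^{-r}}{r}\,dr - \int_{bx}^\infty \frac{e^{-r}}{r}\,dr$ evaluated between $x=0$ and $x=\infty$, then invokes the fundamental theorem of calculus to recover the target integral. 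Your Tonelli argument is arguably cleaner: the non-negativity of $e^{-tx}$ immediately justifies the interchange, whereas the paper's route implicitly requires keeping the two exponential-integral terms paired (each diverges at $x=0$ individually, only the difference is finite), and the final step of differentiating under the integral sign and re-integrating is left informal. Both are sound, and your closing remark that this is a special case of the Frullani formula is an accurate observation, though as you note the direct Fubini computation is the shortest self-contained path.
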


\begin{proof}
\begin{align*}
    \ln b-\ln a &= \int_a^b \frac{1}{t} dt\\
    &=- \left.\int_a^b\frac{e^{-x t}}{t}dt \; \right|^\infty_0\\
    &= -\left.\int_{ax}^{bx} \frac{e^{-r}}{r}dr\;\right|^\infty_0 
    & \mbox{\{Change of variable: $r=tx$\}}\\
    &=-\left.\left(\int_{ax}^\infty\frac{e^{-r}}{r}dr-\int_{bx}^\infty\frac{e^{-r}}{r}dr\right)\right|^\infty_0
\end{align*}
Note that 
\begin{align*}
    \frac{d}{dx}\int_{ax}^\infty\frac{e^{-r}}{r}dr=-\frac{e^{-ax}}{ax}a=-\frac{e^{-ax}}{x}.
\end{align*}
Thus we have 
\begin{align*}
    \int_0^\infty 
\frac{e^{-a x}-e^{-bx}}{x}dx
&= \left.\left(-\int_{ax}^\infty\frac{e^{-r}}{r}dr+\int_{bx}^\infty\frac{e^{-r}}{r}dr\right)\right|^\infty_0
= \ln b - \ln a.
\end{align*}
\end{proof}

\subsection{Proof of Lemma~\ref{lem:H0I0}}\label{sect:lemH0I0}

\begin{proof}[Proof of Lemma \ref{lem:H0I0}]
Let $u=e^w$, then $\frac{du}{dw}=u$.
\begin{align*}
    \ln 2\cdot H_0&=\int_{0}^\infty \frac{e^{-u}u -(1-e^{-u})\ln (1-e^{-u})}{u} du\\
    &=1+\int_{0}^\infty \frac{-(1-e^{-u})\ln (1-e^{-u})}{u} du\\
    &=1+\int_{0}^\infty \frac{(1-e^{-u}) \sum_{k=1}^\infty \frac{e^{-ku}}{k}}{u} du & \mbox{\{Taylor exp.\}}\\
    &=1+ \sum_{k=1}^\infty \frac{1}{k}\int_{0}^\infty \frac{(e^{-ku}-e^{-(k+1)u})}{u} du.
\intertext{Applying Lemma \ref{lem:int_id} (Appendix~\ref{sect:int_id}), we have}
    H_0 &= \frac{1}{\ln 2}+\sum_{k=1}^\infty \frac{1}{k} \log_2\left(\frac{k+1}{k}\right).
\end{align*}
We now prove that $I_0=\pi^2/6$. 
Letting $u=e^r$, we have $\frac{du}{dr}=u$ and can write $I_0$ as
\[
I_0 = \int_{0}^\infty \frac{(e^r)^2}{e^{e^r}-1}dr
    = \int_{0}^\infty \frac{u^2}{u(e^{u}-1)}du
    = \int_{0}^\infty \frac{u}{e^{u}-1}du,
\]
which is exactly the integral representation of the 
Riemann zeta function evaluated at $s=2$. We conclude that
\[
I_0 = \zeta(2)=\frac{\pi^2}{6}.
\]
\end{proof}

\subsection{Proof of Lemma~\ref{lem:HLL_tool}}\label{sect:lemHLL_tool}

\begin{proof}[Proof of Lemma \ref{lem:HLL_tool}]
Let $w=e^r$ and $\frac{dw}{dr}=w$. We have
\begin{align*}
    \lefteqn{\ln 2\cdot \phi(q)}\\
    &=\int_{0}^\infty \frac{-(e^{-w}-e^{-qw})\ln(e^{-w}-e^{-qw})}{w}dw\\
    &= \int_{0}^\infty \frac{-(e^{-w}-e^{-qw})\ln(1-e^{-(q-1)w})}{w}dw
    +\int_{0}^\infty (e^{-w}-e^{-qw})dw.
\intertext{Note that $\int_{0}^\infty (e^{-w}-e^{-qw})dw=1-\frac{1}{q}$.  Continuing with a Taylor expansion of the logarithm, we have}
    &= \int_{0}^\infty \frac{(e^{-w}-e^{-qw})\sum_{k=1}^\infty \frac{e^{-k(q-1)w}}{k}}{w}dw+1-\frac{1}{q} \\
    &= \sum_{k=1}^\infty\frac{1}{k}\int_{0}^\infty \frac{e^{-(k(q-1)+1)w}-e^{-(k(q-1)+q)w} }{w}dw+1-\frac{1}{q}.
\intertext{Applying Lemma \ref{lem:int_id} (Appendix~\ref{sect:int_id}) to the integral, this is equal to}
     &= 1-\frac{1}{q} + \sum_{k=1}^\infty\frac{1}{k}\ln\left(\frac{kq-k+q}{kq-k+1}\right)\\
     &= 1-\frac{1}{q} + \sum_{k=1}^\infty\frac{1}{k}\ln\left(\frac{k+\frac{1}{q-1}+1}{k+\frac{1}{q-1}}\right).
\intertext{Hence $\phi(q)$ is }
\phi(q) &= \frac{1-1/q}{\ln 2} + \sum_{k=1}^\infty\frac{1}{k}\log_2 \left(\frac{k+\frac{1}{q-1}+1}{k+\frac{1}{q-1}}\right).
\end{align*}
Set $w=e^r$, then $\frac{dw}{dr}=w$. We have
\begin{align*}
\rho(q) &=\int_{0}^{\infty} \frac{(-we^{-w}+qwe^{-qw})^2}{w(e^{-w}-e^{-qw})}dw\\
    &= \int_{0}^{\infty} \frac{w e^{-w}(1-qe^{-(q-1)w})^2}{1-e^{-(q-1)w}}dw\\
    &= \int_{0}^{\infty} \frac{w e^{-w}q^2(1-e^{-(q-1)w}+\frac{1}q{-1})^2}{1-e^{-(q-1)w}}dw\\
    &= \int_{0}^{\infty} w e^{-w}q^2\left(1-e^{-(q-1)w}+2\left(\frac{1}{q}-1\right)
    + \frac{(\frac{1}{q}-1)^2}{1-e^{-(q-1)w}}\right)dw\\
    &= \int_{0}^{\infty} w e^{-w}\left(-q^2e^{-(q-1)w}+(-q^2+2q)
    +\frac{(q-1)^2}{1-e^{-(q-1)w}}\right)dw\\
    &= -q^2\int_{0}^{\infty} we^{-qw}dw
    +(-q^2+2q)\int_{0}^{\infty} we^{-w}dw
    +(q-1)^2\int_{0}^{\infty} \frac{we^{-w}}{1-e^{-(q-1)w}}dw.
\end{align*}
We calculate the three integrals separately. First we have
\begin{align*}
    \int_0^\infty we^{-qw}dw
    &=
    \left.e^{-qw}\left(-\frac{w}{q}-\frac{1}{q^2}\right)\right|_0^\infty = \frac{1}{q^2}.
\intertext{For the second we have}
    \int_0^\infty w e^{-w} dw &= e^{-w}(-w-1)\Big|_0^\infty=1.
\intertext{For the last, let $u=(q-1)w$. Then we have}
    \int_{0}^{\infty} \frac{we^{-w}}{1-e^{-(q-1)w}}dw
    &= \frac{1}{(q-1)^2}\int_{0}^{\infty} \frac{ue^{-\frac{u}{q-1}}}{1-e^{-u}}du,
\end{align*} 
where $\int_{0}^{\infty} \frac{ue^{-\frac{u}{q-1}}}{1-e^{-u}}du$ is just the integral representation of the Hurwitz zeta function $\zeta(2,\frac{1}{q-1})$. This can be written as a sum series as follows.
\begin{align*}
    \int_{0}^{\infty} \frac{ue^{-\frac{u}{q-1}}}{1-e^{-u}}du
    =
    \sum_{k=0}^\infty\frac{1}{(k+\frac{1}{q-1})^2}.
\end{align*}
Combining the three integrals, we conclude that
\[
    \rho(q) = -1-q^2+2q+\sum_{k=0}^\infty\frac{1}{(k+\frac{1}{q-1})^2} \;=\; \sum_{k=1}^\infty\frac{1}{(k+\frac{1}{q-1})^2}.
\]
\end{proof}

\subsection{Proof of Lemma~\ref{lem:HLL}}\label{sect:lemHLL}

\begin{proof}[Proof of Lemma~\ref{lem:HLL}]
For a fixed $\lambda$, by the definition of Shannon entropy, we have
\begin{align*}
    H(X_{\qLL,\lambda}) &= \sum_{k=-\infty}^\infty -(e^{-\frac{\lambda}{q^k}}-e^{-\frac{\lambda}{q^{k-1}}})\log_2(e^{-\frac{\lambda}{q^k}}-e^{-\frac{\lambda}{q^{k-1}}})\\
    &=\sum_{k=-\infty}^\infty -(e^{-\frac{q\lambda}{q^k}}-e^{-\frac{q\lambda}{q^{k-1}}})\log_2 (e^{-\frac{q\lambda}{q^k}}-e^{-\frac{q\lambda}{q^{k-1}}})
    \;=\; H(X_{\qLL,q\lambda}).
\end{align*}
Also, we have
\begin{align*}
    I_{\qLL}(\lambda) &=\sum_{k=-\infty}^\infty \frac{\left(-\frac{1}{q^k}e^{-\frac{\lambda}{q^k}}+\frac{1}{q^{k-1}}e^{-\frac{\lambda}{q^{k-1}}}\right)^2}{e^{-\frac{\lambda}{q^k}}-e^{-\frac{\lambda}{q^{k-1}}}}\\
    &= q^2\sum_{k=-\infty}^\infty \frac{\left(-\frac{1}{q^k}e^{-\frac{q\lambda}{q^k}}+\frac{1}{q^{k-1}}e^{-\frac{q\lambda}{q^{k-1}}}\right)^2}{e^{-\frac{q\lambda}{q^k}}-e^{-\frac{q\lambda}{q^{k-1}}}}
    \;=\; q^2\cdot I_{\qLL}(q\lambda).
\end{align*}
We conclude that $\qLL$ is weakly scale-invariant with base $q$.
We now turn to calculating $\mathcal{H}(\qLL)$ and $\mathcal{I}(\qLL)$.
By Definition~\ref{def:fish},
\begin{align*}
    \mathcal{H}(\qLL) &= \int_0^1 H(X_{\qLL},q^r)dr\\
    &= - \int_0^1 \sum_{k=-\infty}^\infty \left(e^{-q^{r-k}}-e^{-q^{r-k+1}}\right)\log_2 \left(e^{-q^{r-k}}-e^{-q^{r-k+1}}\right)dr\\
    &= - \int_{-\infty}^\infty \left(e^{-q^{r}}-e^{-q^{r+1}}\right)\log_2 \left(e^{-q^{r}}-e^{-q^{r+1}}\right)dr
    \;=\; \frac{\phi(q)}{\ln q}.
\intertext{Again, by Definition~\ref{def:fish},}
    \mathcal{I}(\qLL) &= \int_0^1 q^{2r}I_{\qLL}(q^r) dr  \\
    &= \int_0^1\sum_{k=-\infty}^\infty \frac{\left(-q^{r-k}e^{-q^{r-k}}+q^{r-k+1}e^{-q^{r-k+1}}\right)^2}{e^{-q^{r-k}}-e^{-q^{r-k+1}}}dr\\
    &= \int_{-\infty}^{\infty} \frac{\left(-q^{r}e^{-q^{r}}+q^{r+1}e^{-q^{r+1}}\right)^2}{e^{-q^{r}}-e^{-q^{r+1}}}dr
    \;=\; \frac{\rho(q)}{\ln q}.
\end{align*}
\end{proof}

\subsection{Proof of Lemma~\ref{lem:HLL_qhigh} }

\begin{proof}[Proof of Lemma~\ref{lem:HLL_qhigh} ]
Note that by Lemma \ref{lem:HLL}, 
\[
\fish(\qLL) = \frac{\mathcal{H}(\qLL)}{\mathcal{I}(\qLL)}=\frac{\phi(q)}{\rho(q)}.
\]
Then we have 
\begin{align*}
    \ln 2\cdot \phi'(q) &= \frac{1}{q^2}+\sum_{k=1}^\infty \frac{1}{k}\left(\frac{1}{k+\frac{1}{q-1}+1}-\frac{1}{k+\frac{1}{q-1}}\right)\frac{-1}{(q-1)^2} \\
    &= \frac{1}{q^2}+\frac{1}{(q-1)^2}\sum_{k=1}^\infty\frac{1}{k(k+\frac{1}{q-1})(k+\frac{q}{q-1})}\\
    &= \frac{1}{q(q-1)}\left(\frac{q-1}{q}+\sum_{k=1}^\infty\frac{1}{k(k+\frac{1}{q-1})(\frac{q-1}{q}k+1)}\right)\\
    &= \frac{1}{q(q-1)}\left(\sum_{k=1}^\infty \right( \frac{1}{k+\frac{1}{q-1}}-\frac{1}{k+\frac{1}{q-1}+1}\left)
    +\sum_{k=1}^\infty\frac{1}{k(k+\frac{1}{q-1})(\frac{q-1}{q}k+1)}\right)\\
    &= \frac{1}{q(q-1)}\left(\sum_{k=1}^\infty \frac{1}{(k+\frac{1}{q-1})(k+\frac{q}{q-1})}
    +\sum_{k=1}^\infty\frac{1}{k(k+\frac{1}{q-1})(\frac{q-1}{q}k+1)}\right)\\
    &= \frac{1}{q(q-1)}\left(\sum_{k=1}^\infty\frac{\frac{q-1}{q}k+1}{k(k+\frac{1}{q-1})(\frac{q-1}{q}k+1)}\right)\\
    &= \frac{1}{q(q-1)}\left(\sum_{k=1}^\infty\frac{1}{k(k+\frac{1}{q-1})}\right),
\intertext{and}
    \rho'(q) &=\frac{2}{(q-1)^2}\sum_{k=1}^\infty\frac{1}{(k+\frac{1}{q-1})^3}.
\intertext{Define $\alpha$ and $\beta$ as follows.}
    \alpha(q) &= \sum_{k=1}^\infty\frac{1}{k(k+\frac{1}{q-1})}\\
    \beta(q)  &= \sum_{k=1}^\infty\frac{1}{(k+\frac{1}{q-1})^3},
\end{align*}
We then have
\begin{align*}
   \ln2\cdot \frac{d}{dq}\frac{\mathcal{H}(\qLL)}{\mathcal{I}(\qLL)} 
    &= \ln2 \cdot\frac{d}{dq}\frac{\phi(q)}{\rho(q)}\\
    &= \ln2 \cdot\frac{\phi'(q)\rho(q)-\rho'(q)\phi(q)}{\rho(q)^2}\\
    &= \frac{\frac{q-1}{q}\alpha(q)\rho(q)-2\beta(q)\ln2 \cdot\phi(q)}{(q-1)^2\rho(q)^2}.
\end{align*}

We define $g(a,b)=\frac{b-1}{b}\alpha(b)\rho(b)-2\beta(a)\ln2\cdot\phi(a)$ and thus $\frac{d}{dq}\frac{\mathcal{H}(\qLL)}{\mathcal{I}(\qLL)}<0$ 
if and only if $g(q,q)<0$.

Note that we have $\phi'(q)>0$ and $\rho'(q)>0$ for all $q>1$ and thus both $\rho(q)$ and $\phi(q)$ are monotonically increasing for $q>1$. Note that $\frac{1}{q-1}$ is monotonically decreasing for $q>1$, thus both $\alpha(q)$ and $\beta(q)$ are also monotonically increasing.

Let $a<b$ where $a\in (1,\infty)$ and $b\in (1,\infty]$. Since $\alpha,\rho,\beta$ and $\phi$ are all monotonically increasing, if $g(a,b)<0$, then for any $q\in[a,b)$, 
\begin{align*}
    g(q,q) &=\frac{q-1}{q}\alpha(q)\rho(q)-2\beta(q)\ln2\cdot \phi(q)\\                  &<\frac{b-1}{b}\alpha(b)\rho(b)-2\beta(a)\ln2\cdot \phi(a)\\
            &=g(a,b)\\
            &<0.
\end{align*}

Thus, to prove that $\fish(\qLL)$ is strictly decreasing for $q\geq 1.4$, it is sufficient to find a sequence $1.4=q_0<q_1,\ldots<q_n=\infty$ such that for all $k\in[n]$,  $g(q_{k-1},q_k)<0$.
The following table shows the existence of such a sequence and thus completes the proof.
    \begin{center}
        \begin{tabular}{c|c|c}
        $k$  &  $q_k$ &  $g(q_{k-1},q_k)$ \\
        \hline
        0     & 1.4 & \\
        1     & 1.49 & -0.00228439\\
        2     & 1.62 & -0.00186328\\
        3     & 1.81 & -0.00522805\\
        4     & 2.12   & -0.00747658\\
        5     & 2.72   & -0.0038581\\
        6     & 4.25   & -0.00602114\\
        7     & 6  & -0.669626\\
        8     & $\infty$  & -0.216103\\
        \end{tabular}
    \end{center}
\end{proof}

\subsection{Proof of Lemma~\ref{lem:HLL_qlow}}

\begin{proof}[Proof of Lemma \ref{lem:HLL_qlow}]
We first calculate that $\ln2\cdot \fish(2\text{-}\LL)\approx 2.1097$. We then prove that for any $q\in(1,1.4]$, $\ln2\cdot \fish(\qLL)>2.11$.
We use the inequality $\ln x > 1-\frac{1}{x}$ for $x>0$, and find that 
\begin{align*}
    \ln\left(\frac{k+\frac{1}{q-1}+1}{k+\frac{1}{q-1}}\right) >1-\frac{k+\frac{1}{q-1}}{k+\frac{1}{q-1}+1}=\frac{1}{k+\frac{1}{q-1}+1}.
\end{align*}
Thus we have
\begin{align*}
    \ln 2\cdot \phi(q) &> 1-\frac{1}{q}+\sum_{k=1}^\infty \frac{1}{k(k+\frac{q}{q-1})}.
\end{align*}
\noindent We also have
\begin{align*}
    \rho(q) &<\sum_{k=1}^\infty\left(\frac{1}{\frac{1}{q-1}+k-1}-\frac{1}{\frac{1}{q-1}+k}\right)\\
    &= \frac{1}{\frac{1}{q-1}+1-1}\\
    &=q-1.
\end{align*}
\noindent Combining the two, we have
\begin{align*}
    \ln2\cdot \fish(q\text{-}\LL)=\frac{\ln2\cdot \phi(q)}{\rho(q)} &> \frac{1}{q}+ \sum_{k=1}^\infty\frac{1}{k((q-1)k+q)}\\
    &\geq  \frac{1}{1.4}+ \sum_{k=1}^\infty\frac{1}{k(0.4 k+1.4)}\\
    &\approx 2.11863>\ln2\cdot \fish(2\text{-}\LL).
\end{align*}
We conclude that for any $q\in(1,1.4]$, $\fish(q\text{-}\LL)
> 
\fish(2\text{-}\LL)$.
\end{proof}

\section{Lemmas and Proofs of Section \ref{sect:lowerbound}}\label{sect:proofs_lowerbound}

\subsection{Lemma \ref{lem:rational_inequality}}\label{sect:lem:rational_inequality}

\begin{lemma}\label{lem:rational_inequality}
Let $f,g$ and $h$ be real functions. If 
\begin{itemize}
    \item $f(t)\geq 0$, $g(t)>0$ and $h(t)\in [0,1]$ for all $ t\in \mathbb{R}$, 
    \item $f(t)/g(t)$ and $h(t)$  are (weakly) decreasing in $t$, and
    \item $\int_{-\infty}^\infty f(t)dt<\infty$ and $\int_{-\infty}^\infty g(t)dt<\infty$,
\end{itemize}
then for any $-\infty\leq a<b\leq \infty$ such that $\int_{a}^b g(t)h(t) dt>0$, we have
\begin{align}
    \frac{\int_{a}^b f(t)h(t) dt}{\int_{a}^b g(t)h(t) dt} \geq \frac{\int_{a}^b f(t) dt}{\int_a^b g(t) dt}. \label{eq:abfrac}
\end{align}
In particular, we have
\begin{align}
\frac{\int_{-\infty}^b f(t) dt}{\int_{-\infty}^b g(t) dt} \geq \frac{\int_{-\infty}^\infty f(t) dt}{\int_{-\infty}^\infty g(t) dt}\label{eq:ibfrac}
\end{align}
\end{lemma}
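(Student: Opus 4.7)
The plan is to view inequality~(\ref{eq:abfrac}) as a weighted form of Chebyshev's integral correlation inequality. First I would introduce $F(t) \bydef f(t)/g(t)$, which is well defined since $g>0$ and is (weakly) decreasing by hypothesis, together with the finite positive measure $d\mu(t) = g(t)\,dt$ restricted to $[a,b]$. The integrability of $g$ and the assumption $g>0$ imply $\mu([a,b]) \in (0,\infty)$, so $\nu \bydef \mu/\mu([a,b])$ is a probability measure on $[a,b]$. Rewriting $\int_a^b f h\,dt = \int F h\,d\mu$, $\int_a^b g h\,dt = \int h\,d\mu$, $\int_a^b f\,dt = \int F\,d\mu$, and $\int_a^b g\,dt = \mu([a,b])$, inequality~(\ref{eq:abfrac}) becomes
\begin{align*}
\mathbb{E}_\nu[F h] \;\geq\; \mathbb{E}_\nu[F]\,\mathbb{E}_\nu[h],
\end{align*}
which is the Chebyshev correlation inequality for two weakly decreasing (hence comonotone) functions on the probability space $(\mathbb{R},\nu)$.

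For the Chebyshev step itself I would invoke the classical symmetrization trick: let $X,Y$ be i.i.d.\ with law $\nu$; then
\begin{align*}
2\bigl(\mathbb{E}_\nu[Fh] - \mathbb{E}_\nu[F]\,\mathbb{E}_\nu[h]\bigr) \;=\; \mathbb{E}\bigl[(F(X)-F(Y))(h(X)-h(Y))\bigr],
\end{align*}
and since $F$ and $h$ are both weakly decreasing, the two factors on the right always share a sign, making the expectation non-negative. This immediately yields~(\ref{eq:abfrac}).

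Finally, inequality~(\ref{eq:ibfrac}) I would obtain as a special case: apply~(\ref{eq:abfrac}) with the integration range taken to be all of $\mathbb{R}$ and the test function $h(t) = \mathbf{1}_{(-\infty,b]}(t)$, which is $\{0,1\}$-valued and weakly decreasing. The left- and right-hand sides of~(\ref{eq:abfrac}) then reduce to exactly those of~(\ref{eq:ibfrac}), and the positivity precondition of~(\ref{eq:abfrac}) becomes $\int_{-\infty}^b g\,dt > 0$, which is already needed for~(\ref{eq:ibfrac}) to be meaningful. I anticipate no serious obstacle; the only bookkeeping is at infinite endpoints, where the assumed integrability of $f$ and $g$ ensures every integral converges and $\nu$ remains a bona fide probability measure.
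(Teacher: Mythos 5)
Your proof is correct and essentially the same as the paper's: after clearing the normalization $\mu([a,b])^2$ and substituting $d\nu \propto g\,dt$, your symmetrization identity $2\bigl(\E_\nu[Fh] - \E_\nu[F]\,\E_\nu[h]\bigr) = \E\bigl[(F(X)-F(Y))(h(X)-h(Y))\bigr]$ is exactly the paper's double-integral expansion $\int_a^b\int_a^b \bigl(f(x)g(y)-g(x)f(y)\bigr)\bigl(h(x)-h(y)\bigr)\,dx\,dy \geq 0$, with the same comonotonicity observation supplying the sign. The reduction of~(\ref{eq:ibfrac}) via $h=\mathbbm{1}(t\leq b)$ also matches the paper; the only difference is your probabilistic packaging as Chebyshev's correlation inequality.
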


\begin{proof}
To show (\ref{eq:abfrac}), it is sufficient to prove the following difference is non-negative.
\begin{align*}
   &\quad 2 \int_{a}^b f(t)h(t) dt \int_a^b g(t) dt - 2\int_{a}^b g(t)h(t) dt\int_{a}^b f(t) dt\\
   &= \int_a^b\int_a^b f(x)h(x)g(y)+f(y)h(y)g(x)dx dy-\int_a^b\int_a^b g(x)h(x)f(y)+g(y)h(y)f(x)dx dy\\
   &= \int_a^b\int_a^b (f(x)g(y)-g(x)f(y))(h(x)-h(y))dx dy.
\end{align*}
Note that $(f(x)g(y)-g(x)f(y))(h(x)-h(y))=\frac{1}{g(x)g(y)}(f(x)/g(x)-f(y)/g(y))(h(x)-h(y))\geq 0$, since both $f(t)/g(t)$ and $h(t)$ are decreasing. Thus the difference is non-negative.

Inequality (\ref{eq:ibfrac}) follows from \ref{eq:abfrac} by setting $a=-\infty$, $b=\infty$ and $h(t)=\mathbbm{1}(t\leq b)$.
\end{proof}

\subsection{Proof of Lemma \ref{lem:monotonicity}}

\begin{proof}[Proof of Lemma \ref{lem:monotonicity}]
Note that
\begin{align*}
    \frac{\dot{H}(t)\ln 2}{\dot{I}(t)} = \frac{1-e^{-t}}{t} + \frac{(1-e^{-t})^2}{t^2}\cdot (-e^{t}\ln(1-e^{-t})).
\end{align*}
Since $-\ln(1-e^{-t})$ is decreasing on $(0,\infty)$, it suffices to prove that $\frac{1-e^{-t}}{t}$ and $-e^{t}\ln(1-e^{-t})$ are decreasing on $(0,\infty)$. Let $f(t)=\frac{1-e^{-t}}{t}$ and $g(t)=-e^{t}\ln(1-e^{-t})$. By taking the derivative of $f(t)$, we have
\begin{align*}
    f'(t)&=\frac{e^{-t}t-(1-e^{-t})}{t^2}=-\frac{e^{-t}}{t^2}(e^t-1-t)\leq 0
\end{align*}
which implies $f(t)$ is decreasing. 
By taking the derivative of $g(t)$, we have
\begin{align*}
    g'(t)=-e^t\left(\ln(1-e^{-t})+\frac{e^{-t}}{1-e^{-t}}\right),
\end{align*}
where we want to show that $g'(t)\leq 0$ for all $t>0$. Note that for any $x>0$, we have $x\geq \ln(1+x)$. Set $x=\frac{1}{e^t-1}$ and we have
\begin{align*}
    &\frac{1}{e^t-1} \geq \ln\left(1+\frac{1}{e^t-1}\right)
    \iff \frac{e^{-t}}{1-e^{-t}}-\ln\left(\frac{e^t}{e^t-1}\right)\geq 0
    \iff \ln\left(1-e^{-t}\right)+\frac{e^{-t}}{1-e^{-t}} \geq 0,
\end{align*}
which implies, indeed,  $g'(t)\leq 0$ for all $t>0$. This completes the proof.
\end{proof}

\subsection{Proof of Lemma \ref{lem:H_relax}}
Lemma \ref{lem:H_relax} is a property of the function $\dot{H}(\cdot)$. To prove that, we first need the following lemmas. Define $\dot{H}_e(\cdot)$ to be $\dot{H}(\cdot)\ln 2$, i.e.~the entropy measured in the natural base.

\begin{lemma}\label{lem:H_bound}
For all $t>0$,
\begin{align*}
    te^{-t} \leq \dot{H}_e(t) \leq 2\sqrt{t}.
\end{align*}
\end{lemma}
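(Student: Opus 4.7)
The plan is to handle the two inequalities separately. For the lower bound $te^{-t} \le \dot{H}_e(t)$, I will rewrite
\[
\dot{H}_e(t) - te^{-t} = -(1-e^{-t})\ln(1-e^{-t})
\]
and note that for every $t > 0$ the factor $1-e^{-t}$ lies in $(0,1)$, hence $\ln(1-e^{-t}) < 0$ and the right-hand side is non-negative. That immediately delivers $\dot{H}_e(t) \ge te^{-t}$.

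For the upper bound $\dot{H}_e(t) \le 2\sqrt{t}$, my plan is to split at $t = 1$. The key observation is that $\dot{H}_e(t)$ is exactly the binary entropy in nats of a Bernoulli variable with success probability $p = e^{-t}$, so $\dot{H}_e(t) \le \ln 2 < 2$. When $t \ge 1$ this already gives $\dot{H}_e(t) \le 2 \le 2\sqrt{t}$, so the large-$t$ case is trivial.

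For the remaining range $t \in (0,1)$ the plan is to bound the two summands separately by $\sqrt{t}$. The first obeys $te^{-t} \le t \le \sqrt{t}$, using $t \le 1$. For the second I will establish the auxiliary inequality
\[
-u\ln u \;\le\; \sqrt{u} \qquad \text{for all } u \in (0,1],
\]
equivalently $-\sqrt{u}\,\ln u \le 1$; a one-variable calculus check shows the left-hand side is maximized at $u = e^{-2}$ where it equals $2/e < 1$. Applying this with $u = 1-e^{-t}$ and using the standard estimate $1-e^{-t} \le t$ yields $-(1-e^{-t})\ln(1-e^{-t}) \le \sqrt{1-e^{-t}} \le \sqrt{t}$. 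Summing the two pieces gives $\dot{H}_e(t) \le 2\sqrt{t}$ on $(0,1)$, completing the case analysis.

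The only non-routine ingredient is the auxiliary inequality $-\sqrt{u}\,\ln u \le 1$ on $(0,1]$, which I expect to be the mild technical obstacle; everything else reduces to monotonicity, a sign observation, and the standard bounds $1-e^{-t} \le t$ and $t \le \sqrt{t}$ for $t \le 1$.
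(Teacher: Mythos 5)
Your proof is correct. The lower bound is handled identically to the paper (sign observation on the second summand). For the upper bound you take a slightly different route: you split at $t=1$ instead of at $t=\ln^2(2)/4$ (which is where $2\sqrt{t}$ first exceeds the entropy ceiling $\ln 2$), and you apply the auxiliary inequality $-u\ln u \le \sqrt{u}$ directly to $u = 1-e^{-t}$ before using $\sqrt{1-e^{-t}}\le\sqrt{t}$. The paper instead first invokes the monotonicity of $-x\ln x$ on $(0,1/e]$ to replace $1-e^{-t}$ by $t$ and then applies the same auxiliary inequality at $t$. Your rearrangement saves that monotonicity step and uses the cleaner split point, so your argument is marginally shorter; the paper's tighter split point buys nothing here since the auxiliary inequality is already valid on all of $(0,1]$. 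Both you and the paper verify the auxiliary inequality by one-variable calculus (you maximize $-\sqrt{u}\ln u$ at $u=e^{-2}$ getting $2/e<1$; the paper minimizes $t^{-1/2}+\ln t$ at $t=1/4$ getting $2-\ln 4>0$), which are equivalent checks.
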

\begin{proof}
The lower bound follows directly from the definition:
\begin{align*}
    \dot{H}_e(t)=te^{-t}-(1-e^{-t})\ln (1-e^{-t})\geq te^{-t}.
\end{align*}

For  the upper bound, first note that since $\dot{H}_e(t)$ is the entropy (measured in ``\emph{nats}'') of a Bernoulli random variable, we have $\dot{H}_e(t)\leq \ln(2)$. Thus we only need to prove $\dot{H}_e(t)\leq 2\sqrt{t}$ for $t\in (0,\ln (2)^2/4]$. Note that $te^{-t}\leq t \leq \sqrt{t}$ for $t\in (0,1]$. It then suffices to show that $-(1-e^{-t})\ln(1-e^{-t})\leq \sqrt{t}$ for $t\in (0,\ln^2 (2)/4]$. Observe the following.
\begin{itemize}
    \item $-x\ln x$ is increasing in $x\in(0,1/e]$, since $(-x\ln x)'=-\ln x+1$. Note that $1/e>0.36>0.12> 1-e^{-\ln^2 (2)/4}$.
    \item $1-e^{-t}\leq t$ for all $t\in \mathbb{R}$.
    \item $-t\ln t< \sqrt{t}$ for $t\in (0,1]$. Let $f(t)=t^{-1/2}+\ln t$. Then we have $f'(t)=-t^{-3/2}/2+1/t=t^{-3/2}(-1/2+\sqrt{t})$. 
    Then only zero of $f'(t)$ is at $t=1/4$, which is the minimum point. Note that $f(1/4)=2-\ln 4>0$. Therefore $t^{-1/2}+\ln t>0$ for $t\in(0,1]$, which implies $-t\ln t<\sqrt{t}$.
\end{itemize}
Then we have, for $t\in (0,\ln^2 (2)/4]$,
\begin{align*}
    -(1-e^{-t})\ln(1-e^{-t}) \leq -t\ln t \leq \sqrt{t},
\end{align*}
where the first inequality results from the first two observations and the second inequality follows from the last observation.
\end{proof}

\begin{lemma}\label{lem:HtpHt}
For any $p>1$, $\dot{H}_e(t/p)/\dot{H}_e(t)$ is increasing in $t\in (0,\ln(2))$.
\end{lemma}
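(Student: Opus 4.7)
My first step will be to take logarithms. With $L(s) := \ln \dot{H}_e(s)$, the claim that $\dot{H}_e(t/p)/\dot{H}_e(t)$ is increasing in $t$ is equivalent to
\begin{align*}
\tfrac{1}{p}\,L'(t/p) \;>\; L'(t) \qquad \text{for all } t\in(0,\ln 2) \text{ and } p>1.
\end{align*}
Since equality holds at $p=1$, I will fix $t$ and show that $G(p):=\tfrac{1}{p}\,L'(t/p)$ is increasing on $p\ge 1$. A direct differentiation gives $G'(p)=-p^{-2}\bigl(s\,L'(s)\bigr)'\big|_{s=t/p}$, so the problem reduces to proving that $\Phi(s):=s\,\dot{H}_e'(s)/\dot{H}_e(s)$ is a \emph{decreasing} function of $s$ on $(0,\ln 2)$; equivalently, $\ln\dot{H}_e(e^u)$ is concave in $u$ on $(-\infty,\ln\ln 2)$.

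To make $\Phi$ concrete I will first verify the identity $\dot{H}_e'(s)=-e^{-s}\ln(e^s-1)$, which follows in two lines from $s+\ln(1-e^{-s})=\ln(e^s-1)$. Substituting $x=e^{-s}\in(1/2,1)$ for $s\in(0,\ln 2)$ then yields
\begin{align*}
\Phi(s) \;=\; \frac{x\ln x\cdot\ln\bigl((1-x)/x\bigr)}{-x\ln x-(1-x)\ln(1-x)} \;=:\; R(x).
\end{align*}
Since $s$ is a decreasing function of $x$, the target becomes: $R$ is \emph{increasing} on $(1/2,1)$. Two endpoint checks support this expectation: $R(1/2)=0$, because $\ln((1-x)/x)$ vanishes at $x=1/2$, while a Taylor expansion in $u=1-x$ at $x=1$ shows both numerator and denominator of $R$ behave like $-u\ln u$, giving $R(1^-)=1$.

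The technical heart of the proof is then to show $R'(x)>0$ on $(1/2,1)$. Writing $N(x)$ and $D(x)$ for the numerator and denominator of $R$, the sign of $R'$ equals that of $N'(x)D(x)-N(x)D'(x)$; I plan to expand this difference and group terms according to the elementary building blocks $\ln x$, $\ln(1-x)$, and their products. The restriction $x>1/2$ is essential, since $\ln((1-x)/x)$ changes sign at $x=1/2$ and the desired inequality is not symmetric under $x\leftrightarrow 1-x$. This algebraic verification is where I expect the main difficulty to lie: $N$ and $D$ degenerate at the same rate $-u\ln u$ near $x=1$, so cancellations must be tracked carefully, and the mixed logarithmic products resist a clean convexity-type shortcut. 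A practical fallback is to split $(1/2,1)$ into $[1/2,1-\epsilon]$, where the relevant logarithms and their derivatives admit uniform bounds and the inequality can be verified by explicit estimates, and $(1-\epsilon,1)$, where a Taylor expansion in $u=1-x$ reduces the inequality to a polynomial check in $u$ at leading order.
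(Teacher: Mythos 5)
Your setup is correct and, in fact, leads to a reduction that is equivalent to the one in the paper, just packaged differently. Writing $L(s)=\ln\dot H_e(s)$ and differentiating $G(p)=p^{-1}L'(t/p)$ in $p$ is a nice way to avoid the somewhat messy factoring of $\dot H_e'(t/p)\dot H_e(t)/p-\dot H_e(t/p)\dot H_e'(t)$ that the paper performs. Your target $\Phi(s)=s\,\dot H_e'(s)/\dot H_e(s)$ (equivalently $R(x)$ with $x=e^{-s}$) and the paper's target $h(t)=\dfrac{e^{t}}{\ln(e^{t}-1)}-\dfrac{e^{t}-1}{t}$ are related by the identity $\Phi=-1/h$, so asking ``is $R$ increasing on $(1/2,1)$?'' is literally the same question as ``is $h$ decreasing on $(0,\ln 2)$?''

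The genuine gap is that you never actually prove $R'>0$. You explicitly flag this as ``the technical heart'' and ``where I expect the main difficulty to lie,'' and your proposed plan — expand $N'D-ND'$ and group log-terms, or split $(1/2,1)$ into a compact part with uniform bounds and a neighborhood of $1$ with a Taylor expansion — is a sketch, not a proof, and is substantially heavier than needed. The observation you are missing is the decomposition of $h$ into two elementary monotone pieces: on $(0,\ln 2)$ one has $\ln(e^{t}-1)<0$ with $\abs{\ln(e^{t}-1)}$ decreasing, so $\dfrac{e^{t}}{\ln(e^{t}-1)}$ is decreasing, and $\dfrac{e^{t}-1}{t}$ is increasing (its derivative has sign of $e^{t}t-e^{t}+1\geq 0$). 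Hence $h$ is a sum of two decreasing functions and the monotonicity follows at once, with no interval-splitting or Taylor bookkeeping required. In your $x$-variable this decomposition reads $-1/R(x)=\dfrac{1}{x\ln((1-x)/x)}+\dfrac{1-x}{x\ln x}$, and each summand can be checked monotone on $(1/2,1)$ by the same elementary reasoning. Without this (or some other completed argument for $R'>0$), the proposal is an incomplete proof.
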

\begin{proof}
Fix $p>1$, let $f(t)=\dot{H}_e(t/p)/\dot{H}_e(t)$. First note that
\begin{align*}
\dot{H}_e(t)&=te^{-t}-(1-e^{-t})\ln(1-e^{-t})=t-(1-e^{-t})\ln(e^t-1),\\
    \dot{H}_e'(t)&=e^{-t}-te^{-t}-e^{-t}\ln(1-e^{-t})-e^{-t}=-e^{-t}\ln(e^{t}-1).
\end{align*}
We have
\begin{align*}
    f'(t)=\frac{\dot{H}_e'(t/p)\dot{H}_e(t)/p-\dot{H}_e(t/p)\dot{H}_e'(t)}{\dot{H}_e(t)^2}.
\end{align*}
Define 
\begin{align*}
    g(t)&=\dot{H}_e'(t/p)\dot{H}_e(t)/p-\dot{H}_e(t/p)\dot{H}_e'(t)\\
    &=-e^{-t/p}\ln\left(e^{t/p}-1\right)\left(t-(1-e^{-t})\ln\left(e^t-1\right)\right)/p\\
    &\quad\quad\quad +e^{-t}\ln\left(e^{t}-1\right)\left(t/p-\left(1-e^{-t/p}\right)\ln\left(e^{t/p}-1\right)\right)\\
    &=\frac{1}{pe^{t/p+t}}\bigg[-\ln\left(e^{t/p}-1\right)\left(te^t-\left(e^t-1\right)\ln\left(e^t-1\right)\right)\\
    &\quad\quad\quad\quad\quad
    +p\ln\left(e^{t}-1\right)\left(\frac{t}{p}e^{t/p}-\left(e^{t/p}-1\right)\ln\left(e^{t/p}-1\right)\right)\bigg].
\end{align*}
We want to show that $g(t)\geq 0$ for $t\in (0,\ln(2))$.
Define
\begin{align*}
    h(t)= \frac{e^t}{\ln(e^t-1)}-\frac{e^t-1}{t},
\end{align*}
and then we can write
\begin{align*}
    g(t)=\frac{t\ln(e^{t/p}-1)\ln(e^t-1)}{pe^{t/p+t}}(-h(t)+h(t/p)).
\end{align*}
Since $t\in(0,\ln(2))$ and $p>1$, we know $\frac{t\ln(e^{t/p}-1)\ln(e^t-1)}{pe^{t/p+t}}>0$. Therefore, it suffices to show $h(t)$ is decreasing in $(0,\ln(2))$. Note that for $t\in(0,\ln(2))$, we have $\ln(e^t-1)<0$ and $|\ln(e^t-1)|$ is decreasing while $e^t$ is increasing. Thus $\frac{e^t}{\ln(e^t-1)}$ is decreasing on $(0,\ln(2))$.  On the other hand,
\begin{align*}
    \frac{d}{dt}\frac{e^t-1}{t} = \frac{e^t t-e^t+1}{t^2}.
\end{align*}
Let $w(t)=e^t t-e^t+1$. We have $w'(t)=e^t+e^tt-e^t>0$. Thus we have $w(t)\geq w(0)=0$. Therefore, $\frac{e^t-1}{t}$ is increasing in $t$. Thus, indeed, $h(t)$ is decreasing. 

We conclude that $\dot{H}_e(t/p)/\dot{H}_e(t)$ is increasing in $t\in (0,\ln(2))$.
\end{proof}

Now we can prove Lemma \ref{lem:H_relax}.

\begin{proof}[Proof of Lemma \ref{lem:H_relax}]
Note that by the upper bound in Lemma \ref{lem:H_bound}, we have
\begin{align*}
   \int_{-\infty}^{-t} \dot{H}_e(e^{x}) dx \leq \int_{-\infty}^{-t} 2e^{x/2} dx=4 e^{-t/2}\leq 4d^{-1/4},
\end{align*}
where the last inequality follows from the assumption $t\geq \frac{1}{2}\ln d$.
If $-t+d> \ln\ln(2)$, then, by the lower bound in Lemma \ref{lem:H_bound},
\begin{align*}
    \int_{-\infty}^{-t+d} \dot{H}_e(e^x) dx\geq \int_{-\infty}^{\ln\ln(2)} \dot{H}_e(e^x) dx\geq\int_{-\infty}^{\ln\ln(2)} e^{x}e^{-e^x} dx= \frac{1}{2},
\end{align*}
where we use the fact that $\int e^xe^{-e^x}dx=-e^{-e^x}$.
 In this case we have 
\begin{align*}
     \frac{\int_{-\infty}^{-t} \dot{H}_e(e^x) dx}{\int_{-\infty}^{-t+d} \dot{H}_e(e^x) dx} \leq 8d^{-1/4}.
\end{align*}
If $-t+d\leq \ln\ln(2)$, then for any $x\leq -t+d$, $e^x\leq \ln(2)$. Thus by Lemma \ref{lem:HtpHt}, for any $x\leq -t+d$,  $\dot{H}_e(e^{x-d})/\dot{H}_e(e^{x})\leq \dot{H}_e(\ln(2)/e^d)/\dot{H}_e(\ln(2))$. Then we have
\begin{align*}
    \int_{-\infty}^{-t+d} \dot{H}_e(e^x) dx = \int_{-\infty}^{-t+d} \dot{H}_e(e^{x-d})\frac{\dot{H}_e(e^{x})}{\dot{H}_e(e^{x}/e^d)} dx\geq \frac{\dot{H}_e(\ln(2))}{\dot{H}_e(\ln2/ e^{d})}\int_{-\infty}^{-t} \dot{H}_e(e^{x}) dx,
\end{align*}
which implies, using the bounds of Lemma~\ref{lem:H_bound}, that
\begin{align*}
     \frac{\int_{-\infty}^{-t} \dot{H}_e(e^x) dx}{\int_{-\infty}^{-t+d} \dot{H}_e(e^x) dx} \leq \frac{\dot{H}_e(\ln(2)/e^{d})}{\dot{H}_e(\ln(2))}\leq \frac{2\sqrt{\ln(2) /e^{d}}}{\ln(2) e^{-\ln(2)}}<5e^{-d/2}.
\end{align*}
We conclude that, in both cases,
\begin{align*}
   \frac{\int_{-\infty}^{-t} \dot{H}(e^x) dx}{\int_{-\infty}^{-t+d} \dot{H}(e^x) dx}=\frac{\int_{-\infty}^{-t} \dot{H}_e(e^x) dx}{\int_{-\infty}^{-t+d} \dot{H}_e(e^x) dx} \leq \max\{8d^{-1/4},5e^{-d/2}\}.
\end{align*}
\end{proof}

\subsection{Proof of Lemma \ref{lem:I_relax}}
Lemma \ref{lem:I_relax} is a property of the function $\dot{I}(\cdot)$. To prove that, we first need the following lemmas.
 \begin{lemma}\label{lem:I_bound}
    For all $t> 0$,
    \begin{align*}
        \dot{I}(t)\leq 4e^{-t/2}.
    \end{align*}
    \end{lemma}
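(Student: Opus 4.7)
The plan is to reduce the inequality to a clean comparison between a polynomial and $\sinh$, then settle it by convexity. Since $e^{-t/2}(e^t-1) = e^{t/2} - e^{-t/2} = 2\sinh(t/2)$, clearing denominators gives the equivalent statement
\[
t^2 \;\leq\; 4\bigl(e^{t/2} - e^{-t/2}\bigr) \;=\; 8\sinh(t/2), \qquad t>0.
\]
This reduction is clean because it removes the singularity at $t=0$ (both sides now start at $0$) and exposes a standard calculus problem.

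I would then define $F(t) = 4(e^{t/2}-e^{-t/2}) - t^2$ and show $F(t)\ge 0$ on $[0,\infty)$ by ruling out any interior dip. Differentiating:
\[
F(0)=0,\qquad F'(t)=4\cosh(t/2)-2t,\qquad F''(t)=2\sinh(t/2)-2,\qquad F'''(t)=\cosh(t/2)>0.
\]
Since $F'''>0$, $F'$ is strictly convex, and its unique minimum occurs where $F''=0$, namely at $t^*=2\sinh^{-1}(1)=2\ln(1+\sqrt2)$.

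The remaining task is to check that this minimum of $F'$ is still positive, which pins down the entire argument. Using the elementary identity
\[
\cosh\bigl(\ln(1+\sqrt2)\bigr) \;=\; \tfrac{1}{2}\bigl((1+\sqrt2)+(\sqrt2-1)\bigr) \;=\; \sqrt2,
\]
I would evaluate
\[
F'(t^*) \;=\; 4\sqrt2 \;-\; 4\ln(1+\sqrt2) \;\approx\; 5.657 - 3.526 \;>\; 0.
\]
Hence $F'>0$ on $[0,\infty)$, so $F$ is strictly increasing, and $F(t)\ge F(0)=0$ for all $t\ge 0$. Combined with the opening reduction, this yields $\dot{I}(t)\le 4e^{-t/2}$ as claimed.

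There is no serious obstacle here: the only subtlety is choosing the right algebraic form for the reduction (bundling $e^t-1$ with $e^{-t/2}$ into $2\sinh(t/2)$ rather than trying to bound $t^2 e^{-t/(1+\text{stuff})}$ directly). Once that is done, the inequality is a textbook application of the fact that a smooth function which starts at $0$ with a strictly positive derivative everywhere must stay nonnegative, verified by one explicit evaluation of $\cosh$ at $\ln(1+\sqrt2)$.
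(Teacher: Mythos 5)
Your proof is correct. The reduction to $8\sinh(t/2)-t^2\ge 0$ is sound (both sides vanish at $0$), the derivative computations are right, and the identity $\cosh(\ln(1+\sqrt2))=\tfrac12\bigl((1+\sqrt2)+(\sqrt2-1)\bigr)=\sqrt2$ combined with $4\sqrt2>4\ln(1+\sqrt2)$ does show $F'>0$ everywhere, hence $F$ nonnegative.

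You take a genuinely different, slightly longer route than the paper. The paper clears denominators the other way, working with $f(t)=4e^t-t^2e^{t/2}-4$ (which is exactly $e^{t/2}F(t)$ in your notation) and dispatches $f'(t)=e^{t/2}\bigl(4e^{t/2}-2t-t^2/2\bigr)\ge 0$ in a single line via the truncated Taylor bound $e^{t/2}\ge 1+t/2+t^2/8$. That is: one derivative, one polynomial bound, done. Your argument instead drives to the third derivative, uses convexity of $F'$ to locate its unique minimizer $t^*=2\ln(1+\sqrt2)$, and checks positivity there by an explicit $\cosh$ evaluation. What the paper's route buys is brevity and the avoidance of any special-value or numerical check; what yours buys is a more mechanical ``second-derivative test'' structure that doesn't require you to have the right polynomial lower bound for $e^{t/2}$ in mind, and it pinpoints where the inequality is tightest. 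Either is a fine proof of this lemma.
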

    \begin{proof}
    Recall that $\dot{I}(t)=\frac{t^2}{e^t-1}$. Therefore we have, for $t>0$,
    \begin{align*}
        \dot{I}(t)\leq 4e^{-t/2}
        \iff  4e^t -t^2 e^{t/2}- 4 \geq 0.
    \end{align*}
    Let $f(t)=4e^t -t^2 e^{t/2}- 4$. Then we have
    \begin{align*}
        f'(t)&=4e^t-e^{t/2}(t^2/2+2t)=e^{t/2}(4e^{t/2}-2t-t^2/2)\\&\geq e^{t/2}(4+2t+t^2/2-2t-t^2/2)\geq 0,
    \end{align*}
    since $e^x \geq 1+x+x^2/2$ for $x\geq 0$, which implies $e^{t/2}\geq 1+t/2+t^2/8$. Thus $f(t)$ is increasing. In addition, $f(0)=0$. Thus we know $f(t)\geq 0$ for all $t\geq 0$ and therefore $\dot{I}(t)\leq 4e^{-t/2}$ holds.
    \end{proof}
\begin{lemma}\label{lem:I_sum}
\begin{align*}
    \int_0^\infty \sum_{k\geq 0} \sup\{\dot{I}(e^{x+w})\mid w\in [k,k+1)\}dx \leq 119.
\end{align*}
\end{lemma}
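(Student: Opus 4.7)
The plan is to reduce the double sum-integral to a single integral of an explicit decaying function, then bound that integral crudely. The stated constant $119$ is far from tight, which suggests even a generous bound will suffice.

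First, I would interchange the sum and integral (Tonelli, since everything is nonnegative) and in the $k$th summand substitute $y = x + k$. Setting
$$g(y) \;\bydef\; \sup\{\dot{I}(e^{y+v}) : v \in [0,1)\},$$
this change of variable gives $\int_0^\infty \sup\{\dot{I}(e^{x+w}) : w \in [k, k+1)\} dx = \int_k^\infty g(y) \, dy$, so
$$\int_0^\infty \sum_{k \geq 0} \sup\{\dot{I}(e^{x+w}) : w \in [k, k+1)\} dx \;=\; \sum_{k=0}^\infty \int_k^\infty g(y) \, dy \;=\; \int_0^\infty g(y)(\lfloor y \rfloor + 1) \, dy.$$

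Next, I would bound $g(y)$ pointwise. By Lemma~\ref{lem:I_bound}, $\dot{I}(t) \leq 4 e^{-t/2}$ for all $t > 0$, so $\dot{I}(e^{y+v}) \leq 4 e^{-e^{y+v}/2}$; the right-hand side is strictly decreasing in $v$, so taking suprema over $v \in [0, 1)$ yields $g(y) \leq 4 e^{-e^y/2}$ for all $y \geq 0$. Combining this with the crude estimate $\lfloor y \rfloor + 1 \leq y + 1$ reduces the task to bounding
$$\int_0^\infty 4 e^{-e^y/2}(y+1) \, dy.$$

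Finally, I would evaluate this by substituting $u = e^y/2$ (so $dy = du/u$, and $y = \ln(2u)$), giving
$$4 \int_{1/2}^\infty \frac{e^{-u}(\ln(2u) + 1)}{u} \, du.$$
Using the elementary inequality $\ln x \leq x - 1$ for $x > 0$, we have $\ln(2u) + 1 \leq 2u$ for all $u > 0$, so the integral is at most $8 \int_{1/2}^\infty e^{-u} \, du = 8 e^{-1/2} < 5 \leq 119$.

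There is no substantial obstacle here: the only step requiring care is the passage from $g(y) \leq 4 e^{-e^y/2}$, which relies on observing that the upper bound from Lemma~\ref{lem:I_bound} is monotone in $v$, so its supremum over $v \in [0,1)$ is attained at the left endpoint even though $\dot{I}$ itself is not monotone. Everything else is routine calculus, and the wide slack between our estimate and the stated bound of $119$ shows we have substantial room to spare.
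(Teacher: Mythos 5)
Your proof is correct, and it takes a genuinely different route from the paper's. The paper proceeds by bounding $\dot{I}(e^{x+w})$ via Lemma~\ref{lem:I_bound}, observing the supremum is attained at $w=k$, and then comparing the resulting sum to a shifted \emph{double} integral $\int_0^\infty\int_0^\infty \exp(-e^{x+y-1}/2)\,dy\,dx$; it then linearizes the inner double exponential via $e^{z}\geq z$ to arrive at $4(2e)^2\approx 118.2<119$. You instead collapse the sum-of-integrals into a \emph{single} weighted integral $\int_0^\infty g(y)(\lfloor y\rfloor+1)\,dy$ by the substitution $y=x+k$ and Tonelli, bound the weight by $y+1$, and finish with the change of variable $u=e^y/2$ and the elementary estimate $\ln(2u)+1\leq 2u$, yielding $8e^{-1/2}<5$. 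Your rearrangement is cleaner, avoids the double integral entirely, and delivers a bound roughly $24$ times tighter; the only step meriting care is exactly the one you flagged — that the supremum of the \emph{upper bound} $4e^{-e^{y+v}/2}$ over $v\in[0,1)$ sits at $v=0$ by monotonicity of the bound (not of $\dot{I}$ itself), which you handle correctly. Both proofs are valid; yours is the more efficient.
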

\begin{proof}
Using the bound derived in Lemma \ref{lem:I_bound}, we have
\begin{align*}
   &\int_0^\infty \sum_{k\geq 0} \sup\{\dot{I}(e^{x+w})\mid w\in [k,k+1)\}dx\\
   &\leq \int_0^\infty \sum_{k\geq 0} \sup\{4\exp(-e^{x+w}/2)\mid w\in [k,k+1)\}dx\\
   &=  4\int_0^\infty \sum_{k\geq 0} \exp(-e^{x+k}/2)dx\\
   &\leq  4\int_0^\infty \int_0^\infty \exp(-e^{x+y-1}/2) dydx\\
   &\leq  4\int_0^\infty \int_0^\infty \exp(-(x+y)/(2e)) dydx\\
   &=  4 \left( \int_0^\infty e^{-x/(2e)} dx\right)^2\\
   &= 4\cdot (2e)^2<119.
\end{align*}
\end{proof}

\begin{lemma}\label{lem:h_underline}
For any interval $A$, define $\underline{h}_e(A)\bydef  \inf\{\dot{H}_e(e^x)\mid x\in A\}$. 
Then for any $a<b$, $\underline{h}_e([a,b])=\min\{\dot{H}_e(e^a),\dot{H}_e(e^b)\}$.
\end{lemma}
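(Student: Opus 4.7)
My plan is to reduce the claim to the unimodality of $\dot{H}_e(e^x)$ as a function of $x$. Recall that $\dot{H}_e(t) = te^{-t} - (1-e^{-t})\ln(1-e^{-t})$ is precisely the entropy (in nats) of a Bernoulli random variable with parameter $p = e^{-t}$, i.e.\ $\dot{H}_e(t) = h(e^{-t})$, where $h(p) = -p\ln p - (1-p)\ln(1-p)$ is the standard binary entropy. This is a well-known unimodal function: $h$ is strictly increasing on $[0, 1/2]$, strictly decreasing on $[1/2, 1]$, and maximized at $p = 1/2$.

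Now set $f(x) = \dot{H}_e(e^x) = h(e^{-e^x})$. The map $x \mapsto e^{-e^x}$ is a strictly decreasing bijection from $\mathbb{R}$ onto $(0,1)$, and it crosses the value $1/2$ exactly when $e^x = \ln 2$, i.e.\ at $x^* = \ln\ln 2$. Composing with the unimodal $h$, the function $f$ is therefore strictly increasing on $(-\infty, x^*]$ and strictly decreasing on $[x^*, \infty)$.

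Given this, the lemma is immediate. For any $a < b$, consider the three cases based on the location of $x^*$: if $b \leq x^*$ then $f$ is increasing on $[a,b]$ and the infimum is $f(a) = \dot{H}_e(e^a)$; if $a \geq x^*$ then $f$ is decreasing on $[a,b]$ and the infimum is $f(b) = \dot{H}_e(e^b)$; and if $a < x^* < b$ then $f$ attains its maximum at $x^*$ and is monotone on either side, so the infimum on $[a,b]$ is again $\min(f(a), f(b))$. In all cases, $\underline{h}_e([a,b]) = \min(\dot{H}_e(e^a), \dot{H}_e(e^b))$, as required.

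There is essentially no obstacle here; the only thing to verify carefully is the identification $\dot{H}_e(t) = h(e^{-t})$, after which unimodality of $h$ and monotonicity of $x \mapsto e^{-e^x}$ do all the work. No new estimates or computations beyond recognizing the composition are needed.
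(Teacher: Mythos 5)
Your proof is correct, and it reaches the same key fact the paper uses: $\dot{H}_e(e^x)$ is unimodal in $x$, so the infimum over an interval is attained at an endpoint. The paper establishes unimodality by computing $\dot{H}_e'(t) = -e^{-t}\ln(e^t-1)$ (done earlier in Lemma~\ref{lem:HtpHt}) and noting its unique zero at $t=\ln 2$; you instead recognize $\dot{H}_e(t)=h(e^{-t})$ with $h$ the binary entropy and compose the known unimodality of $h$ with the monotone decreasing map $x\mapsto e^{-e^x}$. Both routes are equally valid; yours avoids re-deriving the derivative but is otherwise the same argument.
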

\begin{proof}
Note that, from the proof of Lemma \ref{lem:HtpHt}, we know 
    $\dot{H}_e'(t)=-e^{-t}\log(e^{t}-1)$, whose only zero is $\log 2$. From this we know when $t<\log 2$, $\dot{H}_e(t)$ increases and when  $t>\log 2$, it decreases. Thus given an interval $[e^{a},e^b]$, the minimum is always obtained at one of the end points.
\end{proof}

\begin{proof}[Proof of Lemma \ref{lem:I_relax}]
For any $k\geq 0$, define 
\begin{align*}
B_k &\bydef \{c_i\in \mathcal{C}\mid p_i\in [e^{-a+k},e^{-a+k+1})\}\\  B^* &\bydef \{c_i\in \mathcal{C}\mid p_i\in (e^{-a-\Delta},e^{-a})\}. 
\intertext{We see $\{B_k\}_{k\geq 0}$ together with $B^*$ form a partition of $\mathcal{C}\setminus\mathcal{C}^*$. Define $w(B_k)$ as}
w(B_k) &\bydef \sum_{c_i\in B_k}\Pr(\phi(\mathbf{Y}_{i-1,e^{a-k}})=0).
\end{align*}
Fix some $k\geq 0$. By the entropy assumption, when $\lambda = e^{a-k}$, we have
\begin{align}
\tilde{H}\geq H(\mathbf{Y}_{[\lambda]}) \geq  
\sum_{c_i\in B_k}\dot{H}_e(p_i e^{a-k})\Pr(\phi(\mathbf{Y}_{i-1,e^{a-k}})=0)/\ln(2)\geq \underline{h}_e([0,1])w(B_k)/\ln(2),\label{eq:hwbk}
\end{align}
since for all cells $c_i\in B_k$, $p_ie^{a-k}\in[e^0,e^1]$ by definition. This implies $w(B_k)\leq \frac{\tilde{H}\ln(2)}{\underline{h}_e([0,1])}$ for any $k\geq 0$.
By Proposition \ref{prop:phi_monotone}, we have, for any $k\geq 0$,
\begin{align*}
    \mathbf{I}(B_k \to [a,b])&=\int_a^b \sum_{c_j\in B_k} \dot{I}(p_je^x)\Pr(\phi(\mathbf{Y}_{j-1,e^x})=0)dx\\
    &\leq \int_a^b \sum_{c_j\in B_k} \dot{I}(p_je^x)\Pr(\phi(\mathbf{Y}_{j-1,e^{a-k}})=0)dx\\
    &\leq \int_a^b \sup\left\{\dot{I}(p_je^x) \;\middle|\; c_j\in B_k\right\} w(B_k)dx\\
    &\leq \int_a^b \sup\left\{\dot{I}(e^{-a+x+w}) \;\middle|\; w\in[k,k+1)\right\} w(B_k)dx,
\end{align*}
since for any cell $c_j\in B_k$, $p_j\in [e^{-a+k},e^{-a+k+1})$ by definition.
Then we have
\begin{align*}
    \mathbf{I}\left(\bigcup_{k\geq 0}B_k \to [a,b]\right) &=\sum_{k=0}^\infty \mathbf{I}(B_k \to [a,b])\\
    &\leq \sum_{k=0}^\infty \int_a^b \sup\left\{\dot{I}(e^{-a+x+w}) \;\middle|\; w\in[k,k+1)\right\} w(B_k)dx\\
    &= \int_0^\infty \sum_{k=0}^\infty  \sup\left\{\dot{I}(e^{-x+w}) \;\middle|\; w\in[k,k+1)\right\} w(B_k)dx.
\end{align*}
By Lemma \ref{lem:I_sum} and Inequality \ref{eq:hwbk}, we have
\begin{align*}
    \mathbf{I}(\cup_{k\geq 0}B_k \to [a,b]) & \leq \frac{\tilde{H}\ln(2)}{\underline{h}_e([0,1])} 119.
\end{align*}
By Lemma \ref{lem:h_underline}, we know $\underline{h}_e([0,1])>\min\{0.65,0.24\}=0.24$. We then have
\begin{align*}
\mathbf{I}(\cup_{k\geq 0}B_k \to [a,b]) &\leq 344\tilde{H}.
\end{align*}
On the other hand, when $\lambda =e^a$, we have,
by the assumption that the entropy never exceeds $\tilde{H}$, 
\begin{align*}
    \tilde{H} \geq H(\mathbf{Y}_{[\lambda]}) &\geq  \sum_{c_i\in B^*} \dot{H}_e(p_ie^a)\Pr(\phi(\mathbf{Y}_{i-1,e^a})=0)/\ln(2) \\
    &\geq \underline{h}_e([-\Delta,0]) \sum_{c_i\in B^*} \Pr(\phi(\mathbf{Y}_{i-1,e^a})=0)/\ln(2),
\end{align*}
since for any cell $c_i\in B^*$, $p_ie^a \in [-\Delta,0]$ by definition. Then we know that \begin{align*}
\Pr(\phi(\mathbf{Y}_{i-1,e^a})=0) &\leq \frac{\tilde{H}\ln(2)}{\underline{h}_e([-\Delta,0]) }. 
\end{align*}
We can calculate that, again, by Proposition \ref{prop:phi_monotone},
\begin{align*}
   \mathbf{I}(B^*\to [a,b]) &= \int_a^b \sum_{c_i\in B^*} \dot{I}(p_ie^x)\Pr(\phi(\mathbf{Y}_{i-1,e^x})=0)dx\\
    &\leq \int_a^b \sum_{c_i\in B^*} \dot{I}(p_ie^x)\Pr(\phi(\mathbf{Y}_{i-1,e^a})=0)dx\\
    &\leq \int_{-\infty}^\infty \sum_{c_i\in B^*} \dot{I}(p_ie^x)\Pr(\phi(\mathbf{Y}_{i-1,e^a})=0)dx.\\
    \intertext{Note that by Lemma \ref{lem:integration},
    for any $p_i>0$, 
    $\int_{-\infty}^\infty \dot{I}(p_ie^x)dx=\int_{-\infty}^\infty \dot{I}(e^x)dx=I_0$.  Continuing,}
    &= I_0\cdot \sum_{c_i\in B^*} \Pr(\phi(\mathbf{Y}_{i-1,e^a})=0)dx\\
    &\leq I_0 \frac{\tilde{H}\ln(2)}{\underline{h}_e([-\Delta,0])}.
\end{align*}
By Lemmas~\ref{lem:H_bound} and \ref{lem:h_underline}, $\underline{h}_e([-\Delta,0])=\min\{\dot{H}_e(e^{-\Delta}),\dot{H}_e(e^0)\}\geq \min\{e^{-\Delta}e^{-e^{-\Delta}}, 0.65\}\geq e^{-\Delta-1}$  since $\Delta >0$. Recall that by Lemma \ref{lem:H0I0}, $I_0=\pi^2/6$. 
Thus, we have $\mathbf{I}(B^*\to [a,b])\leq \tilde{H} 4 e^{\Delta}$.
Finally, we conclude that
\begin{align*}
    \mathbf{I}(\mathcal{C}\setminus\mathcal{C}^*\to [a,b])&=  \mathbf{I}(\cup_{k\geq 0}B_k \to [a,b])+\mathbf{I}(B^*\to [a,b])\leq (344+4e^\Delta)\tilde{H}.
\end{align*}
\end{proof}

\end{document}